\newcommand{\jbox}{\hspace*{\fill} $\Box$}
\newcommand{\Z}{\ensuremath{\mathbb Z}}
\newcommand{\N}{\ensuremath{\mathbb N}}
\newcommand{\aaa}{\mbox{\sf a}}
\newcommand{\sftwo}{\ensuremath{\mathsf{2}}}
\newcommand{\sfone}{\ensuremath{\mathsf{1}}}
\newcommand{\sfzero}{\ensuremath{\mathsf{0}}}
\newcommand{\crochet}[1]{\langle #1 \rangle}
\newtheorem{theorem}{Theorem}[section]
\newtheorem{lemma}[theorem]{Lemma}
\newtheorem{proposition}[theorem]{Proposition}
\newtheorem{definition}[theorem]{Definition}
\newenvironment{proof}{{\sl Proof.}}{\jbox}
\newenvironment{listedense}{\begin{list}{$\bullet$}{\setlength{\parsep}{0pt}
	\setlength{\parskip}{0pt}
	\setlength{\topsep} {0pt} \setlength{\itemsep} {0pt}
	\setlength{\labelsep}{1em}}}{\end{list}}
\newcounter{zahl}
\newcounter{latinzahl}
\newenvironment{latindense}{\begin{list}{$\roman{latinzahl}.$}{\usecounter{latinzahl}
	\setlength{\parsep}{0pt}
	\setlength{\parskip}{0pt}
	\setlength{\topsep} {0pt} \setlength{\itemsep} {0pt}
	\setlength{\labelsep}{1em}}}{\end{list}}
\newcounter{buchstabe}
\newcommand{\aut}[1]{{\sc #1}}
\newcommand{\ico}[3]{{\em Information and Control\ }{\bf #1} (#2), pp. #3.}
\newcommand{\ic}[3]{{\em Information and Computation\ }{\bf #1} (#2), pp. #3.}
\newcommand{\ijac}[3]{{\em International Journal of Algebra  and Computation\ }{\bf #1} (#2), pp. #3.}
\newcommand{\jcss}[3]{{\em J. Computer and Systems Science\ }{\bf #1} (#2), pp. #3.}
\newcommand{\jpaa}[3]{{\em J. of Pure and Applied Algebra\ }{\bf #1} (#2), pp. #3.}
\newcommand{\lmics}[4]{{\em Logical Methods in Computer Science\ }{\bf #1} #2 (#3), pp. #4.}
\newcommand{\picalp}[4]{{\em Proc. of the #1th International Colloquium on Automata, Languages and Programming}, {\em Lecture Notes in Comp. Sci.} \textbf{#2}, Springer-Verlag (#3), pp. #4.}
\newcommand{\plics}[3]{{\em Proc. of the #1th IEEE Symp. on Logics in Computer Science} (#2), pp. #3.}
\newcommand{\rairo}[3]{{\em Revue d'Automatique, d'Informatique et de Recherche Op\'erationnelle\ }{\bf #1} (#2), pp. #3.}
\newcommand{\tams}[3]{{\em Trans. Amer. Math. Soc. }{\bf #1} (#2), pp. #3.}
\newcommand{\tcs}[3]{{\em Theoretical Computer Science\ }{\bf #1} (#2), pp. #3.}
\newcommand{\bicrochet}[1]{\langle \! \langle #1 \rangle \! \rangle}
\newcommand{\bitruc}[1]{  [[ #1 ]] }
\newcommand{\MVM}{\ensuremath{{}^{\underline{m}}}}
\newcommand{\hatzero}{\ensuremath{\hat{\Zero}}}
\newcommand{\hatone}{\ensuremath{\hat{\One}}}
\newcommand{\Brvmu}{\ensuremath{\breve{\mu}}}
\newcommand{\Brvxi}{\ensuremath{\breve{\xi}}}
\newcommand{\Brvphi}{\ensuremath{\breve{\varphi}}}
\newcommand{\Brvchi}{\ensuremath{\breve{\chi}}}
\newcommand{\Brvpsi}{\ensuremath{\breve{\psi}}}
\newcommand{\calB}{\ensuremath{\mathcal{B}}}
\newcommand{\calD}{\ensuremath{\mathcal{D}}}
\newcommand{\calG}{\ensuremath{\mathcal{G}}}
\newcommand{\calH}{\ensuremath{\mathcal{H}}}
\newcommand{\calI}{\ensuremath{\mathcal{I}}}
\newcommand{\calJ}{\ensuremath{\mathcal{J}}}
\newcommand{\calK}{\ensuremath{\mathcal{K}}}
\newcommand{\calL}{\ensuremath{\mathcal{L}}}
\newcommand{\calM}{\ensuremath{\mathcal{M}}}
\newcommand{\calP}{\ensuremath{\mathcal{P}}}
\newcommand{\calR}{\ensuremath{\mathcal{R}}}
\newcommand{\calS}{\ensuremath{\mathcal{S}}}
\newcommand{\calT}{\ensuremath{\mathcal{T}}}
\newcommand{\calU}{\ensuremath{\mathcal{U}}}
\newcommand{\bfW}{\ensuremath{\mathbf{W}}}
\newcommand{\Bfe}{\ensuremath{\mathbf{e}}}
\newcommand{\Bfo}{\ensuremath{\mathbf{o}}}
\newcommand{\sfW}{\ensuremath{\mathsf{W}}}
\newcommand{\BbbC}{\ensuremath{\mathbb{C}}} 
\newcommand{\BbbF}{\ensuremath{\mathbb{F}}} 
\newcommand{\BbbH}{\ensuremath{\mathbb{H}}} 
\newcommand{\BbbM}{\ensuremath{\mathbb{M}}}
\newcommand{\BbbV}{\ensuremath{\mathbb{V}}}
\newcommand{\Gin}{\ensuremath{G_{\text{in}}}}  
\newcommand{\Gout}{\ensuremath{G_{\text{out}}}}
\newcommand{\calHin}{\ensuremath{\mathcal{H}_{\text{in}}}}  
\newcommand{\One}{\ensuremath{\mathtt{1}}}
\newcommand{\Zero}{\ensuremath{\mathtt{0}}}
\newcommand{\Panc}{\ensuremath{\prec}} 
\newcommand{\Tree}{\ensuremath{\text{Tree}}}
\newcommand{\Blockprod}{\ensuremath{\; {\scriptstyle{\square}} \; }}
\newcommand{\ODalg}{\ensuremath{\mathcal{O}\!\!\!\mathcal{D}}}
\newcommand{\Ltree}{\ensuremath{L^{\mathsf{tree}}}}
\newcommand{\Lmin}{\ensuremath{L^{\mathsf{min}}}}
\newcommand{\Sstar}{\ensuremath{\mathbb{T}_A}}
 \newcommand{\Snat}{\ensuremath{\mathbb{H}_A}} 
  \newcommand{\Sfree}{\ensuremath{\mathbb{F}_A}}
       \newcommand{\Svert}{\ensuremath{\mathbb{V}_A}} 
   \newcommand{\Sboxvert}{\ensuremath{\mathbb{V}_A}}
\title{Proving that a Tree Language is not First-Order Definable}
\author{Martin Beaudry\thanks{D\'epartement 
d'informatique, Universit\'e de Sherbrooke, 
Sherbrooke (Qc) Canada, J1K 2R1, martin.beaudry@usherbrooke.ca.
Work supported by NSERC of Canada.}}
 \date{}           
\begin{document}

\maketitle


\newcommand{\Francois}{Fran\c{c}ois}



 \begin{abstract}
 
We explore from an algebraic viewpoint the properties of the tree languages
definable with a first-order formula involving the ancestor  predicate,
using the description of these languages
as those recognized by
iterated block products of forest algebras
defined from threshold-$\tau$, period-$\pi$ counter monoids.
Ehrenfeucht-Fra\"{\i}ss\'e games, i.e.
proofs of non-definability,
are infinite sequences of sets of forests,
one for each level of the hierarchy 
of quantification levels that defines the corresponding
variety of languages.
A proof is recursive when the forests at a given level
are built  by inserting forests from the previous level  at the ports of
a suitable set of multicontexts.
We show  that a recursive proof exists for the
syntactic algebra  of every non-definable language.
We also investigate certain types of uniform recursive proofs.
For this purpose, we define from a forest algebra 
an ``algebra of mappings'' and an ``extended algebra'',
which we also use to redefine the notion of aperiodicity
in a way that generalizes the existing ones. 
 \end{abstract}

\normalsize

{\bf Keywords: } Tree languages, forest languages, monoids, finite algebras, first-order logic.

\setlength {\baselineskip} {16pt}

\section{Introduction}

Words and trees are used almost universally in Computer Science, and
logical formalisms are among the most convenient tools for specifying
these objects, or sets thereof.
Automata constitute another class of tools, procedural in nature, 
widely used to define languages; 
underlying this formalism is a rich algebraic theory, through which
further tools from other areas of Mathematics can be used to better understand
the properties of word and tree languages.
Moreover, 
the most significant classes of these languages happen to have
descriptions in several formalisms. For example,
the regular word languages are exactly those that are
recognized by finite automata and monoids, and those that 
are definable by monadic second-order formulas with a
unary predicate for each letter and a ``left-of'' binary positional
predicate.
Similarly the regular tree languages are simultaneously
those that are definable 
by monadic second-order formulas with two
positional predicates (``ancestor-of'' and ``next-sibling-of'')
and those that are recognized by finite tree automata,
as well as various sorts of  algebraic structures
(e.g. finite term algebras, finite forest algebras, finitary preclones). 
\newline
In the same vein, the word languages definable with
first-order logical formulas with the ``left-of'' predicate
have several algebraic and combinatorial descriptions,
see \cite{pi97,sc65,krrh65,rhti89,th82}.
In particular, these languages are
precisely those whose syntactic monoid is
aperiodic; 
thanks to this property, 
whether a regular word language is
first-order definable
can be determined with 
a straightforward
algorithm.
In the world of tree languages, 
however, none of the definitions for
aperiodicity of the
syntactic algebra tried so far has managed
to characterize precisely the
first-order definable languages
\cite{he91,bostwa09},
and the techniques invented to show that
certain important subclasses of these languages
are indeed decidable \cite{bese09,bosest08,bose08,plse11,plse15}\
did not seem to extend to the whole class.
%
%
\newline
Forest algebras combine two monoids (horizontal
and vertical) in a way which makes it easy  for  researchers to 
apply techniques from the theory of monoids and word languages.
An encouraging harvest of results has already been obtained with this tool \cite{bowa08,bosest08,bostwa09}.
In this paper, we look at a counterpart, in the world of trees and forests, to the description of the
aperiodic monoids as
the variety of monoids generated by iterated block products of semilattices \cite{rhti89}.
This is a
description of the first-order definable languages,
developed in terms of  the variety of
finitary preclones \cite{eswe10}\ generated by iterated
block products of preclones that count occurrences
of node labels, regardless of the actual tree
structure. 
Intuitively, a block product
works as the 
combination of two tree automata
where at every node $x$, the second automaton,
besides the label of $x$, also reads
the current state reached by the first automaton
after reading the subtree rooted 
at $x$ (``below'' $x$)
and the outcome of the
processing, by the first automaton,
of the context of $x$ within the tree (``above'' $x$).
\newline
The description of first-order definable languages
developed in \cite{hath87,mora03}\ suggests that
the $\equiv_{\tau,\pi}$ threshold-$\tau$, period-$\pi$ numerical congruences
are a fundamental feature in the combinatorics of first-order definable
languages. 
Consistently with this we use the same kind of counting
and the corresponding quotient, counter monoids
$\N_{\tau,\pi} = \N / {\equiv_{\tau,\pi}}$
(in these notations, 
the Boolean OR  $U_1$ and the cyclic group $\Z_\pi$
are respectively $\N_{1,1}$ and $\N_{0,\pi}$).
In our formalism, 
we denote by $\ODalg_M$ the  one-dimensional algebra where $M$ is the horizontal monoid,
by
$\bicrochet{\N_{\tau,\pi}} $ the variety of forest algebras generated by
$\ODalg_{\N_{\tau,\pi}}$,
by
$ \mathbf{**}^n \bicrochet{\N_{\tau,\pi}} $ the variety generated by iterated block 
products of $n$ algebras from $\ODalg_{\N_{\tau,\pi}}$,
and by $ \mathbf{**} \bicrochet{\N_{\tau,\pi}} $ the closure of these varieties
over joint.
Let
$\mathsf{FO}[\Panc]  $  and $\mathsf{FOMod_\pi}[\Panc]$ denote
both the class of forest languages definable by first-order
formulas built with  the
``ancestor'' positional predicate and the usual
quantifiers only (for $\mathsf{FO}[\Panc]  $) or the same with the
$\exists^\pi_i$, $0 \le i < \pi$, modular quantifiers,
and the varieties  generated by their syntactic algebras.
Using the formalism of finitary preclones they introduced in
\cite{eswe05},  by Es\'{\i}k and Weil 
have established in \cite{eswe10}\ the correspondences
$\mathsf{FO}[\Panc]  =   \mathbf{**}\bicrochet{\N_{1,1}} $
and
$\mathsf{FOMod_\pi}[\Panc]  =  \mathbf{**} \bicrochet{\N_{1,\pi}} $.
\newline
We explore two ways of defining from $\calG = (G,W)$
another algebra,  where multicontexts are the underlying objects.
The algebra of mappings $\calG_\#$, and the
``multivertical monoid'' of \calG\  derived from it 
enable us to
define  notions of pumping and aperiodicity that generalize two
of the known necessary conditions for membership in
$\mathsf{FO}[\Panc]$, namely aperiodicity of the vertical monoid and the
``absence of vertical confusion on uniform multicontexts''
defined in \cite{bostwa09}.
The extended algebra $\calG_\% = (G_\%,W_\%)$, where
$G_\%$
is the powerset of $G$, makes explicit some properties of \calG\ 
that are not directly visible in  \calG\ or  $\calG_\#$.
An example is described in Section \ref{sec:potthoff}:
 this is a language whose syntactic algebra
has aperiodic vertical and multivertical monoids,
but where $W_\%$ is divided by the group $\Z_2$.
The language is defined with a $\mathsf{FOMod_2}[\Panc]$ formula 
that, among other things, counts the parity of the length of 
certain node-to-leaf paths;
a pair of elements of  $W_\%$ does precisely this counting.
\newline
An algebra \calG\ lies outside of the variety
$ \mathbf{**} \bicrochet{\N_{\tau,\pi}} $
if, and only if  there exists 
an infinite sequence of sets $\calS^{(n)}$, one for each $n \ge 1$, of forests belonging
to different languages recognized by \calG, such that
the elements of
$\calS^{(n)}$  cannot be told apart by any forest algebra in $ \mathbf{**}^n  \bicrochet{\N_{\tau,\pi}} $.
This sequence is usually  described through an Ehrenfeucht-Fra\"{\i}ss\'e game.
We call such a sequence a proof of non-membership in $ \mathbf{**} \bicrochet{\N_{\tau,\pi}}  $.
An Ehrenfeucht-Fra\"{\i}ss\'e game actually builds a recursive proof,
where each forest of
$\calS^{(n+1)}$  is  built by inserting
copies of the elements of $\calS^{(n)}$ at the ports of 
the corresponding
element of a set $\calM^{(n)}$ of multicontexts.
Such a proof can be specified with an infinite sequence of 
such sets $\calM^{(n+1)}$;
 we denote by
$\mathbf{RC}(\calG {\star} \calH^n_{\tau,\pi})$
the proposition that states the existence  a $\calM^{(n+1)}$
that has the required properties;
RC stands for ``recursive construction''.
We prove  that 
$\calG \not\in \mathbf{**} \bicrochet{\N_{\tau,\pi}} $
if, and only if $\mathbf{RC}(\calG  {\star} \calH^n_{\tau,\pi})$ holds
for every $n \ge 1$,
that is, every algebra that is not first-order has a recursive proof
of non-membership. %
Next, we observe that in the existing proofs, the circuit
$\calM^{(n+1)}$ is either identical to
$\calM^{(n)}$, in which case  each forest of
$\calS^{(n)}$ is  built from copies of the same, finite set
of multicontexts (``proof-by-copy''), or 
$\calM^{(n+1)}$ is obtained by pumping a starting set
of multicontexts (``proof-by-pumping''). 
The questions of the existence of a
proof-by-copy
and of a restricted form of proof-by-pumping
are both recursively enumerable.
\newline
Section  \ref{sec:defback}\
contains  background 
on forests, multicontexts and circuits,
and on forest algebras 
and the varieties $ {{\mathbf{**}}}\bicrochet{ \N_{\tau,\pi}}$. 
In Section \ref{sec:multicon}, 
we define the algebras $\calG_\#$ and $\calG_\%$.
and the related notions of
pumping and aperiodicity.
%
%
In Section \ref{sec:main}, we prove  that an algebra is outside 
${{\mathbf{**}}} \bicrochet{ \N_{\tau,\pi}} $ if, and only if
this can be asserted with
a recursive proof; 
we then explore  the notions of
proof-by-copy and proof-by-pumping.
In Section \ref{sec:exaproof}, we 
discuss  in our formalism
some typical examples of
non-membership proofs.
We conclude with some comments and open questions.

\section{Definitions and Background}  \label{sec:defback}

\subsection{Forests, Multicontexts, and Circuits}  \label{sec:FMC}

We consistently work with a finite alphabet
$A$, which we assume to always contain a neutral letter $e$,  such that for every forest homomorphism $\gamma$,
$\gamma(e\square) $ is the identity mapping. Let $B$ be another alphabet, disjoint from $A$.
A \emph{multicontext} $m$ over $(A,B)$ is a sequence of trees in which 
a subset of the leaves consists of \emph{ports}, where 
every non-port node $y$
carries a label  $\lambda(m,y) \in A$.
We denote by $nodes(m)$ the set of all nodes in $m$,
by $ports(m)$ the set of its ports and by $interior(m)$ the set of the non-port nodes.
We work with multicontext where each port either has  a label $\nu(x) \in B$,
or has several labels, each
specified as a mapping from  $ports(m)$ to a set that is disjoint from $A$.
A \emph{forest} is  a multicontext without ports;
a \emph{context} in the usual sense is a forest with a unique port that
carries the special label $\square \not\in (A \cup B)$,
called its $\square$-port. Throughout the paper,
this port is considered apart from the others.
%
Given  $x \in nodes(m)$,
we denote by $\Delta(m,x) $ the multicontext 
consisting of all subtrees of $m$ rooted at the sons of $x$.
The subtree rooted at $x$, i.e.  $\Delta(m,x)$ plus the node $x$,
is denoted $\Delta^+(m,x)$.
The
context of $x$ within $m$, with notation $\nabla(m,x) $,
is built from $m$ and $x$ by replacing
$\Delta^+(m,x)$ with a  $\square$-port.
The ancestors of  this port constitute the \emph{trunk} of the context.
If we deal with a set of multicontexts $M$ instead of an individual $m$, we 
use the notations $nodes(M)$, $\lambda(M,x)$, $\Delta(M,x)$, etc.
The sets of all forests,  and contexts over $A $ are 
respectively denoted 
\Snat\ and \Svert. We use the notations $\BbbM_{A,B}$ and $\BbbC_{A,B}$,
respectively,
for the  set of all multicontexts over $(A,B)$ for the 
set of all multicontexts with a $\square$-port
(the contexts-in-multicontexts, so to speak).
We use the
standard representation for individual forests of multicontexts, 
where nodes are listed in preorder and where
concatenation and $+$  represent the father-son relation and
horizontal addition, respectively. For example, $a(b+c)$ 
is a tree with a root labelled $a$ and two sons labelled $b$ and $c$,
while
$ab+c$ is a forest of two trees, where nodes $a$ and $c$ are roots,
and nodes $b$ and $c$ are leaves.
\newline 
Inserting $t$ in a context $s$ consists in replacing the $\square$-port of $s$ 
with a copy of $t$; the resulting forest is
denoted $s {\cdot} t$, or $st$.
Insertion in multicontexts is done
here either on a wholesale basis, i.e. something
is inserted at every port,
or
on a selective basis, when insertion occurs at a 
pre-specified set of ports.
The latter method is defined  in Section \ref{sec:multicon};
the former is associated with circuits and the construction of witnesses,
as follows.
\newline
Let $A$, $B$ and $C$ be three sets.
A \emph{circuit} over  $(A,B,C)$
is a set $M$ with
an element $m_c$ for every $c \in C$;
this component is a multicontext $m \in \BbbM_{A,B}$.
We can regard $M$ as having an input  wire
for every $b \in B$,  an output wire for every $c \in C$,
and $m_c$ is the result of
unraveling into a tree all those nodes of $M$ from which
the output wire $c$ is accessible.
A set $S $ of forests over $(A,B)$
is defined similarly, with an element  $ s_b \in  \Snat$
for every $ b \in B$. The insertion of $S$ in 
a circuit $M$ over $(A,B,C)$ consists in
inserting a copy of the forest $ s_{\nu(x)}$
at every $x \in ports(M)$;
the result is a set of forests over $(A,C)$,
denoted $M {\cdot} S$.
If $M$ and $M'$ are  circuits over  $(A,B,B)$ 
then inserting $M'$ in $M$
builds a circuit $M {\cdot} M'$ over $(A,B,B)$.
It can be verified, using standard methods, that this operation is associative.

\subsection{Forest algebras}   \label{sec:forestalg}


The reader is assumed to be knowledgeable with the notions of
semigroups and monoids, and their relations with regular languages,
word congruences and monoid homomorphisms (see \cite{pi84,pi97}).
Two types of notations are used for
 the monoids discussed in the article. There is an additive, or ``horizontal'' notation where
 the identity and operation are denoted $0$ and  $+$, respectively, although this does in no way 
imply that the latter is commutative. In the multiplicative, or ``vertical'' notation, the neutral element is denoted 
$\varepsilon$
and the operation is written with $\cdot$ or by concatenation of the arguments.
\newline
A \emph{ transformation } of a set $S$ is a mapping $S \rightarrow S$, i.e. an element
of the monoid $S^S$.
A \emph{ translation } in a monoid $M$ (with the additive notation) is  a mapping
$[u + \varepsilon + v] : M \rightarrow M$, where $u,v \in M$,
defined by $s \mapsto u+s+v$. If  $M$ is commutative, then
the translations are of the form  $[u + \varepsilon ]  $.
The set $\calM(M) = \{ [u + \varepsilon + v]\ |\ u,v \in M \}$ with the composition of functions
is the \emph{translation monoid} of $M$. 
%

\begin{definition} \label{def:algfor}
A \emph{forest algebra} is
 a pair $\calH = (H,V)$ where $(H,+)$ is a monoid and $(V,\cdot)$ 
 is a submonoid of $H^H$ which contains $\calM(H)$.
 \end{definition}
 
Monoids $H$ and $V$ are the
 horizontal and  vertical monoids of \calH, respectively.
Because $V$ is a submonoid of $H^H$, its  action on $H$ 
is faithful.
Forest algebras were introduced in \cite{bowa08}\ as pairs of abstract monoids;
in that case, faithfulness has to be specified in the definition.
\newline
A \emph{forest algebra homomorphism} 
from $\calH = (H,V)$ to $\calG = (G,W)$
is a pair of mappings $\alpha = (\alpha_H,\alpha_V)$ where
 $\alpha_H $ and $\alpha_V $   are monoid homomorphisms $H \rightarrow G$
 and $V \rightarrow W$, respectively, and such that 
 $\alpha_H \circ w = \alpha_V(w) \circ \alpha_H$ for every $w \in V$. 
The \emph{free forest algebra} over $A$ is
$\Sfree =  (\Snat,\Sboxvert)$; since it is generated from 
$\{a\square\ |\ a \in A\}$,
 a homomorphism $\alpha : \Sfree \rightarrow \calH$
is completely  specified once \calH\ and every $\alpha_V(a\square)$, $a \in A$, are known.
A \emph{forest congruence} in \Sfree\ is a pair of
 equivalence relations, both denoted by $\approx$, such that in \Snat\ 
$s \approx t$ iff $ps \approx pt$ for every context $p $,
and
 in \Sboxvert\
$p \approx q$ iff $pt \approx qt$ for every forest $t $.
A congruence $\approx$ \emph{refines} another congruence 
$\simeq$ over the same domain,
when $x \approx x' \Rightarrow x \simeq x'$ for all $x,x'$.
A homomorphism $\alpha$ defines  its nuclear congruence:
 $s \approx_\alpha t \Leftrightarrow \alpha(s)=\alpha(t)$,
 and conversely a congruence $\approx$ defines a  homomorphism from \Sfree\ to $\Sfree/\approx$.
A set $L \subset \Snat$ is
\emph{recognized by}  \calH\ if there exist
a  homomorphism $\alpha : \Sfree \rightarrow \calH$
and a subset $F \subset H$
such that for all $t \in \Snat$,
$t \in L \Leftrightarrow \alpha(t) \in F$. A context language $K \subset \Sboxvert$
is recognized in the same way, with an accepting set $P \subset V$.
The syntactic congruences of these languages are refined by  
 $\approx_\alpha$.
\newline
A \emph{variety of forest algebras} is a class of finite forest algebras closed under finite direct
product and division.
Given forest algebras
$(G,W)$ and $(H,V)$, we say that \calG\ is a subalgebra of \calH\ iff $G \subseteq H$ and $W \subseteq V$,
and that it divides \calH, with notation $\calG \prec \calH$, if it is the homomorphic image of a subalgebra
of \calH.
 A \emph{variety of forest languages} is formally defined
as a mapping $\mathsf{W}$  such that, for every alphabet $A$, $\mathsf{W}(A)$ 
is closed under finite boolean operations, inverse homomorphism of free algebras
and context quotients.
With  $L $ a language and $p \in \Sboxvert$, the context quotient of $L$ by $p$ is the
set $p^{-1}L = \{\ s \ |\ ps \in L\} $; a
forest  algebra which recognizes $L$ also recognizes $p^{-1}L$.
The lattices of varieties of forest algebras and of varieties of forest languages are isomorphic \cite{bostwa12}.
%
%
%
%
\newline
Let $\calG = (G,W)$ be a forest algebra and let $g,h \in G$.
An element $h$ is \emph{accessible} from $g$
when $ g = wh$ for some $w \in W$.
A set is
\emph{strongly connected} when its elements are mutually accessible;
a \emph{strongly connected component} of $G$ is a subset that is maximal for this property.
Let  $K$  be such a set: we define from it the set 
$W^{-1}K =  \{ g \in G : \exists\, w \in W,\, wg \in K \}$
of all elements from which $K$ is accessible, and 
its complement $I(K)$, which is an ideal,
that is,
a subset of $G$ closed under the
action of every element of $W$. 
%
Let $K \subset G$.
The \emph{leaf-completion} of a multicontext $m$ through 
a mapping $\chi : ports(m) \rightarrow K$
is the forest 
$\Brvchi(m) \in \BbbH_{A \cup K}$,
obtained  by
labeling every port $x$  with $\chi(x)$.
Consistently with this, 
the  \emph{leaf-extension} of 
a homomorphism $\gamma : \Sfree \rightarrow \calG$
to $\BbbH_{A \cup K}$
is built by defining $\gamma(t_k) = k$,
for every $h \in K$ and one-node tree
$t_k$ with label $k$.
Then 
$\gamma \circ \Brvchi(m)$ is
the image by $\gamma$ of the leaf-completion of  $m$ through 
$\chi $.
%


\subsection{Block product congruences}   \label{sec:block}

It is  known that the equivalence relation over \Sfree\
where every class consists in all forests that model the same
set of formulas of quantifier depth $n$, is a forest congruence.
A generalized version of this congruence is
$\approx^n_{\tau,\pi}$, where $n,\tau,\pi \ge 1$
are integers,
defined as follows:
\begin{listedense}
\item 
it is built around the threshold-$\tau$, period-$\pi$ counting
congruence over \N, defined by 
\[
p \equiv_{\tau,\pi} q \ \Leftrightarrow\ p=q \vee \left( p \ge \tau \wedge q \ge \tau \wedge 
(p-q) \equiv 0 (\text{mod}\ \pi)  \right)\, ;
\]
the quotient monoid $\N / {\equiv_{\tau,\pi} }$ is denoted $\N_{\tau,\pi}$;
\item 
given $s,t \in \Snat$, we have
$s \approx^1_{\tau,\pi} t$ if, and only if, for every $a \in A$,
the number of nodes with label $a$ in $s$ and in $t$
are congruent under $\equiv_{\tau,\pi}$;
the quotient algebra $\Sfree / {\approx^1_{\tau,\pi} }$ is denoted
$\calH^1_{\tau,\pi}$ ; the corresponding surjective homomorphism
is $\alpha^1_{\tau,\pi} : \Sfree \rightarrow \calH^1_{\tau,\pi} $;
\item for $n \ge 1$, 
given that $\approx^n_{\tau,\pi} $ and the quotient algebra
$\calH^n_{\tau,\pi} = (H^n_{\tau,\pi}, V^n_{\tau,\pi})$
are already known,
we define a relabeling operation
$s \mapsto s^{\alpha^n_{\tau,\pi}} $ which consists in 
replacing, at every node $x$ of $s$, the label $\lambda(s,x)  \in A$
with the triple 
\[
\lambda(s^{\alpha^n_{\tau,\pi}},x) \ =\ 
\crochet{ \lambda(s,x) , \alpha^n_{\tau,\pi} (\Delta(s,x)) ,  \alpha^n_{\tau,\pi} (\nabla(s,x)) } ;
\]
this defines the relabeling alphabet 
$D^{\alpha^n_{\tau,\pi}} = A \times H^n_{\tau,\pi}, \times V^n_{\tau,\pi}$;
the same is done in a context $t \in \Svert$; however, the new label of $x$
is different depending on whether $x$ is on the trunk,
so that
$t^{\alpha^n_{\tau,\pi}} $ is a context over 
$E^{\alpha^n_{\tau,\pi}} = (A_{\text{off}} \cup A_{\text{trunk}}) \times H^n_{\tau,\pi}, \times V^n_{\tau,\pi}$,
where $A_{\text{off}} $ and $A_{\text{trunk}}$ are disjoint copies of $A$;
\item for $n \ge 1$ and $s,t \in \Snat$, we have
$s \approx^{n+1}_{\tau,\pi} t$ if, and only if $s \approx^n_{\tau,\pi} t$ 
and 
$s^{\alpha^n_{\tau,\pi}}  \approx^1_{\tau,\pi} t^{\alpha^n_{\tau,\pi}} $. 
\end{listedense}
Example: 
we have $(a + \square) \approx^1_{\tau,\pi} a\square$ and
$(a + \square) \not\approx^2_{\tau,\pi} a\square$, which illustrates
the distinction between trunk and off-trunk nodes.
\newline
The quotient  algebra $\Sfree / {\approx^1_{\tau,\pi} }$ is isomorphic
to the one-dimensional algebra $\ODalg_{\N_{\tau,\pi}^{|A|}}$.
A {one-dimensional} {forest algebra}\footnote{Also called flat algebras in previous works on
the topic: the homomorphic image of a forest is the image in a monoid 
of a ``flattened'',
``one-dimensional'' version of the forest. The wording is also a
reference to the notion
that an 
algebra  $\ODalg_M {\Blockprod} \ODalg_N$ recognizes 
forest languages that are ``more two-dimensional'' than 
those recognized by $\ODalg_M$.} 
is a pair $(M,\calM(M))$ such that for every homomorphism
$\gamma : \Sfree \rightarrow (M,\calM(M))$ and every $a \in A$, there exists $m \in M$ such that
$\gamma(a\square) = [m + \varepsilon ]$. In such an algebra,
the homomorphic image of a forest $t \in \Snat$ is independent of its structure, that is, the algebra only
considers the  string $\eta(t)$ of its node labels, given in a predetermined order (e.g. in preorder).
Therefore,  $(M,\calM(M))$ associates to $\gamma$ a monoid homomorphism
$\hat{\gamma} : A^* \rightarrow M$, such that $\gamma_H = \hat{\gamma} \circ \eta $.
We denote by $\ODalg_M$ the (unique)
one-dimensional algebra built from   $M$. 
\newline
The congruences $\approx^n_{\tau,\pi}$ can be defined algebraically, as follows.
Let $\crochet{\N_{\tau,\pi}}$ 
and
${{\mathbf{**}}^1}\bicrochet{\N_{\tau,\pi}}$
denote respectively the variety of monoids generated by $\N_{\tau,\pi}$
and the variety of forest algebras generated by the algebras
$\ODalg_{M}$ where  $M \in \crochet{\N_{\tau,\pi}}$.
Then for every language $L \subseteq \Snat$,
its syntactic forest algebra $\calG(L)$ belongs to ${{\mathbf{**}}^1}\bicrochet{\N_{\tau,\pi}}$
if, and only if $\approx^1_{\tau,\pi}$ refines its syntactic congruence,
or equivalently, iff $\calG(L)$ divides $\calH^1_{\tau,\pi}$.
Next, every algebra $\calH^{n+1}_{\tau,\pi} = \Snat / {\approx^{n+1}_{\tau,\pi}}$ is a block product
$\calH^{n}_{\tau,\pi} \Blockprod \ODalg_{\N_{\tau,\pi}^{N}}$, with
$N \ge |D^{\alpha^n_{\tau,\pi}} \cup   E^{\alpha^n_{\tau,\pi}} |$.
We use ${{\mathbf{**}}^{n+1}}\bicrochet{\N_{\tau,\pi}}$ to denote the variety
generated by block products of the form $\calG \Blockprod \ODalg_{\N_{\tau,\pi}^{N}}$
with $\calG \in {{\mathbf{**}}^{n}}\bicrochet{\N_{\tau,\pi}}$.
Finally, ${{\mathbf{**}}}\bicrochet{\N_{\tau,\pi}} = \bigvee_{n \ge 1}  {{\mathbf{**}}^{n}}\bicrochet{\N_{\tau,\pi}}$.
We will make abundant use of the following.
%
%
\begin{proposition}  \label{prp:basic}
The following statements on a  finite-index congruence $\simeq$ over \Sfree\
are equivalent:
 $\Sfree / {\simeq} \in {{\mathbf{**}}^n}\bicrochet{\N_{\tau,\pi}}$;
  $\Sfree / {\simeq} \prec \calH^n_{\tau,\pi} $;  the  congruence
 $\approx^n_{\tau,\pi} $ refines $ \simeq$.
 \hfill $\square$
 \end{proposition}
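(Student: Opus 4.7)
The plan is to prove the cycle (iii)$\Rightarrow$(ii)$\Rightarrow$(i)$\Rightarrow$(iii). The first two implications are essentially formal consequences of the construction of $\calH^n_{\tau,\pi}$ recorded in the paragraphs just above the proposition, and (i)$\Rightarrow$(iii) carries the substantive content.

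For (iii)$\Rightarrow$(ii): if $\approx^n_{\tau,\pi}$ refines $\simeq$, every $\approx^n_{\tau,\pi}$-class lies in a $\simeq$-class, so $\Sfree/\simeq$ is a quotient of $\calH^n_{\tau,\pi}=\Sfree/\approx^n_{\tau,\pi}$ and hence divides it. For (ii)$\Rightarrow$(i): since ${{\mathbf{**}}^n}\bicrochet{\N_{\tau,\pi}}$ is a variety and is therefore closed under division, it suffices to check that $\calH^n_{\tau,\pi}$ itself belongs to it. I argue by induction on $n$: the base case uses the isomorphism $\calH^1_{\tau,\pi} \cong \ODalg_{\N_{\tau,\pi}^{|A|}}$, which lies in ${{\mathbf{**}}^1}\bicrochet{\N_{\tau,\pi}}$ by definition, and the inductive step uses the identification of $\calH^{n+1}_{\tau,\pi}$ as a block product $\calH^n_{\tau,\pi} \Blockprod \ODalg_{\N_{\tau,\pi}^N}$ already recorded above.

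The heart of the proof is (i)$\Rightarrow$(iii). Fix a surjection $\alpha:\Sfree \to \calG$ with $\calG \in {{\mathbf{**}}^n}\bicrochet{\N_{\tau,\pi}}$; I show by induction on $n$ that $s \approx^n_{\tau,\pi} t$ implies $\alpha(s)=\alpha(t)$, and similarly for contexts. For $n=1$, $\calG$ divides a finite direct product of algebras $\ODalg_M$ with $M$ in the monoid variety $\crochet{\N_{\tau,\pi}}$; each such $M$ divides a power of $\N_{\tau,\pi}$, so $\calG$ divides $\ODalg_{\N_{\tau,\pi}^K}$ for some $K$. Being one-dimensional, this algebra reads only the multiset of letters, counted modulo $\equiv_{\tau,\pi}$, which is precisely what $\approx^1_{\tau,\pi}$ records. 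For the inductive step, $\calG$ divides a block product $\calG' \Blockprod \ODalg_P$ with $\calG' \in {{\mathbf{**}}^n}\bicrochet{\N_{\tau,\pi}}$ and $P$ a power of $\N_{\tau,\pi}$. Unfolding the block product, $\alpha(s)$ is computed by relabelling every node $x$ with the triple $\langle \lambda(s,x), \gamma'(\Delta(s,x)), \gamma'(\nabla(s,x)) \rangle$ where $\gamma'$ denotes the homomorphism into $\calG'$, and then feeding the relabelled forest into $\ODalg_P$. By the induction hypothesis, the $\gamma'$-values depend only on the $\approx^n_{\tau,\pi}$-class, so the relabelling used matches $s \mapsto s^{\alpha^n_{\tau,\pi}}$, and the second factor then depends only on the $\approx^1_{\tau,\pi}$-class of the relabelled forest. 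Hence $s \approx^{n+1}_{\tau,\pi} t$ forces $\alpha(s)=\alpha(t)$.

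I expect the main obstacle to lie in the inductive step of (i)$\Rightarrow$(iii): one must verify that the data read by $\calG' \Blockprod \ODalg_P$ at each node is exactly what the relabelling defining $\approx^{n+1}_{\tau,\pi}$ records, and extend this verification from forests to contexts in the vertical monoid, where the trunk/off-trunk distinction between $A_{\text{trunk}}$ and $A_{\text{off}}$ in the definition of $t^{\alpha^n_{\tau,\pi}}$ becomes essential. Handling division is a second bookkeeping step: if $\calG$ is a homomorphic image of a subalgebra of $\calG' \Blockprod \ODalg_P$, one applies the analysis inside the block product and then projects back into $\calG$, observing that equality preimages descend along division.
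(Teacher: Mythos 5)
The paper states Proposition \ref{prp:basic} with its proof omitted (the statement ends in $\square$ and the surrounding text treats it as a consequence of the block-product machinery), so there is no official argument to match your proposal against; judged on its own, your cycle (iii)$\Rightarrow$(ii)$\Rightarrow$(i)$\Rightarrow$(iii) is sound and supplies exactly the verification the paper takes for granted. Two points deserve tightening. First, in (i)$\Rightarrow$(iii) membership in ${{\mathbf{**}}^{n+1}}\bicrochet{\N_{\tau,\pi}}$ means dividing a finite \emph{direct product} of block products $\calG'_i \Blockprod \ODalg_{P_i}$, not a single one; the fix is routine (argue coordinatewise), but it should be said. Likewise ``applies the analysis inside the block product and then projects back'' is the wrong direction: the clean move is to \emph{lift} $\alpha$ through the division, using freeness of $\Sfree$ to choose, for each $a\in A$, a vertical element of the covering subalgebra mapping to $\alpha(a\square)$, so that $\alpha$ factors as $\beta\circ\tilde\alpha$ and invariance under $\approx^{n+1}_{\tau,\pi}$ can be checked on $\tilde\alpha$. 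Second, your argument (like the paper's surrounding discussion) leans on structural facts that are asserted rather than proved: the isomorphism $\calH^1_{\tau,\pi}\cong\ODalg_{\N_{\tau,\pi}^{|A|}}$, the identification $\calH^{n+1}_{\tau,\pi}=\calH^n_{\tau,\pi}\Blockprod\ODalg_{\N_{\tau,\pi}^N}$ (the paper never formally defines the block product of forest algebras), the closure of one-dimensional algebras under the products and divisions you invoke in the $n=1$ case, and the fact that in any member of $\crochet{\N_{\tau,\pi}}$ the multiple $c\,x$ depends only on the class of $c$ under $\equiv_{\tau,\pi}$ (via the identity $\tau x=(\tau+\pi)x$); citing or checking these explicitly would make the base case airtight. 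Your flag that the induction must be carried simultaneously for contexts, where the $A_{\text{off}}/A_{\text{trunk}}$ distinction enters, is exactly right and should be made part of the induction hypothesis rather than an afterthought.
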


Let $\mathsf{FO}[\Panc]$ denote the variety of all forest languages
definable with first-order logic formulas with the $\forall$ and $\exists$
quantifiers and the `ancestor' positional predicate; for $q \ge 2$,
let $\mathsf{FOMod_q}[\Panc]$ denote the variety defined in terms of the same
sort of formulas, where now the $\exists^q_i$, $0 \le i < q$, modular quantifiers are
also allowed.
It was proved in
\cite{eswe10}\ that the syntactic preclones of the languages in $\mathsf{FO}[\Panc]$
generate the same variety as the iterated block products of
preclones defined in terms of counting under threshold one (i.e. the monoid $U_1$);
adding to the generating preclones those 
defined with counting under the congruence $\equiv_{0,\pi}$
yields a characterization  for $\mathsf{FOMod_\pi}[\Panc]$.
%
It can be verified that these equivalences
translate into
$\mathsf{FO}[\Panc] = \bigvee_{\tau \ge 1}  {{\mathbf{**}}}\bicrochet{\N_{\tau,1}}$
and
$\mathsf{FOMod_\pi}[\Panc] = \bigvee_{\tau \ge 1}  {{\mathbf{**}}}\bicrochet{\N_{\tau,\pi}}$.
\newline
\emph{Remark}. Actually, 
$\mathsf{FO}[\Panc] =  {{\mathbf{**}}}\bicrochet{\N_{1,1}}$, where
$\N_{1,1} = U_1  $ is the Boolean OR monoid, and
similarly
$\mathsf{FOMod_\pi}[\Panc] =  {{\mathbf{**}}}\bicrochet{\N_{1,\pi}}$, 
so that working in terms of nontrivial thresholds $\tau$ is not mandatory.
However, doing so makes it possible to follow
more closely the counting-under-threshold that seems to be inherent to the 
construction of proofs of non-membership in $\mathsf{FO}[\Panc]$,
and is reminiscent to the description of the
first-order definable forest 
languages developed in \cite{hath87,mora03}.
Note that a characterization $\mathsf{Mod_\pi}[\Panc] =  {{\mathbf{**}}}\bicrochet{\Z_\pi} $
also exists,
where $\Z_\pi$ is the cyclic group of order $\pi$;
we put aside this special case in the current version of this paper.


\section{Algebras for Multicontexts}  \label{sec:multicon}


%
Forest algebras were designed as tools to handle trees, forests, and contexts over $A$.
Dealing with multicontexts over $(A,B)$ as we do in this article demands that 
a suitable algebraic structure be developed to  describe how a
forest algebra works on them.
A first approach consists, given  
a forest algebra $\calK = (K,U)$,
in regarding a
multicontext as a specification for a multivariate mapping
from $K^{B}$ to $K$. 
This defines the \emph{algebra of mappings}
$\calK_\#$; it is used to
define the notion of pumping, which underlies the construction
of certain Ehrenfeucht-Fra\"{\i}ss\'e games,
and to associate to \calK\
a threshold and a period that are consistent with those used
so far in the literature. 
A second approach consists in considering
that a port label specifies elements of $K$
are allowed as inputs at that port. 
This leads to the definition of the \emph{extended algebra} $\calK_\%$,
which we use to
generalize once more the notions of threshold, period,
and aperiodicity.
Necessary conditions for 
first-order definability, that supersede 
some of the existing ones,
are defined from the latter.
%

%
%

%

\subsection{Multicontexts}  \label{sec:redef}

We use both $m$ and  $(m,\nu)$ to denote the pair
consisting of a multicontext $m$, where 
every interior node carries a label $\lambda(y) \in A$,
and a port labeling $\nu(x) \in B$.
When this pair is equipped with a second port labeling
$\psi$, we denote the resulting tuple $t = (m,\nu,\psi)$ 
when $\psi$ is fixed and the emphasis is on $t$ as a whole,
and $\Brvpsi(m)$ when it is understood that 
$(m,\nu)$ is fixed and $\psi$ is one of several possible
second port labelings.
Next, instead of labeling a port directly with a horizontal
monoid element, as it was done in the previous section,
we take $\nu(x)$ and $\psi$ in sets $B$ and $C$,
respectively, such that
$A$, $B$ and $C$ are pairwise disjoint;
when dealing with specific algebras,
leaf extensions of the appropriate homomorphisms are
then defined on $B$ and $C$.
Note that we are ultimately interested in 
the recognition of languages over $A$,
so that $B$ and $C$ are artefacts used in this
process and the ultimate results should not depend on them.
The tuples  over  $A$ and $B\times C$,
along with the contexts defined from them
by replacing a leaf
with a special port $\square$,
constitute a forest algebra
$\BbbF_{A,B \times C} = (\BbbM_{A,B \times C},\BbbC_{A,B \times C})$;
those over  $A$ and $B$ constitute
$\BbbF_{A,B} = (\BbbM_{A,B} ,\BbbC_{A,B} )$;
the reader can verify that both
are free algebras. 
\newline
Besides the insertion in a context, i.e. the monoid operations in 
\Svert, $\BbbC_{A,B}$ and $\BbbC_{A,B \times C}$,
we define 
an operation that does multiple, simultaneous insertions in a multicontext from $\BbbM_{A,B}$.
%
%
Given sets of multicontexts $M_1$ and $M_2$ and $Z \subset ports(M_1)$,
with $Z \cap ports(m_1) \neq \emptyset$ for every $m_1 \in M_1$,
we denote by $ M_1 \, \underline{{\cdot}{\scriptstyle{Z}}}\,  M_2$
the set of all multicontexts $m_{\text{new}}$ that can be built by taking an element $m_1 \in M_1$,
inserting at each port $x \in Z \cap ports(m_1)$ a multicontext $m(x) \in M_2$
and replacing the label of $x$ with the neutral letter $e$;
with this new label, $x$ has no effect on the image by a homomorphism
of $m_{\text{new}}$, while it remains available to be used in reasonings
and proofs.
No other label is modified, so that in particular 
if $m(x)$ is a copy of $m_2 \in M_2$ and $y \in nodes(m_2)$, 
then the counterpart $y'$ of $y$ in  $m(x)$
satisfies $\lambda(m_{\text{new}},y') = \lambda(m(x),y) = \lambda(m_2,y)$.
\newline
Let $M \subset \BbbM_{A,B}$ and
let $Z \subseteq ports(M)$.
Then with $m \in M$,
we use the notations $Z(m)$ and $Z(M)$
for $Z \cap ports(m)$ and $Z \cap ports(M)$,
respectively. Given a congruence $\cong$,
we say that $Z$ is $\cong$-stable
when every pair $x,x'$ of ports satisfies
$\nabla(M,x) \cong \nabla(M,x') $.
\newline
Next, 
let $B' \subseteq B$ and let $Z $ be the set of all
ports with label in $B'$.
With $\theta \ge 1$ we define the set $M^{(\theta,Z)} $
obtained by pumping $\theta$ times the set $M$ at $Z$,by:
$M^{(1,Z)} = M$ and 
$M^{(\theta+1,Z)} = M^{(\theta,Z)} \, \underline{{\cdot}{\scriptstyle{Z}}}\,  M$.
This definition of pumping is consistent
with the definition of  the
vertical monoid of $\Sfree/{\cong}$
(where both $B$ and $Z$ are singletons),
 with
the
``vertical confusion'' defined in \cite{bostwa09}\
(where $B$ and $M$ are singletons),
and with  the 
``vertical confusion on uniform multicontexts'' 
also discussed in \cite{bostwa09}\
(where $B$ and $M$ are singletons and
the ports of $Z$ are indistinguishable
by any congruence).

%


 \subsection{The algebra of mappings}   \label{sec:algemap}

Let $\calK = (K,U)$ be a finite algebra and
$\gamma : \Sfree \rightarrow \calK$  a
surjective homomorphism.
We look for a reasonable way of extending \calK\ to 
$\BbbF_{A,B}$,
besides the one that consists in defining a leaf extension of $\gamma$ to $B$.
With this in mind, we define 
the \emph{algebra of mappings} of \calK, which we denote $\calK_\#$.
%
To do so, we
show how to translate a congruence in $\BbbF_{A,B} $
into a congruence in \Sfree, and vice versa.
Define a mapping $\hat{\varepsilon}_K$ from $\BbbM_{A,B}$ to $\Snat$,
that turns  $m \in \BbbM_{A,B}$ into a forest $\hat{\varepsilon}_K(m)$ by
replacing all port labels with the neutral letter $e$; define 
$\hat{\varepsilon}_V$ from $\BbbC_{A,B}$ to $\Svert$ in exactly the same way;
both mappings constitute a surjective homomorphism from  $\BbbF_{A,B}$ 
to \Sfree.
Thus, given a forest congruence $\approx$ over $\BbbF_{A,B}$, a forest congruence
is defined in a natural way over \Sfree.
%
%
In the other direction, let 
$B = \{b_1,\ldots,b_N\}$ be finite
and let $\xi $ be  simultaneously regarded as
a vector $\xi \in K^N$ a mapping $\xi : B \rightarrow K$.
Given $m \in \BbbM_{A,B}$,
define $\mu : ports(m) \rightarrow K$  by $\mu = \xi \circ \nu$.
From $m$ and  $\xi$, 
we  build a forest $\Brvmu(m,\xi)$ over $A \cup K$ by
replacing in $m$ every port label $\nu(x)$ with  $\xi(x)$.
We build $\Brvmu(w,\xi)$ from   $w \in \BbbC_{A,B}$ in the same way;
the $\square$-port of $w$   retains its label.
We extend $\gamma$ to $\BbbH_{A \cup K}$ by defining $\gamma(h) = h$
for every $h \in K$, 
so that $\xi$ in fact is one of the $|K|^N$ leaf extensions for $\gamma$ that can be 
built
on $B$,
and define a mapping $\gamma[m] :K^N \rightarrow K$
by
$\gamma[m](\xi) = \gamma(\Brvmu(m,\xi))$;
similarly, we define
$\gamma[w] :  K^{N} \rightarrow U$ by
 $\gamma[w](\xi) = \gamma(\Brvmu(w,\xi))$.
Next,
we define $K_\#$ and $U_\#$, the sets of all mappings
$ K^N \rightarrow K$ and $ K^N \rightarrow U$, respectively, and
given $f,f' \in K_\#$ and $u,u' \in U_\#$,
the operations and vertical action
$f+f' : \xi \mapsto f(\xi) + f'(\xi)$,
$uu' : \xi \mapsto u(\xi) u'(\xi)$, and
$uf: \xi \mapsto u(\xi) f(\xi)$,
so  that the pair $\calK_\#= (K_\# , U_\#)$
 constitutes a forest algebra\footnote{A
notation that mentions $B$, e.g.
$\calK_{\sharp B}$, would actually
be more accurate, if more cumbersome.
} 
  and  
$\gamma : \BbbF_{A,B} \rightarrow \calK_\#$
defined by $m \mapsto \gamma[m]$ and $w \mapsto \gamma[w]$
is a homomorphism.
Let  $\approx$ be  the nuclear congruence of $\gamma$:
we define from it an equivalence between 
nodes, also denoted  $\approx$.
Let $x,x' \in nodes(M)$ where $M$ is a set  of multicontexts
closed under $\approx$:
\begin{quote}
$x \approx x'$ iff $\lambda(x) = \lambda(x')$ and
$\Delta(M,x)   \approx \Delta(M,x')$ and
$\nabla(M,x)  \approx \nabla(M,x') $;
\end{quote}
nodes equivalent under this relation ``cannot be told apart by $\calK_\#$''.
We also write $(m,x) \approx (m',x')$ 
and $(M,x) \approx (M,x')$
in order to specify where the nodes are located.
%
%
Given $m \in M$
and $h_1,\ldots,h_{N-1} \in H$ we define 
the mapping
\[
\gamma[m; h_1,\ldots,h_{N-1}] : K \rightarrow K 
 \quad\ \text{by}\ \quad\
\gamma[m; h_1,\ldots,h_{N-1}](k) = \gamma[m]( h_1,\ldots,h_{N-1},k ) .
\]
Since $M$ is closed under $\approx$,
 $\alpha[m]$ is the same for every  $m \in M$
and we can use the notation $\gamma[M]$.
Then with $Z = \{ x \in ports(m) : \nu(x) = b_N \}$, we observe
\[
\gamma[M \, \underline{{\cdot}{\scriptstyle{Z}}}\,  M' ; h_1,\ldots,h_{N-1}]
\ = \ 
\gamma[M  ; h_1,\ldots,h_{N-1}]
\circ
\gamma[M'  ; h_1,\ldots,h_{N-1}] 
\]
\[
\text{and} \quad
\gamma[M \, \underline{{\cdot}{\scriptstyle{Z}}}\,  M' ] ( h_1,\ldots, h_N )
\ =\
\gamma [ M] ( h_1,\ldots,h_{N-1}, \gamma [ M'] ( h_1,\ldots, h_N )) .
\]
Therefore, every operation
$ \underline{{\cdot}{\scriptstyle{Z}}} $
satisfies the compatibility property
for $\approx$
\cite[Definition 5.1]{busa00}.

\begin{proposition} \label{prp:Hsharp}
Let $\calH^n_\#$ and $\alpha^n_\#$
be built from $\calH^n_{\tau,\pi} = \Sfree / \approx^n_{\tau,\pi} $ and $\alpha^n_{\tau,\pi}$.
Then the nuclear congruence of  $\alpha^n_\#$
is refined by the congruence $\approx^n_{\tau,\pi}$ built recursively
over $\BbbF_{A,B}$, 
as in Section \ref{sec:block}.
Hence, $\calH^n_\# \prec \BbbF_{A,B} / \approx^n_{\tau,\pi} $.
\end{proposition}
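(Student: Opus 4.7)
The plan is to proceed by induction on $n$, establishing the equivalent statement: whenever $m, m' \in \BbbM_{A,B}$ (and similarly for elements of $\BbbC_{A,B}$) satisfy $m \approx^n_{\tau,\pi} m'$ under the congruence built recursively in $\BbbF_{A,B}$, then for every leaf extension $\xi : B \rightarrow H^n_{\tau,\pi}$ the forests $\Brvmu(m,\xi), \Brvmu(m',\xi) \in \Snat$ are $\approx^n_{\tau,\pi}$-equivalent in the original sense. This is precisely the condition that $\alpha^n_\#(m)$ and $\alpha^n_\#(m')$ agree as mappings $(H^n_{\tau,\pi})^{|B|} \to H^n_{\tau,\pi}$, from which $\calH^n_\# \prec \BbbF_{A,B}/{\approx^n_{\tau,\pi}}$ follows.

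For the base case $n = 1$, the congruence $\approx^1_{\tau,\pi}$ on $\BbbF_{A,B}$ equates multicontexts whose letter counts over $A \cup B$ agree modulo $\equiv_{\tau,\pi}$. A leaf extension $\xi$ replaces each port labeled $b \in B$ with $\xi(b) \in H^1_{\tau,\pi}$, so the counts of labels in $A \cup H^1_{\tau,\pi}$ inside $\Brvmu(m,\xi)$ are either unchanged (for $A$-labels) or are sums of $B$-label counts grouped by $\xi$-fibers. Since $\N_{\tau,\pi}$ is a well-defined quotient monoid under $+$, these sums agree modulo $\equiv_{\tau,\pi}$ for $m$ and $m'$, giving the claim. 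For the inductive step, assume the result at level $n$, let $m \approx^{n+1}_{\tau,\pi} m'$ in $\BbbF_{A,B}$, and fix $\xi : B \rightarrow H^{n+1}_{\tau,\pi}$. By the recursive definition applied in $\Sfree$, we must verify (a) $\Brvmu(m,\xi) \approx^n_{\tau,\pi} \Brvmu(m',\xi)$ and (b) their $\alpha^n_{\tau,\pi}$-relabelings are $\approx^1_{\tau,\pi}$-equivalent. Part (a) follows from the inductive hypothesis applied to $m \approx^n_{\tau,\pi} m'$ after composing $\xi$ with the canonical projection $H^{n+1}_{\tau,\pi} \to H^n_{\tau,\pi}$. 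For part (b), the relabeled letter $\langle \lambda(y), \alpha^n_{\tau,\pi}(\Delta), \alpha^n_{\tau,\pi}(\nabla)\rangle$ at each node $y$ of $\Brvmu(m,\xi)$ is determined by the local label and by the $\alpha^n_\#$-images of the subtree $\Delta(m,y)$ and the context $\nabla(m,y)$ evaluated on the appropriate restriction of $\xi$. By the inductive hypothesis applied to these components, these images are preserved when passing from $m$ to $m'$. The hypothesis $m^{\alpha^n_{\tau,\pi}} \approx^1_{\tau,\pi} m'^{\alpha^n_{\tau,\pi}}$ then reduces (b) to an instance of the base case for the relabeled letters.

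The main obstacle is the bookkeeping between the two recursive constructions: one must check that the relabeling $\alpha^n_{\tau,\pi}$ at a node $y \in \Brvmu(m,\xi)$ can be recovered from the relabeling of the corresponding node in $m^{\alpha^n_{\tau,\pi}}$ together with $\xi$. This forces simultaneous use of the inductive hypothesis on the subtree $\Delta(m,y)$ and the context $\nabla(m,y)$, with $\xi$ restricted respectively to the ports below and outside $y$, and the parallel statement for contexts in $\BbbC_{A,B}$ must be maintained throughout (taking care that the $\square$-port is preserved under leaf extension). Once this commutation of relabeling and leaf extension is established, the remaining counting arguments reduce to the base case.
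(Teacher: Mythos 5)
Your proposal follows essentially the same route as the paper's proof: induction on $n$, with the base case handled by comparing letter-count vectors over $A \cup B$ (using that $\equiv_{\tau,\pi}$ is compatible with sums), and the inductive step by showing that the $\alpha^n_{\tau,\pi}$-relabelled letter of each node of the leaf completion is determined by its relabelled letter in $m$ together with $\xi$, via the induction hypothesis applied to $\Delta(m,y)$ and $\nabla(m,y)$, so that the $\approx^1_{\tau,\pi}$ counting condition transfers from $m,m'$ to their completions. Your additional bookkeeping (carrying the parallel statement for contexts in $\BbbC_{A,B}$ and projecting $\xi$ along $H^{n+1}_{\tau,\pi} \rightarrow H^{n}_{\tau,\pi}$ for condition (a)) only makes explicit what the paper leaves implicit; there is no gap.
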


\begin{proof}
By induction on $n$.
Recall that a given $\xi$ is regarded
both as a mapping $\xi : B \rightarrow H^n_{\tau,\pi}$ and 
as a vector $\xi \in  (H^n_{\tau,\pi})^{|B|}$.
For the $n=1$ case, we associate to every $m \in \BbbM_{A,B}$
a vector $p_1(m)$ with component in $\N_{\tau,\pi}$ and labels in $A \cup B$,
where $p_1(m)_a$, $a \in A$, and $p_1(m)_b$, $b \in B$, 
are respectively the number of nodes $y$
with label $\lambda(y) = a$ and the number of ports $x$
with $\nu(x) = b$. The algebra
$H^1_{\tau,\pi} = \Sfree / \approx^1_{\tau,\pi} $  
is isomorphic to $(\N_{\tau,\pi} )^{|A|}$,
so that, with some abuse of notations, we can write
$\alpha^1_{\tau,\pi} (m) = p(m)$,
and given a mapping $\xi : B \rightarrow H^1_{\tau,\pi}$,
the image  $\xi(b)$ of $b \in B$
can be represented as a vector $\xi_1(b) \in (\N_{\tau,\pi} )^{|A|}$.
Within $ \BbbM_{A,B}$, there is an equivalence class
under  $\approx^1_{\tau,\pi} $  for every possible value of
$p_1(m)$, i.e. every vector in $(\N_{\tau,\pi} )^{|A \cup B|}$.
Then given $m \in \BbbM_{A,B}$,
\[
\alpha^1_{\tau,\pi} [m] (\xi) \ =\ 
\alpha^1_{\tau,\pi} (m) + \sum_{x \in ports(m)}\!\! \xi(\nu(x))
\ =\ 
\sum_{a \in A} p_1 (m)_a + \sum_{b \in B}  \xi_1(b) {\cdot} p_1(m)_b .
\]
From there, if
$p_1(m) = p_1(m')$, then $\alpha^1_{\tau,\pi} [m] = \alpha^1_{\tau,\pi} [m'] $.
%
%
With $n \ge 2$, 
the induction hypothesis states that if
$ m \approx^{n-1}_{\tau,\pi} m'$, then for every 
mapping $\xi : B \rightarrow H^{n-1}_{\tau,\pi} $,
the leaf completions of $m$ and $m'$ through $\xi$
satisfy
$ \Brvxi(m) \approx^{n-1}_{\tau,\pi}  \Brvxi(m')$.
Assume that 
$ m \approx^n_{\tau,\pi} m'$.
Two nodes or ports $x$ of $m$ and $x'$ of $m'$ receive the same label
in the  versions   of   $m$ and $m'$ relabeled according to
$\alpha^{n-1}_{\tau,\pi}$ iff
$\lambda(m,x) = \lambda(m',x')$,
$ \Delta(m,x) \approx^{n-1}_{\tau,\pi} \Delta(m',x') $ 
and
$  \nabla(m,x)  \approx^{n-1}_{\tau,\pi} \nabla(m',x')$.
By the induction hypothesis, the last two items imply, for every $\xi$:
\[
 \Brvxi(\Delta(m,x)) \approx^{n-1}_{\tau,\pi}  \Brvxi( \Delta(m',x') )
\quad \text{and} \quad
  \Brvxi(\nabla(m,x)) \approx^{n-1}_{\tau,\pi}  \Brvxi( \nabla(m',x') ) ,
 \]
which means that $x$ and $x'$  receive the same label
in the  versions   of 
$ \Brvxi(m) $ and $  \Brvxi(m')$ relabeled according to
$\alpha^{n-1}_{\tau,\pi}$, that is,
$ \Brvxi(m) \approx^{n}_{\tau,\pi}  \Brvxi(m')$.
 \vspace{0.2in}
\end{proof}

The algebras $\calH^n_\# $
and $ \BbbF_{A,B} / \approx^n_{\tau,\pi} $
are not isomorphic, however.
To see this, let $\tau =2$ and $\pi=1$,
so that  $\N_{2,1} = \{0,1,\infty\}$, 
 $n=1$ and
$A = \{a\}$,
so that 
$ \BbbF_{A} / \approx^1_{2,1} $
is isomorphic to $\ODalg_{\N_{2,1}}$,
where,
and finally
 $B = \{b\}$.
With $m = aab$ and $m' = aa(b+b)$,
we have $\alpha^1_{2,1}(m) \neq \alpha^1_{2,1}(m')$,
while
$\alpha^1_{2,1} [m] = \alpha^1_{2,1} [m'] $
is the constant function
that map $\{0,1,\infty\}$ onto $\infty$.

\subsection{Equivalence under pumping}   \label{sec:eqpump}

We use the algebra of mappings to  define a 
``threshold $\tau$, period $\pi$ equivalence under pumping''
congruence $\stackrel{\tau,\pi}{\leftrightarrow} = \bigcup_{i \ge 0} \stackrel{\tau,\pi}{\leftrightarrow}^{(i)}$
within $\BbbF_{A,B}$.
First, let $\stackrel{\infty}{\leftrightarrow}$ denote the relation where two forests are
equivalent iff they are the same up to horizontal permutations within a sum
(we might as well use $=$ instead of $\stackrel{\infty}{\leftrightarrow}$).
Then we consider a special case of
a multicontext  $m \in \BbbM_{A,B}$ where any two ports $x,x'$
satisfy $\nu(x) = \nu(x') \Leftrightarrow \nabla(m,x)  \stackrel{\infty}{\leftrightarrow} \nabla(m,x')$,
that is, their contexts within $m$ are indistinguishable.
Then the $\stackrel{\infty}{\leftrightarrow}$-stable sets of ports
are exactly the sets $\nu^{-1}(b)$, $b \in B$;
we say that $m$ is suitable for pumping.
Pumping\footnote{Note that this formalism also covers the case where
pumping is done ``horizontally'', i.e. where we 
are dealing with a multicontext of the form $m=c+x_1 + \cdots + x_k$
and where $Z=\{x_1 ,\ldots , x_k\}$. 
}
the singleton $\{m\}$ along 
a $\stackrel{\infty}{\leftrightarrow}$-stable
set of ports $Z$, we obtain for each
$\theta \in \N$  a singleton $\{m\}^{(\theta,Z)} = \{m^{(\theta,Z)} \}$.
We now define  $\stackrel{\tau,\pi}{\leftrightarrow}^{(i)}$, for
every $i \ge 0$.
First, 
$\stackrel{\tau,\pi}{\leftrightarrow}^{(0)}$ coincides with $\stackrel{\infty}{\leftrightarrow}$.
Next, the forest congruence $\stackrel{\tau,\pi}{\leftrightarrow}^{(1)}$ is 
generated by the pairs $(m^{(\theta,Z)},m^{(\theta',Z)})$ 
and the corresponding context congruence by the pairs
$(\nabla(m^{(\theta,Z)},x),\nabla(m^{(\theta',Z)},x'))  $,
where $\theta  \equiv_{\tau,\pi} \theta'$, $x \in Z(m^{(\theta,Z)})$ and $x' \in Z(m^{(\theta',Z)})$.
Then recursively for $i \ge 1$, given a set $M$ of multicontexts closed under
$\stackrel{\tau,\pi}{\leftrightarrow}^{(i)}$ and a set 
$Z \subset ports(M)$ that is $\stackrel{\tau,\pi}{\leftrightarrow}^{(i)}$-stable,
$\stackrel{\tau,\pi}{\leftrightarrow}^{(i+1)}$
is the congruence generated by the pairs
$(m,m')$ 
and
$(\nabla(m,x),\nabla(m',x'))  $
where $\theta  \equiv_{\tau,\pi} \theta'$, 
$m \in M^{(\theta,Z)}$, $m' \in M^{(\theta',Z)}$,
$x \in Z(m)$ and $x' \in Z(m')$.
We denote by $\calJ^{\tau,\pi} = ( J^{\tau,\pi} , U^{\tau,\pi})$ the 
(infinite)
quotient algebra
$\BbbF_{A,B} / {\stackrel{\tau,\pi}{\leftrightarrow}}$
and by
$\beta^{\tau,\pi}$ the corresponding surjective homomorphism.

%

%


\begin{proposition} \label{prp:thrper}
Every congruence of finite index over \Sfree\
is refined by a congruence 
$\stackrel{\tau,\pi}{\leftrightarrow}$.
\end{proposition}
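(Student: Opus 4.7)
The plan is to build the refining congruence from the finite quotient algebra $\calK=(K,U)=\Sfree/{\simeq}$. Let $\gamma:\Sfree\to\calK$ denote the canonical surjection. First I would fix any non-empty finite set $B$ and extend $\gamma$ to a homomorphism $\gamma_\#:\BbbF_{A,B}\to\calK_\#$ via the algebra-of-mappings construction of Section~\ref{sec:algemap}. Then I pick $\tau,\pi\geq 1$ large enough that $f^\tau=f^{\tau+\pi}$ holds for every element $f$ of the finite transformation monoid $K^K$ (take $\tau$ above the maximum index and $\pi$ a common multiple of the periods); such $\tau$ and $\pi$ exist because $K^K$ is finite. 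The goal is to show that $\stackrel{\tau,\pi}{\leftrightarrow}$ refines the nuclear congruence $\approx_{\gamma_\#}$, from which the proposition follows by restriction to $\Sfree$.

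I would argue by induction on $i$ that $\stackrel{\tau,\pi}{\leftrightarrow}^{(i)}$ refines $\approx_{\gamma_\#}$. The base case $i=0$ is immediate, since $\stackrel{\infty}{\leftrightarrow}$ is equality in the free forest algebra (horizontal addition of forests being commutative). For the induction step, take a generating pair $(m,m')$ of $\stackrel{\tau,\pi}{\leftrightarrow}^{(i+1)}$ arising from a set $M$ closed under $\stackrel{\tau,\pi}{\leftrightarrow}^{(i)}$, a $\stackrel{\tau,\pi}{\leftrightarrow}^{(i)}$-stable subset $Z$ of the ports of $M$, and integers $\theta\equiv_{\tau,\pi}\theta'$ with $m\in M^{(\theta,Z)}$ and $m'\in M^{(\theta',Z)}$. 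By the induction hypothesis $\gamma_\#$ is constant on $M$, so the mapping $\gamma_\#[M;h_1,\ldots,h_{N-1}]:K\to K$ is well defined for every choice of peripheral arguments. The compatibility identity displayed at the end of Section~\ref{sec:algemap} then yields
\[
\gamma_\#[M^{(\theta,Z)};h_1,\ldots,h_{N-1}] \ =\ \gamma_\#[M;h_1,\ldots,h_{N-1}]^{\theta}
\]
in $K^K$, and by our choice of $\tau,\pi$ this equals the corresponding $\theta'$-th power; hence $\gamma_\#[m]=\gamma_\#[m']$. The same computation applied to $\nabla(m,x)$ and $\nabla(m',x')$ handles the context part of the generating pairs.

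To conclude, I would observe that $\Sfree$ embeds as the subalgebra of $\BbbF_{A,B}$ consisting of forests and contexts without $B$-labeled ports, and on this subalgebra $\gamma_\#$ coincides with $\gamma$: a forest $s\in\Snat$ has no free ports, so $\gamma_\#[s]$ is the constant mapping with value $\gamma(s)$, and likewise for contexts. Consequently $\gamma_\#[s]=\gamma_\#[t]$ iff $s\simeq t$, and $\stackrel{\tau,\pi}{\leftrightarrow}$ restricted to $\Sfree$ refines $\simeq$.

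The main obstacle is the induction step: one must be careful to use the inductive hypothesis to treat $\gamma_\#[M;\cdot]$ as a single, well-defined function of the remaining arguments rather than a set of functions, which is exactly what lets the pumping count be collapsed via the eventual periodicity in the finite monoid $K^K$. Once that identification is in place, the argument reduces to the standard fact that every element of a finite monoid has a common eventual period past a common threshold.
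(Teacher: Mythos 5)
Your argument is essentially the paper's own proof: you pass to the algebra of mappings, take $\tau,\pi$ as a uniform index/period bound for the finite transformation monoid (the paper extracts them from the subsemigroups $\crochet{\zeta}$ of $H^H$ ranging over all $m$ and $Z$, which amounts to the same choice), use the composition identity for $\underline{{\cdot}{\scriptstyle{Z}}}$ to write the pumped value as a power, and then induct on the levels $\stackrel{\tau,\pi}{\leftrightarrow}^{(i)}$, treating the context pairs by the same computation. The only additions are cosmetic (making the restriction from $\BbbF_{A,B}$ back to \Sfree\ explicit), so the proposal is correct at the same level of detail as the paper's proof.
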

 
\begin{proof}
Let $\calH = (H,V) = \Sfree / {\approx} = \alpha(\Sfree)$
be finite.
Let $B = \{b_1,\ldots,b_N\}$ and 
let  $m \in \BbbM_{A,B}$ be suitable for pumping.
Assume that
$Z = \nu^{-1}(b_N)$; we
pump the singleton $\{m\}$ along $Z$.
Given $h_1,\ldots,h_N \in H$ we define 
$g = \alpha[m]( h_1,\ldots,h_N )$ and the mapping
\[
\zeta : H \rightarrow H \ \quad\
\zeta(k) = \alpha[m]( h_1,\ldots,h_{N-1},k ) .
\]
Observe that $\alpha[m^{(2,Z)}]( h_1,\ldots,h_N ) = \zeta(g)$
and in general, 
\[
\alpha[m^{(\theta,Z)}]( h_1,\ldots,h_N ) = \zeta^{\theta-1}(g) .
\]
The mapping $\zeta$ generates a subsemigroup 
$\crochet{\zeta}$ of $H^H$;  
from the threshold and period of
$\crochet{\zeta}$ we obtain integers $\tau$ and $\pi$ 
such that, for all combination of $m$ and $Z$,
we have
$\alpha[m^{(\theta,Z)}] = \alpha[m^{(\theta',Z)}]$
as soon as
$\theta  \equiv_{\tau,\pi} \theta'$.
\newline
We prove by induction on $i$, that with these $\tau $ and $\pi$,
for every $\stackrel{\tau,\pi}{\leftrightarrow}^{(i)}$-closed set $M$
and every $\stackrel{\tau,\pi}{\leftrightarrow}^{(i)}$-stable $Z \subseteq ports(M)$,
every combination of multicontexts
$\hat{m} \in M^{(\theta,Z)}$ and
$\hat{m}' \in M^{(\theta',Z)}$
satisfies
$\alpha[\hat{m}] = \alpha[\hat{m}'] $, whenever
$\theta  \equiv_{\tau,\pi} \theta'$.
At $i=0$ 
the relation $\stackrel{\tau,\pi}{\leftrightarrow}^{(0)}$ coincides with $\stackrel{\infty}{\leftrightarrow}$.
If $M$ is $\stackrel{\infty}{\leftrightarrow}$-closed, then
every pair $m,m' \in M$ satisfies $\alpha[{m}] = \alpha[{m}'] $.
Now let $Z \subseteq ports(M)$ be  $\stackrel{\infty}{\leftrightarrow}$-stable.
For every pair $\hat{m},\hat{m}' \in M^{(\theta,Z)}$ and any $\theta \ge 2$,
we have $\alpha[\hat{m}] = \alpha[\hat{m}'] $,
in particular when $\hat{m} = m^{(\theta,Z)}$, which means
$ \alpha[\hat{m}'] = \alpha[m^{(\theta,Z)}]$ for all $\hat{m}' \in M^{(\theta,Z)}$.
Since
$ M^{(\theta,Z)} \cup  M^{(\theta',Z)}$ is $\stackrel{\tau,\pi}{\leftrightarrow}^{(1)}$-closed whenever
$\theta  \equiv_{\tau,\pi} \theta'$, every $\hat{m}' \in M^{(\theta,Z)} \cup  M^{(\theta',Z)}$  
satisfies simultaneously $\hat{m}' \stackrel{\tau,\pi}{\leftrightarrow}^{(1)} m^{(\theta,Z)}$
and $ \alpha[\hat{m}'] = \alpha[m^{(\theta,Z)}]$.
From this we can deduce that the proposition holds for $\stackrel{\tau,\pi}{\leftrightarrow}^{(1)}$.
The induction step $i \ge 1$ is proved in the same way.
 \vspace{0.2in}
 \end{proof}

Following \cite{bostwa09}, we say that multicontext $m$
is \emph{uniform} when $ports(m)$
is $\stackrel{\infty}{\leftrightarrow}$-stable, i.e
all nodes at every depth level have the same label and
the same number of sons. We denote by $\MVM\BbbV_{A}$
the set of all
uniform multicontexts over $A$; it can be seen as a subset of
$ \BbbM_{A,\{b\}}$ where $b \not\in A$. 
Define a symbol
$\underline{{\otimes}n}$
for each $n \in \N$, 
and let 
$\underline{{\otimes}\N} = \{ \underline{{\otimes}n} : n \in \N \}$;
regarding \Svert\ as another alphabet, we can see a 
uniform multicontext as the image of a word
of $\underline{{\otimes}\N} (\Svert \underline{{\otimes}\N} )^*$ by a mapping
$\zeta$ such that
$\zeta(v) = v$ for every $v \in \Svert$,
$\zeta( \underline{{\otimes}1} w) = \zeta(w) $,
$\zeta(\underline{{\otimes}2} w) = \zeta(w)  + \zeta(w) $, etc.
with notation $\zeta( \underline{{\otimes}n} w) = \underline{{\otimes}n} \zeta(w)$, and
$\zeta(ww') = \zeta(w) \, \underline{{\cdot}{\scriptstyle{Z}}}\, \zeta(w')$,
where $Z = ports(\zeta(w))$.
It is a standard exercise to verify that 
$\MVM \Svert$ is a monoid of transformations of \Snat\
and that
$\zeta$ is a 
homomorphism.
We then associate to $\calH = (H,V) = \alpha(\Sfree)$
its \emph{multivertical monoid},
defined as $\MVM V = \alpha(\MVM \Svert)$.
\newline 
The proof of Proposition \ref{prp:thrper}\
shows that
the smallest integers $\tau$ and $\pi$ such that
$\stackrel{\tau,\pi}{\leftrightarrow}$ refines $\approx$ 
are  the threshold and period of the 
multivertical monoid $\MVM V$.
Also, the algorithm described in
\cite{bostwa09}\ to decide whether an algebra has
vertical confusion on uniform multicontexts can be adapted to compute
the values of $\tau$ and $\pi$.
We generalize the condition tested by this algorithm
to ${{\mathbf{**}}}\bicrochet{\N_{\tau,\pi}}$
and the variety of monoids
$\mathbf{Sol}_{\tau,\pi}$,
generated by iterated block products
of elements of $\crochet{\N_{\tau,\pi}}$,
which means a generalization from the
aperiodic to the solvable monoids,

\begin{proposition} \label{prp:multivert}
If an algebra $\calG = (G,W)$ belongs to
${{\mathbf{**}}}\bicrochet{\N_{\tau,\pi}}$,
then its multivertical monoid belongs to
$\mathbf{Sol}_{\tau,\pi}$.
\end{proposition}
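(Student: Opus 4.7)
\begin{proofsketch}
By Proposition~\ref{prp:basic}, $\calG \in {{\mathbf{**}}}\bicrochet{\N_{\tau,\pi}}$ implies $\calG \prec \calH^n_{\tau,\pi}$ for some $n \ge 1$, so the plan is to prove the statement for each $\calH^n_{\tau,\pi}$ and then transfer it to $\calG$ through division. The latter step is routine: any homomorphism $\beta : \calH' \to \calG$ with $\calH' \le \calH^n_{\tau,\pi}$ commutes with the action of uniform multicontexts (since $\MVM\Svert$ is generated by the iterated application of $\zeta$ to contexts, and $\beta$ is a forest algebra homomorphism), so $\MVM W$ is a homomorphic image of a submonoid of $\MVM V^n_{\tau,\pi}$. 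The bulk of the work is therefore to show, by induction on $n$, that $\MVM V^n_{\tau,\pi}$ lies in $\mathbf{Sol}_{\tau,\pi}$.

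For the base case $n=1$, $\calH^1_{\tau,\pi}$ is the one-dimensional algebra $\ODalg_{\N_{\tau,\pi}^{|A|}}$. Unfolding the recursive definition of $\zeta$ on a word $\underline{{\otimes}n_0} v_0 \cdots \underline{{\otimes}n_k} v_k \underline{{\otimes}n_{k+1}}$ shows that the resulting uniform multicontext has a total label-count and a total number of ports, both computable coordinate-wise in $\N_{\tau,\pi}$, and acts on the horizontal monoid as the affine map $k \mapsto u + p \cdot k$ where $u \in \N_{\tau,\pi}^{|A|}$ and $p \in \N_{\tau,\pi}$. The monoid generated by such transformations is a submonoid of a direct product of members of $\crochet{\N_{\tau,\pi}}$, hence itself in $\crochet{\N_{\tau,\pi}} \subseteq \mathbf{Sol}_{\tau,\pi}$.

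For the induction step, use the decomposition $\calH^{n+1}_{\tau,\pi} = \calH^n_{\tau,\pi} \Blockprod \ODalg_{\N_{\tau,\pi}^N}$. On a pair $(h,r) \in H^n_{\tau,\pi} \times \N_{\tau,\pi}^N$ a uniform multicontext $m$ acts as follows: the bottom component is transformed by the element of $\MVM V^n_{\tau,\pi}$ induced by $m$, while the top component is modified by the count of the triples $\crochet{\lambda(m,y),\alpha^n_{\tau,\pi}(\Delta(m,y)),\alpha^n_{\tau,\pi}(\nabla(m,y))}$ enumerated over the interior nodes of $m$. Because $m$ is uniform, these triples are constant across each horizontal level, so that the data needed to compute the second component is encoded in $\alpha^n_{\tau,\pi}(m) \in \MVM V^n_{\tau,\pi}$ together with the horizontal multipliers at each level --- which is precisely what a block product is designed to express. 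Hence $\MVM V^{n+1}_{\tau,\pi}$ divides the block product $\MVM V^n_{\tau,\pi} \Blockprod M$ for some $M \in \crochet{\N_{\tau,\pi}}$, and by the induction hypothesis this lies in $\mathbf{Sol}_{\tau,\pi}$.

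The main obstacle is the verification in the induction step that the coupling between the two factors of the block product really can be witnessed at the level of uniform multicontexts: one must show that the action on the $\ODalg_{\N_{\tau,\pi}^N}$ factor depends on the bottom algebra only through its multivertical monoid and not through the full vertical monoid. This calls for bookkeeping, level by level within $m$, of which triples $\crochet{\lambda,\alpha^n_{\tau,\pi}(\Delta),\alpha^n_{\tau,\pi}(\nabla)}$ can occur, and for checking that the corresponding counts in $\N_{\tau,\pi}^N$ are themselves controlled by the pumping argument of Proposition~\ref{prp:thrper} applied componentwise. Once this compatibility is established, the universal property of the block product yields the required division and closes the induction.
\end{proofsketch}
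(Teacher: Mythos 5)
Your overall strategy is the paper's: the inductive step you describe (the multivertical monoid of a block product $\calH^n_{\tau,\pi} \Blockprod \ODalg_{\N_{\tau,\pi}^N}$ divides a block product built from $\MVM V^n_{\tau,\pi}$) is exactly the content of the paper's proof, which reduces the whole statement to the claim that the multivertical monoid of a block product of forest algebras is a block product of their multivertical monoids, verified by decomposing $\nabla(m,x)$ as $\nabla^-(m,x)(\square + \underline{\otimes\mu(x)}\,\Delta^+(m,x))$ and checking that the induced ``above'' and ``below'' actions are monoidal via associativity of $\underline{\cdot Z}$. Your reduction to $\calH^n_{\tau,\pi}$ via Proposition~\ref{prp:basic} and the transfer of the conclusion under division are fine.

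There is, however, a genuine gap in your base case. The transformations of $\N_{\tau,\pi}^{|A|}$ induced by uniform multicontexts are indeed the affine maps $k \mapsto u + p\cdot k$, but the monoid they generate is \emph{not} ``a submonoid of a direct product of members of $\crochet{\N_{\tau,\pi}}$'': every monoid in $\crochet{\N_{\tau,\pi}}$ is commutative, and so is any submonoid of a direct product of such monoids, whereas composition of affine maps is $(u_1,p_1)\circ(u_2,p_2) = (u_1 + p_1 u_2,\, p_1 p_2)$. Concretely, for $\tau \ge 2$ the maps $k \mapsto \chi_a + k$ (from $a\square$) and $k \mapsto k+k$ (from $e(\square+\square)$) do not commute, so the multivertical monoid of $\ODalg_{\N_{\tau,\pi}^{|A|}}$ is not in $\crochet{\N_{\tau,\pi}}$. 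What you actually have is a semidirect/block-product-type structure: the additive monoid $\N_{\tau,\pi}^{|A|}$ acted on by the multiplicative monoid of port counts. To close the base case you must place \emph{that} monoid in $\mathbf{Sol}_{\tau,\pi}$ --- routine when $\pi=1$, since both factors are then aperiodic and the whole affine monoid is aperiodic, but requiring a separate argument when $\pi>1$, where the multiplicative part of $\N_{\tau,\pi}$ contributes group structure (units modulo $\pi$) that is not obviously generated by block products over $\crochet{\N_{\tau,\pi}}$. (The paper's own proof does not spell out this base case, so your instinct to supply it is sound; but the justification as written is incorrect and is precisely where the real work lies.)
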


\begin{proof}
It suffices to prove that the
multivertical monoid of a block product of forest algebras
is a  block product  of their 
multivertical monoids.
%
%
%
Let $m \in \MVM \Svert$ and let $x \in nodes(m)$;
the subtree $s = \Delta^+(m,x)$ rooted at $x$ 
belongs to a sum of $1$ or more copies of 
$s$; let
$\mu(x)$ be the number of copies of $s$ in this sum,
other than  $\Delta^+(m,x)$ itself. Let
$\nabla^-(m,x)$ be the context in which this
sum is inserted, so that
$\nabla(m,x) = \nabla^-(m,x) ( \square + \underline{{\otimes} \mu(x) } s )$.
Let $\alpha : \Sfree \rightarrow \calH$
be a surjective homomorphism: the label of 
$x$ in the version $m^\alpha$ of $m$ relabeled
according to $\alpha$ is
\[
\lambda(m^\alpha,x) \ =\
\crochet{ \lambda(m,x), \
\alpha( \Delta(m,x) ) ,\
\alpha(  \nabla^-(m,x) ) ( \square +  \underline{{\otimes} \mu(x) }  \alpha(\Delta^+(m,x) ) )
} .
\]
Given $n \in \MVM \Svert$ and $y \in nodes(n)$, we have
\begin{eqnarray*}
\lambda( (m\, \underline{{\cdot}{\scriptstyle{Z}}}\, n )^\alpha ,x) = 
\crochet{ \lambda(m,x), \,
\alpha( \Delta(m,x)  \underline{{\cdot}{\scriptstyle{Z}}}\, n  ) ,\,
\alpha(  \nabla^-(m,x) \underline{{\cdot}{\scriptstyle{Z}}}\, n  ) 
( \square +  \underline{{\otimes} \mu(x) }  \alpha(\Delta^+(m,x) \underline{{\cdot}{\scriptstyle{Z}}}\, n  ) )
} \\
\lambda( (m\, \underline{{\cdot}{\scriptstyle{Z}}}\, n)^\alpha , y) = 
\crochet{ \lambda(n,y), \,
\alpha( \Delta(n,y) ) ,\,
\alpha( m\, \underline{{\cdot}{\scriptstyle{Z}}} \nabla^-(n,y) ) ( \square + 
 \underline{{\otimes} \mu(y) }  \alpha(\Delta^+(n,y) ) )
} ,
\end{eqnarray*}
and define ``above'' and ''below'' actions, such that
$\lambda( (m\, \underline{{\cdot}{\scriptstyle{Z}}}\, n )^\alpha ,x) = \lambda(m^\alpha,x) . \alpha(n)$
and
$\lambda( (m\, \underline{{\cdot}{\scriptstyle{Z}}}\, n)^\alpha , y) = \alpha(m) . 
\lambda(  n^\alpha , y) $.
The fact that the operation $ \underline{{\cdot}{\scriptstyle{Z}}}$ is associative
is used to verify
that these actions are monoidal; this ensures that
we indeed have a block product.
\end{proof}

 \subsection{The extended algebra}   \label{sec:extalg}

Let $\calK = (K,U)$ be a forest algebra and let
$\gamma : \Sfree \rightarrow \calK$ be surjective.
Given $(m,\nu) \in \BbbM_{A,B}$,
 we build all  forests $s$ that can be obtained by insertion
of elements of \Snat\
at the ports of $(m,\nu)$
and gather their images images $\gamma(s)$ in a set $\gamma_\%(m)$.
This is done under a restriction: we 
define a mapping $\gamma_\% : B \rightarrow \calP(K)$
and demand that the forest $t(x)$ inserted at $x$
satisfy $\gamma(t) \in \gamma_\%(\nu(x))$.
To simplify the reasonings and definitions, and
avoid having to deal with irrelevant special cases,
we assume that $\gamma_\%^{-1}(F) \neq \emptyset$
for every nonempty $F \subseteq K$, and
that there exists
a partition $C = \bigcup_{b \in B} C_b$
where $|C_b| = |\gamma_\%(b) |$ for every $b \in B$,
as well as a bijective leaf extension
$\gamma : C_b \rightarrow \gamma_\%(b)$.
We define
the notation $D = \bigcup_{b \in B} \{b\} \times C_b$,
so that we can write that we work on
$\BbbF_{A,D}$ rather than on $\BbbF_{A,B \times C}$,
and say that the $\gamma_\% $ defined above is the the
\emph{universal leaf extension} of $\gamma$
to $D$. 
We restrict our work to labelings 
$\psi : ports(m) \rightarrow C$ that are consistent with $\nu$,
in the sense that $\psi(x) \in C_{\nu(x)}$ for every port $x$.
Given $(m,\nu) \in \BbbM_{A,B}$, we denote by 
$\Psi(m,\nu) = \Psi(m)$  the set of all  mappings
$\psi$ consistent with $\nu$, and given
$\psi \in \Psi(m)$,
we use the notation $\Brvpsi(m,\nu)$ for the multicontext
$ (m,\nu,\psi) \in \BbbM_{A,D} $.
\newline
At some point we will look at more than one algebra at once,
e.g. \calG\ and \calK: then we will wlog assume that
the universal extensions $\varphi_\%$ and $\gamma_\%$
are defined in such a way that 
$(\varphi_\%^{-1}(F) \cap
\gamma_\%^{-1}(F) ) \neq \emptyset$
for every nonempty $F \subseteq G \times K$.
\newline
From the universal extension of $\gamma $ we define the mapping
$\gamma_\%   : \BbbM_{A,D} \rightarrow \calP(K)$
by
$\gamma_\%  (m,\nu) =  \{ \gamma ( \Brvpsi(m,\nu)) : \psi \in \Psi(m,\nu) \} $.
This is a homomorphism:
we verify this with the horizontal addition,
leaving the rest of the proof to the reader.
\begin{eqnarray*}
\gamma_\% (m) + \gamma_\% (m')
& = & 
 \{ \gamma ( \Brvpsi(m)) : \psi \in \Psi(m) \} 
 +
  \{ \gamma ( \Brvpsi'(m')) : \psi' \in \Psi(m') \}  \\
& = & 
 \{ \gamma ( \Brvpsi(m)) +   \gamma ( \Brvpsi'(m')) : \psi \in \Psi(m) ,\, \psi' \in \Psi(m') \} \\
& = & 
 \{ \gamma ( \Brvpsi(m) + \Brvpsi'(m')) : \psi \in \Psi(m) ,\, \psi' \in \Psi(m') \} \\
 & = & 
 \{ \gamma ( \Brvpsi''(m+m')) : \psi'' \in \Psi(m+m') \} \\ 
 & = & 
\gamma_\% (m+m')   .
\end{eqnarray*}
The image of  $\BbbF_{A,D}$ by $\gamma_\%$
is  the \emph{extended algebra} of \calK, 
with the notation $\calK_\%  = (K_\%,U_\%)$.
In the special case of ${\calH}{}^n_{\tau,\pi}$, 
since $ \tau,\pi$ are fixed parameters
most of the time,
we use the 
notations $\calH^n_\%$ and  $\alpha^n_\%$ 
instead of ${\calH}{}^n_{\tau,\pi}{}_\%$
and
${\alpha}{}^n_{\tau,\pi}{}_\%$.
%
%


\begin{proposition} \label{prp:KtoKbis}
Let \calH\ and
\calK\ be finite forest algebras.
If $\calK \prec \calH $,
then
$\calK_\%  \prec \calH_\% $.
\end{proposition}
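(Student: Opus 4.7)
The plan is to realize $\calK_\%$ as the homomorphic image of a sub-forest-algebra of $\calH_\%$, transporting the division $\calK \prec \calH$ through the $\%$-construction.

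Fix a subalgebra $\calH' = (H',V')$ of $\calH$ and a surjective forest-algebra homomorphism $\eta : \calH' \to \calK$. Because $\Sfree$ is free on $\{a\square : a \in A\}$ and $\eta$ is surjective, the homomorphism $\gamma : \Sfree \to \calK$ used to define $\calK_\%$ lifts to $\alpha : \Sfree \to \calH'$ with $\eta \circ \alpha = \gamma$: just pick $\alpha(a\square) \in \eta^{-1}(\gamma(a\square))$ for each $a \in A$. I would then synchronize the universal leaf extensions by splitting $B = B_0 \sqcup B_1$, with $B_0$ carrying the data of $\calK$ --- for every nonempty $F \subseteq K$ a label $b \in B_0$ with $\gamma_\%(b) = F$ and $\alpha_\%(b) = \eta^{-1}(F) \cap H'$ --- and $B_1$ providing, for every nonempty $F' \subseteq H$ not yet covered, a label with $\alpha_\%(b) = F'$. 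This preserves both universal-extension hypotheses and ensures $\eta(\alpha_\%(b)) = \gamma_\%(b)$ for every $b \in B_0$.

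The heart of the argument is the identity
\[
\bar{\eta}\bigl(\alpha_\%(m,\nu)\bigr) \;=\; \gamma_\%(m,\nu), \qquad \text{for every } (m,\nu) \in \BbbM_{A,B_0},
\]
together with its analogue for contexts, where $\bar{\eta} : \calP(H') \to \calP(K)$ is the set-valued version of $\eta$ defined by $X \mapsto \eta(X)$. This follows by unpacking both sides: the bijections $\alpha|_{C_b} : C_b \to \alpha_\%(b)$ and $\gamma|_{C_b} : C_b \to \gamma_\%(b)$ commute with $\eta$ by construction, so as $\psi$ ranges over $\Psi(m,\nu)$ the tuples $\alpha(\Brvpsi(m,\nu))$ and $\gamma(\Brvpsi(m,\nu))$ are enumerated in matching fashion and related pointwise by $\eta$; the horizontal and vertical operations of $\calH'$ are then mapped onto those of $\calK$ because $\eta$ is a forest-algebra homomorphism.

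Finally I would conclude. Restricting the $\%$-construction to multicontexts labelled from $(A,B_0)$ produces a sub-forest-algebra $\calH'_\% \subseteq \calH_\%$: its carriers are subsets of $H'$, hence of $H$, and the Minkowski horizontal sum together with the set-valued action of $V'$ agree with their counterparts in $\calH_\%$. The identity above promotes $\bar{\eta}$ to a surjective forest-algebra homomorphism $\calH'_\% \to \calK_\%$, and composing the subalgebra inclusion $\calH'_\% \hookrightarrow \calH_\%$ with this surjection yields $\calK_\% \prec \calH_\%$. The main obstacle is the bookkeeping around the universal leaf extensions --- making them rich enough on both sides while keeping them coherent along $\eta$ --- and checking that $\bar{\eta}$ is compatible with the set-valued vertical action, namely that $\eta\bigl(\alpha_\%(p)\cdot X\bigr) = \gamma_\%(p) \cdot \eta(X)$ for every context $p$ over $(A,B_0)$; this last verification is essentially formal, since $\eta$ already intertwines the underlying action of $V'$ on $H'$ with that of $U$ on $K$.
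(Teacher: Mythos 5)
Your proof is correct, but it takes a genuinely different route from the paper's. The paper disposes of the statement in a few lines at the level of congruences: invoking the convention (set up just before the proposition) that the universal leaf extensions of the two algebras are chosen compatibly, it phrases $\calK \prec \calH$ as ``$\alpha(t)=\alpha(t')$ implies $\gamma(t)=\gamma(t')$ for all $t,t' \in \BbbM_{A,D}$'' and then runs the one-step chain $\alpha_\%(m)\subseteq\alpha_\%(m') \Rightarrow \forall\psi\,\exists\psi' : \gamma\circ\Brvpsi(m)=\gamma\circ\Brvpsi'(m') \Rightarrow \gamma_\%(m)\subseteq\gamma_\%(m')$, so the nuclear congruence of $\alpha_\%$ refines that of $\gamma_\%$ and $\calK_\%$ is a homomorphic image of $\calH_\%$. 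You instead unfold the division explicitly as ``quotient of a subalgebra'': you fix $\calH'\le\calH$ and $\eta:\calH'\rightarrow\calK$, lift $\gamma$ through $\eta$ on the free generators, split the port alphabet so that the two universal extensions are coherent along $\eta$, and exhibit the sub-extended-algebra generated by the $B_0$-labelled multicontexts together with the set-valued surjection $\bar\eta$ witnessing $\calK_\% \prec \calH_\%$. What each buys: the paper's argument is shorter and stays entirely inside the congruence formalism, but as written it tacitly assumes a surjective $\alpha$ whose kernel refines that of $\gamma$ (the quotient case), the subalgebra part being absorbed into the ``wlog'' choice of extensions; your version treats the subalgebra-then-quotient structure of $\prec$ head-on and produces an explicit divisor witness, at the price of more bookkeeping. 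One small wrinkle in that bookkeeping: since $|\alpha_\%(b)| = |\eta^{-1}(F)\cap H'|$ generally exceeds $|\gamma_\%(b)| = |F|$, the leaf-label sets $C_b$ cannot carry \emph{bijective} extensions for both algebras simultaneously as you assert; you should either use separate label sets for the two algebras or take the $\gamma$-extension on $C_b$ to be $\eta\circ\alpha|_{C_b}$, which is merely surjective onto $F$ --- harmless, because $\gamma_\%(m,\nu)$ depends only on the set of admissible values at each port, not on how many labels realize each value, so your key identity $\bar\eta(\alpha_\%(m,\nu))=\gamma_\%(m,\nu)$ and the rest of the argument go through unchanged.
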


\begin{proof}
Let $\gamma : \Sfree \rightarrow \calK $ and
$\alpha : \Sfree \rightarrow \calH$ be
be surjective homomorphisms.
Assume
that for all $t,t' \in \BbbM_{A, D }$,
$\alpha(t) = \alpha(t') $ implies
$\gamma(t) = \gamma(t')$.
Then with $m,m' \in \BbbM_{A, B  }$, we have
\begin{eqnarray*}
\alpha_\% (m) \subseteq \alpha_\% (m')
&  \Leftrightarrow & 
\forall \psi,\, \exists \psi' : 
\alpha \circ \Brvpsi(m) = \alpha \circ \Brvpsi'(m') \\
& \Rightarrow &
\forall \psi,\, \exists \psi' : 
\gamma \circ \Brvpsi(m) = \gamma \circ \Brvpsi'(m') \\
& \Rightarrow &
\gamma_\%  (m) \subseteq \gamma_\%  (m')   .
\end{eqnarray*}
Inclusion in the other direction is proved in the same way,
so that if
$\alpha_\% (m) = \alpha_\% (m')$,
then
$\gamma_\%  (m) = \gamma_\%  (m') $.
\end{proof}

\begin{proposition} \label{prp:HtoH}
For every $m,m' \in \BbbM_{A,B}$ and $n \ge 1$, if
$\alpha^n_\% (m ) = \alpha^n_\% (m' ) $,
then
$m \approx^n_{\tau,\pi}  m'$.
\end{proposition}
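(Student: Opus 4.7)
\begin{proofsketch}
The proof proceeds by induction on $n$.

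For the base case $n = 1$, we use the computation from the proof of Proposition \ref{prp:Hsharp}: for every $\psi \in \Psi(m)$,
\[
\alpha^1_{\tau,\pi}(\Brvpsi(m)) \ =\ p^{\mathrm{int}}_1(m) \,+\, \sum_{x \in \mathrm{ports}(m)} \alpha^1_{\tau,\pi}(\psi(x)) ,
\]
where $p^{\mathrm{int}}_1(m)$ is the count vector over the interior nodes, viewed in $H^1_{\tau,\pi} \cong (\N_{\tau,\pi})^{|A|}$. Writing $n_b(m)$ for the number of ports of $m$ labelled $b \in B$, the set of all such images is the Minkowski sum
\[
\alpha^1_\%(m) \ =\ p^{\mathrm{int}}_1(m) \,+\, \sum_{b \in B} n_b(m) \cdot \gamma_\%(b) .
\]
Using the universal leaf extension (which realizes every nonempty $F \subseteq H^1_{\tau,\pi}$ as some $\gamma_\%(b)$), one reconstructs the tuple $(p^{\mathrm{int}}_1(m), (n_b(m))_{b \in B})$ modulo $\equiv_{\tau,\pi}$ from this set; since this tuple characterizes the $\approx^1_{\tau,\pi}$-class of $m$ in $\BbbF_{A,B}$, equality of the set-valued images yields $m \approx^1_{\tau,\pi} m'$.

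For the inductive step, suppose the claim holds for $n \ge 1$ and assume $\alpha^{n+1}_\%(m) = \alpha^{n+1}_\%(m')$. The algebra $\calH^n_{\tau,\pi}$ is the canonical quotient of $\calH^{n+1}_{\tau,\pi}$ obtained by forgetting the level-$(n{+}1)$ triples, so $\calH^n_{\tau,\pi} \prec \calH^{n+1}_{\tau,\pi}$, and Proposition \ref{prp:KtoKbis} lifts this to $\calH^n_\% \prec \calH^{n+1}_\%$. Hence $\alpha^n_\%(m) = \alpha^n_\%(m')$, and the induction hypothesis gives $m \approx^n_{\tau,\pi} m'$.

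It remains to verify $m^{\alpha^n_{\tau,\pi}} \approx^1_{\tau,\pi} m'^{\alpha^n_{\tau,\pi}}$. At each interior node $x$ of a completion $\Brvpsi(m)$, the $\alpha^n_{\tau,\pi}$-relabeled label is the triple $\bigl\langle \lambda(m,x), \alpha^n_{\tau,\pi}(\Delta(\Brvpsi(m),x)), \alpha^n_{\tau,\pi}(\nabla(\Brvpsi(m),x)) \bigr\rangle$, with an analogous description at the ports and, for contexts, with the trunk/off-trunk distinction. Because $\Delta(\Brvpsi(m),x)$ and $\nabla(\Brvpsi(m),x)$ are the $\psi$-completions of the sub-multicontexts $\Delta(m,x)$ and $\nabla(m,x)$, and the latter use disjoint port sets, the pair of middle-and-right components of the triple at $x$ ranges precisely over $\alpha^n_\%(\Delta(m,x)) \times \alpha^n_\%(\nabla(m,x))$ as $\psi$ varies. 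Consequently, $\alpha^{n+1}_\%(m)$ encodes, triple by triple, the count modulo $\equiv_{\tau,\pi}$ of positions in $m^{\alpha^n_{\tau,\pi}}$ carrying each possible label; applying the same reading to $m'$ yields $m^{\alpha^n_{\tau,\pi}} \approx^1_{\tau,\pi} m'^{\alpha^n_{\tau,\pi}}$, closing the induction.

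The principal obstacle lies in the second half of the inductive step: a single $\psi$ couples the triples at many nodes simultaneously, since $\nabla$ at one node depends on $\psi$-values at ports outside its subtree, so separating counts triple by triple requires engineered choices of $\psi$. The universal extension is essential here, as it permits each port's image to be targeted independently within any prescribed nonempty subset; combined with the induction hypothesis applied to the sub-multicontexts $\Delta(m,x)$ and $\nabla(m,x)$, this converts matching set-valued images into matching per-triple counts.
\end{proofsketch}
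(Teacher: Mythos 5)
Your base case contains a genuine gap. You assert that from the set $\alpha^1_\% (m) = p^{\mathrm{int}}_1(m) + \sum_{b \in B} n_b(m)\cdot\alpha^1_\%(b)$ one "reconstructs" the tuple $(p^{\mathrm{int}}_1(m),(n_b(m))_{b\in B})$ modulo $\equiv_{\tau,\pi}$, but no argument is given, and this decoding is not a routine step: under the universal leaf extension a port $x$ whose admissible set is the singleton $\{\alpha^1_{\tau,\pi}(a)\}$ contributes to every completion exactly as an interior node labelled $a$ does, so the set-valued image does not by itself visibly separate the interior contribution from the port contributions. The paper never attempts such a reconstruction; it argues completion by completion: from $\alpha^1_\%(m)=\alpha^1_\%(m')$ it picks, for each $\psi$, some $\psi'$ with matching image, compares the count vectors $p_1(\Brvpsi(m))$ and $p_1(\Brvpsi'(m'))$ over $A\cup C$, and recovers the $B$-components through the identity $p_1(m)_b=\sum_{c\in C_b}p_1(\Brvpsi(m))_c$ and the additivity of $\equiv_{\tau,\pi}$.

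The same problem recurs, more seriously, in your inductive step. Its first half (using Proposition \ref{prp:KtoKbis} to get $\alpha^n_\%(m)=\alpha^n_\%(m')$ and hence $m\approx^n_{\tau,\pi}m'$) is fine and is a shortcut the paper does not use. But the second half rests entirely on the claim that $\alpha^{n+1}_\%(m)$ "encodes, triple by triple, the count modulo $\equiv_{\tau,\pi}$" of relabelled positions; this is precisely the point to be proved, and your closing paragraph concedes it (a single $\psi$ fixes all triples at once, so "engineered choices of $\psi$" would be needed) without supplying those choices — the per-node product observation about $\alpha^n_\%(\Delta(m,x))\times\alpha^n_\%(\nabla(m,x))$ concerns one node at a time and does not yield the simultaneous per-triple count matching. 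The paper takes a different route that avoids any decoding: it proves by induction the stronger per-completion statement that a single coincidence $\alpha^{n-1}_{\tau,\pi}(\Brvpsi(m))=\alpha^{n-1}_{\tau,\pi}(\Brvpsi'(m'))$ already forces $m\approx^{n-1}_{\tau,\pi}m'$; applied to $\Delta(m,x)$ and $\nabla(m,x)$, this shows that two positions carrying the same relabelled letter in the completions carry the same relabelled letter in $m$ and $m'$ themselves, and then the equivalences $p_n(\Brvpsi(m))\equiv_{\tau,\pi}p_n(\Brvpsi'(m'))$, available for each $\psi$ with a matching $\psi'$, transfer to the relabelled versions of $m$ and $m'$ by additivity. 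Note also that your sketch tacitly identifies the relabelling of the completion (triples over $H^n_{\tau,\pi}\times V^n_{\tau,\pi}$) with the relabelling of the multicontext itself (whose second and third components are $\approx^n_{\tau,\pi}$-classes of sub-multicontexts over $A\cup B$, ports included); that identification is exactly what the paper's well-definedness step provides and what is missing from your argument.
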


\begin{proof}
%
By induction on $n$.
For the $n=1$ case, 
as in the proofs of Propositions \ref{prp:Hsharp},
we associate to every $m \in \BbbM_{A,B}$
a vector $p_1(m)$ with component labels in $A \cup B$,
where $p_1(m)_a$, $a \in A$, and $p_1(m)_b$, $b \in B$, 
are respectively the number of nodes $y$
with label $\lambda(y) = a$ and the number of ports $x$
with $\nu(x) = b$; we define for $t \in \BbbM_{A,D}$ a vector 
$p_1(t)$ in the same way, with component labels in
$A \cup C$.
Let $m \in \BbbM_{A,B}$: its image 
$\alpha^1_\%(m) = \{ {\alpha}^{1}_{\tau,\pi} (\Brvpsi(m)) : \psi \in \Psi(m) \}$
is determined by the set of all vectors
$p_1(\Brvpsi(m))$ that can be obtained from $p_1(m)$.
While $p_1(m)_a = p_1(\Brvpsi(m))_a$ for every $a \in A$,
given $b \in B$ we have 
\[
p_1(m)_b = \sum_{c \in C_b} p_1(\Brvpsi(m))_c ,
\]
so that for any    $\psi' \in \Psi(m')$,
$ p_1(\Brvpsi(m))  \equiv_{\tau,\pi} p_1(\Brvpsi'(m'))$ 
implies
$p_1(m) \equiv_{\tau,\pi} p_1(m') $,
from which we conclude 
$\alpha^1_\%(m) =  \alpha^1_\%(m')  \Rightarrow m \approx^1_{\tau,\pi} m'$.
\newline
Let $n \ge 2$:
the induction hypothesis
states that if
${\alpha}^{n-1}_{\tau,\pi} (\Brvpsi(m) )= 
{\alpha}^{n-1}_{\tau,\pi} (\Brvpsi'( m')) $
for some $\psi \in \Psi(m)$ and $\psi' \in \Psi(m')$,
then
$m \approx^{n-1}_{\tau,\pi} m'$.
Let 
$p_n(\Brvpsi(m))$ denote the 
vectors  with a component for every
element of the appropriate relabeling alphabet,
which counts  the number of its occurrences 
in the  version of   $\Brvpsi(m)$ relabeled according to
$\alpha^{n-1}_{\tau,\pi}$.
%
Then
\begin{eqnarray*}
\alpha^n_\% (m) = \alpha^n_\% (m') 
& \Rightarrow  & 
\forall\, \psi,\, \exists\, \psi' : 
\alpha^n_{\tau,\pi} ( \Brvpsi (m )) = \alpha^n_{\tau,\pi} ( \Brvpsi' (m' )) \\
& \Rightarrow  & 
\forall\, \psi,\, \exists\, \psi' : 
p_n ( \Brvpsi (m) )   \equiv_{\tau,\pi}  p_n ( \Brvpsi' (m' )) .
\end{eqnarray*}
Two nodes $x$ and $x'$ 
in $ \Brvpsi (m) $  and $ \Brvpsi' (m') $
carry  the same 
element of the relabeling alphabet
only if they carry the same label in $m$ and $m'$,
and
\[
{\alpha}^{n-1}_{\tau,\pi} ( \Delta(\Brvpsi (m),x)) = 
{\alpha}^{n-1}_{\tau,\pi} ( \Delta(\Brvpsi' (m'),x')) 
\quad \text{and} \quad
{\alpha}^{n-1}_{\tau,\pi} (  \nabla(\Brvpsi (m),x)) = 
{\alpha}^{n-1}_{\tau,\pi} (  \nabla(\Brvpsi' (m'),x')) ;
\]
by the induction hypothesis, this implies
\[
\Delta(m,x)) \approx^{n}_{\tau,\pi} \Delta(m',x') 
\quad \text{and} \quad
 \nabla(m,x)) \approx^{n}_{\tau,\pi}  \nabla(m',x') ,
\]
hence $x$ and $x'$
carry with the same 
label in the
relabeled versions of $m$ and $m'$.
Therefore,
$\alpha^n_\% (m) = \alpha^n_\% (m')
\Rightarrow
m \approx^{n}_{\tau,\pi} m'$.
%
%
%
\end{proof}

\begin{proposition} \label{prp:HbarFO}
For every $n \ge 1$,
 $\calH^n_\% \in  {{\mathbf{**}}}\crochet{\N_{\tau,\pi} }$.
\end{proposition}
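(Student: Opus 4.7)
The plan is to proceed by induction on $n$, invoking the Remark at the end of Section~\ref{sec:block}: for fixed period $\pi$, the variety ${{\mathbf{**}}}\bicrochet{\N_{\tau,\pi}}$ does not depend on the threshold, so larger thresholds can be absorbed at the cost of additional block-product levels. The base case will be settled by an explicit computation, the inductive step by a two-stage factorization that matches the recursive structure of the $\approx^n_{\tau,\pi}$-relabeling.

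For the base case $n=1$, a direct computation yields
\[
\alpha^1_\%(m) \;=\; p_A(m) \,+\, \bigoplus_{b \in B} \gamma_\%(b)^{\oplus\, p_b(m)},
\]
where $p_X(m)$ denotes the count vector over labels in $X$, and $\oplus$ is Minkowski sum in the finite commutative monoid $(\mathcal{P}(H^1_{\tau,\pi}),\oplus)$. Because this monoid is finite, each sequence $(\gamma_\%(b)^{\oplus n})_{n\in\N}$ is eventually periodic, and by commutativity of $H^1_{\tau,\pi}$ its period divides $\pi$. Taking a common threshold $\tau_1\ge\tau$ and common period $\pi_1\mid\pi$ across $b\in B$, the image $\alpha^1_\%(m)$ depends only on the count vector of $m$ over $A\cup B$ modulo $\equiv_{\tau_1,\pi_1}$. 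Proposition~\ref{prp:basic} then gives $\calH^1_\%\prec \calH^1_{\tau_1,\pi_1}$ over $A\cup B$, so $\calH^1_\% \in {{\mathbf{**}}^1}\bicrochet{\N_{\tau_1,\pi_1}}\subseteq {{\mathbf{**}}}\bicrochet{\N_{\tau,\pi}}$.

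For the inductive step, assuming the statement at level $n-1$, the plan is to show that $\alpha^n_\%(m)$ factors through a two-stage computation. First, relabel every node $y$ of $m$ with the $\%$-enriched triple $\langle\lambda(y),\;\alpha^{n-1}_\%(\Delta(m,y)),\;\alpha^{n-1}_\%(\nabla(m,y))\rangle$, a computation realized by $\calH^{n-1}_\%$, which lies in the variety by the induction hypothesis. Second, apply the base-case argument to the resulting relabeled object, counting the occurrences of each enriched label modulo a fresh threshold and period; this is one additional block product with a counter algebra $\ODalg_{\N_{\tau_n,\pi_n}^N}$. The ensuing division $\calH^n_\% \prec \calH^{n-1}_\% \Blockprod \ODalg_{\N_{\tau_n,\pi_n}^N}$ places $\calH^n_\%$ in ${{\mathbf{**}}}\bicrochet{\N_{\tau,\pi}}$.

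The main obstacle is justifying the factorization in the inductive step, specifically the claim that $\alpha^n_\%(m)$ is determined by the aggregate counts of the enriched node labels. The difficulty is that the existential over port labellings $\psi\in\Psi(m)$ is \emph{global}: a choice made at one port contributes to the $\alpha^{n-1}_{\tau,\pi}$-image at every one of its ancestors, so the $(n-1)$-level values arising at distinct nodes $y$ are correlated through $\psi$. Showing that these correlations wash out at the aggregate level, so that the global existential decouples into independent local choices at the cost only of enlarging the threshold and period, is likely the technical heart of the argument; I expect it to run as an Ehrenfeucht-Fra\"{\i}ss\'e--style analysis parallel to the induction already carried out in the proof of Proposition~\ref{prp:HtoH}.
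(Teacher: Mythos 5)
Your base case is essentially the paper's: there too one shows that $\alpha^1_\%(m)$ depends only on the count vector of $m$ over $A\cup B$, with the threshold enlarged to some $\sigma \ge \tau\cdot|K|$ (so that a port label $b$ occurring often enough leaves room for $\tau$ occurrences of each $c \in C_b$) and the period kept at $\pi$, giving $\calH^1_\% \in {{\mathbf{**}}^1}\bicrochet{\N_{\sigma,\pi}}$. The inductive step, however, has a genuine gap, and you have located it yourself without closing it: the claim that $\alpha^n_\%(m)$ is determined by the $\equiv_{\tau_n,\pi_n}$-counts of the enriched labels $\crochet{\lambda(y),\alpha^{n-1}_\%(\Delta(m,y)),\alpha^{n-1}_\%(\nabla(m,y))}$, i.e.\ that $\calH^n_\% \prec \calH^{n-1}_\% \Blockprod \ODalg_{\N_{\tau_n,\pi_n}^N}$, is exactly the assertion that the global existential over $\psi$ decouples into local data, and nothing in your argument establishes it. Knowing, node by node, the \emph{sets} of achievable $(n-1)$-level values does not tell you which count vectors of \emph{realized} triples are jointly achievable by a single $\psi$, since one port contributes to $\Delta$ at all its ancestors and to $\nabla$ everywhere else; the appeal to Proposition \ref{prp:HtoH} does not help, because that proposition goes in the pointwise ("for every $\psi$ there is a $\psi'$") direction and never has to control joint realizability across nodes.

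The paper's proof avoids this factorization altogether. It inducts not on $n$ in your sense but on the ideal structure of the horizontal monoid $H^n_{\tau,\pi}$: assuming that for every $h$ outside the minimal ideal $I$ the language $L_h=\{m : h\in\alpha^n_\%(m)\}$ is recognized by some $\calK\in{{\mathbf{**}}^q}\bicrochet{\N_{\sigma,\pi}}$, it handles $k\in I$ by decomposing $m$ at the \emph{pathheads} for $I$: the trunk $\overline{m}$ above the pathheads is processed by $\calH^n_{\tau,\pi}$ itself on leaf-completions by values in $H^n_{\tau,\pi}\setminus I$, while the subtrees $\Delta(m,z)$ hanging below distinct boundary ports have pairwise disjoint port sets, so the existential choices of $\psi$ on them are genuinely independent (a labeling in $\Psi(\Delta(m,z)+\Delta(m,z'))$ is just a pair of labelings). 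The decoupling you need is thus only ever invoked where it is trivially valid, and recognition of whether a node is a pathhead, and then of whether $k\in\alpha^n_\%(m)$, is assembled by two further block products with counter algebras. If you want to salvage your route, you would have to prove the decoupling lemma in full generality, which is precisely what the pathhead decomposition is designed to sidestep.
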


\begin{proof}
By induction on the structure of the horizontal monoid of
$\calH^n_\%$; the induction base
is done on  $\calH^1_\%$, however.
\newline
As in the proofs of Propositions \ref{prp:Hsharp}\ and \ref{prp:HtoH}, we
associate to every $m \in \BbbM_{A,B}$
a vector $p_1(m)$,
where it was seen that
for every $b \in B$ we have 
\[
p_1(m)_b = \sum_{c \in C_b} p_1(\Brvpsi(m))_c .
\]
Then 
$ \alpha^1_\%(m) =  \alpha^1_\%(m') $
when $p_1(m)_a  \equiv_{\tau,\pi} p_1(m')_a $ for every $a \in A$
and
$p_1(m)_b  \equiv_{\sigma,\pi} p_1(m')_b $
for every $b \in B$, where $\sigma \ge \tau {\cdot} |K|$, so that 
having at least $\sigma$ ports with 
$\nu(x) = b$ ensures that
there is enough room for $\tau$ occurrences of $\psi(x) = c$,
for every $c \in C_b$.
Therefore,  $\calH^1_\% \in  {{\mathbf{**}}^1}\bicrochet{\N_{\sigma,\pi} }$.
\newline
At the induction step, let $I$ be the minimal ideal of the horizontal monoid
$H^n_{\tau,\pi}$. We assume the
existence of $q \ge n$ such that for every $h \not\in I$,
the language 
$L_h = \{ m \in \BbbM_{A,B}  : h \in  \alpha^n_\%(m) \}$
is recognized by an algebra  $\calK \in {{\mathbf{**}}^q}\bicrochet{\N_{\sigma,\pi} }$.
%
%
Let $m  \in \BbbM_{A,B} $ for which we know that 
$ \alpha^n_\% (m) \cap I \neq \emptyset$,
that is,
$ {\alpha}^n_{\tau,\pi}(\Brvpsi(m)) \in I$ for at least one $\psi \in \Psi(m)$.
We say that a node $y $ of $m$ is a \emph{pathhead} 
for $\calH^n_\%$
when
$ \alpha^n_\% (\Delta^+(m,y)) \cap I \neq \emptyset$
and
$  \alpha^n_\%(\Delta^+(m,z)) \subseteq  H^n_{\tau,\pi} \setminus I$
for every son $z$ of $y$.
We  build from $m$ 
a multicontext $\overline{m}$ whose
interior nodes are the pathheads and their ancestors,
and where a port $x$ is created 
along every edge $(y,z)$ such that
$ y$ is an interior node of $\overline{m}$ and 
its son $z$ is not, so that 
$ \Delta(m,x)  = \Delta^+(m,z) $.
Given $k \in I$
and a mapping $\chi : ports(\overline{m}) \rightarrow H^n_{\tau,\pi} \setminus I$,
the algebra $\calH^n_{\tau,\pi}$ can recognize  
whether the resulting forest $\Brvchi(\overline{m})$ 
belongs to the set $ ({\alpha}^n_{\tau,\pi})^{-1}( k)$.
Next, for every pair $z,z' \in ports(\overline{m}) $ and  $h,h' \not\in I$, 
whether $h \in {\alpha}^n_\%(\Delta(m,z) )$
is determined by  \calK, and the same holds for the question of
whether 
$h' \in {\alpha}^n_\%(\Delta(m,z') )$.
The two hold simultaneously iff there exist 
labelings $\psi \in \Psi(\Delta(m,z) )$ and $\psi' \in \Psi(\Delta(m,z') )$  
that 
ensure
$\alpha^n_{\tau,\pi} \circ \Brvpsi(\Delta(m,z)) = h$ and
$\alpha^n_{\tau,\pi} \circ \Brvpsi'(\Delta(m,z')) = h'$.
Since the sets
$ports(\Delta(m,z))$ and $ports(\Delta(m,z'))$ are disjoint,
the hypotheses on $\psi$ and $\psi'$
are independent, and can be reworded as the 
existence of a suitable labeling in
$\Psi(\Delta(m,z)+\Delta(m,z'))$.
\newline
From the induction hypothesis, 
whether a node  is a pathhead is recognized by an algebra
of the form $\calK \Blockprod \ODalg_M$, with
$M \in \crochet{\N_{\sigma,\pi} }$;
a further block product 
$(\calH^n_{\tau,\pi}  \times (\calK \Blockprod \ODalg_M) )\Blockprod \ODalg_{M'}$, 
with
$M' \in \crochet{\N_{\sigma,\pi} }$,
recognizes whether $k \in \alpha^n_\%(m)$,
where $k \in I$.
%
\end{proof}

\begin{proposition}   \label{prp:firstSD}
If $\calG \in  {{\mathbf{**}}}\bicrochet{\N_{\tau,\pi} }$,
then
 $\calG_\% \in  {{\mathbf{**}}}\bicrochet{\N_{\tau,\pi} }$
 and
 $\MVM W_\% \in \mathbf{Sol}_{\tau,\pi}$.
\end{proposition}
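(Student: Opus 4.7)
The proof will be a straightforward composition of the preceding propositions, with essentially no new content beyond chaining them together.

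My plan is to first unwrap the hypothesis via Proposition \ref{prp:basic}: since $\calG \in {{\mathbf{**}}}\bicrochet{\N_{\tau,\pi}} = \bigvee_{n \ge 1} {{\mathbf{**}}^n}\bicrochet{\N_{\tau,\pi}}$, there exists $n \ge 1$ such that $\calG \prec \calH^n_{\tau,\pi}$. This step is immediate once we recall that a finite algebra in a finitely generated variety already divides one of its finite generators (and the $\calH^n_{\tau,\pi}$ serve precisely that role here).

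Next I would transport this division across the $(-)_\%$ construction. By Proposition \ref{prp:KtoKbis}, $\calG \prec \calH^n_{\tau,\pi}$ implies $\calG_\% \prec (\calH^n_{\tau,\pi})_\% = \calH^n_\%$. Proposition \ref{prp:HbarFO} then gives $\calH^n_\% \in {{\mathbf{**}}}\bicrochet{\N_{\tau,\pi}}$, and since any variety of forest algebras is closed under division we obtain $\calG_\% \in {{\mathbf{**}}}\bicrochet{\N_{\tau,\pi}}$, which is the first conclusion.

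For the second conclusion, I simply feed $\calG_\% = (G_\%,W_\%)$ back into Proposition \ref{prp:multivert}: since $\calG_\% \in {{\mathbf{**}}}\bicrochet{\N_{\tau,\pi}}$, its multivertical monoid $\MVM W_\%$ lies in $\mathbf{Sol}_{\tau,\pi}$.

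There is really no hard step: every ingredient has been set up in the preceding propositions, and the proof amounts to verifying that the chain $\calG \prec \calH^n_{\tau,\pi} \Rightarrow \calG_\% \prec \calH^n_\% \Rightarrow \calG_\% \in {{\mathbf{**}}}\bicrochet{\N_{\tau,\pi}} \Rightarrow \MVM W_\% \in \mathbf{Sol}_{\tau,\pi}$ goes through. The only point worth explicitly flagging in the write-up is that the passage from a membership in the union-variety $\bigvee_n {{\mathbf{**}}^n}\bicrochet{\N_{\tau,\pi}}$ to the existence of a single $n$ with $\calG \prec \calH^n_{\tau,\pi}$ uses the finiteness of $\calG$; this is routine but deserves a line.
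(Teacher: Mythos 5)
Your proof is correct and follows essentially the same route as the paper, whose entire argument is to cite Propositions \ref{prp:multivert}, \ref{prp:KtoKbis} and \ref{prp:HbarFO}; you have merely made the chaining explicit (including the routine extraction of a single $n$ with $\calG \prec \calH^n_{\tau,\pi}$, which the paper leaves implicit).
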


\begin{proof}
This is a consequence of Propositions
\ref{prp:multivert},
\ref{prp:KtoKbis}, and \ref{prp:HbarFO}.
\vspace{0.02in}
\end{proof}

Let $\sigma$ and  $\rho$ be  the smallest
integers  such that
$  \stackrel{\sigma,\rho}{\leftrightarrow} $
refines
 the canonical congruence of
$\calG_\%  $:
given Proposition \ref{prp:firstSD},
it makes sense to call them the threshold and period of $\calG_\%  $.
%
%
%
In the special case of $ {{\mathbf{**}}}\bicrochet{\N_{\tau,1} }$,
the condition  $\MVM W_\% \in \mathbf{A}$
supersedes both the aperiodicity of $W$
and the absence of vertical confusion on uniform multicontexts.
As mentioned earlier, Potthoff's algebra, 
described in Section \ref{sec:potthoff},
satisfies these two conditions while
its extended algebra has a   non-aperiodic vertical monoid.



\section{Recursive Proofs} \label{sec:main}

In this section,  we consistently use the following  notations.
The
algebra $\calG = (G,W)$
is the one whose membership the variety ${{\mathbf{**}}}\bicrochet{\N_{\tau,\pi}}$ is to be decided.
We associate to \calG\ a surjective homomorphism 
$\varphi : \Sfree \rightarrow \calG$ with nuclear congruence $\simeq_G$.
Given  $n \ge 1 $
and the variety ${{\mathbf{**}}}^n\bicrochet{\N_{\tau,\pi}}$,
let $\approx^n_{\tau,\pi}$, 
$\calH^n_{\tau,\pi} = \Sfree / {\approx^n_{\tau,\pi}}$
and $\alpha^n_{\tau,\pi} : \Sfree \rightarrow \calH^n_{\tau,\pi}$,
such that  $\approx^n_{\tau,\pi}$ is
the finest  congruence over \Sfree\  that satisfies
$\calH^n_{\tau,\pi} \in {{\mathbf{**}}}^n\bicrochet{\N_{\tau,\pi}}$
and
$\alpha^n_{\tau,\pi}$ is a surjective homomorphism. 
We use the notation ${{\mathbf{**}}}^n\bicrochet{\N_{\tau,\pi}}$
both for the variety of algebras and for the variety of
the forest or context languages that they recognize.

\subsection{$\approx_{\tau,\pi}^\N$-tight sets}   \label{sec:prelim}

It is already folklore  that non-membership in 
$\mathsf{FO}[\Panc] $
of 
an algebra $\calG = (G,W)$ ultimately has to do with mutually accessible elements of its
horizontal monoid, and to \calR-classes of $W$.
The following preliminaries confirm this, and introduce some
definitions and facts that are used later.
\newline
For every  $g \in G$ we define the language $L_g = \varphi^{-1}(g) $ 
and, for $K \subset G$,  $L_K = \bigcup_{k \in K} L_k$.
It is a basic fact that for any alphabet  $A$ of size at least $2$ and variety of algebras $\bfW$,
 $\calG \not\in \bfW$ iff for every surjective homomorphism $\varphi : \Sfree \rightarrow \calG$,
there exists an element $g \in G$ such that $L_g \not\in \sfW$.
Therefore, $\calG \not\in\ {{\mathbf{**}}}\bicrochet{\N_{\tau,\pi}}$ implies the existence of a partition of $G$ into 
$
\Gout = \{ g \in G : L_g \not\in\ {{\mathbf{**}}}\bicrochet{\N_{\tau,\pi}} \}
$
and its complement \Gin. 
Since we always work with $\tau \ge 1$, 
both 
$L_0 = \varphi^{-1}(0) $
and the  set \Sstar\ of all trees over $A$
belong to $ {{\mathbf{**}}}\bicrochet{\N_{\tau,\pi}}$,
so that  $\Gin \neq \emptyset$.
Given   $g \in \Gout$,
we define
$\Lmin_g $ as the subset of $L_g \cap \Sstar $
where $\varphi(\Delta^+(y)) \not\in \Gout$
for every non-root node.
A \emph{pathhead} for $g$
is a node $x $ such that
$\Delta^+(s,x) \in \Lmin_g$; there exists
an algebra in ${{\mathbf{**}}}\bicrochet{\N_{\tau,\pi}}$ that
can recognize  whether a node of $s$ is a
pathhead for $g$. We say that $g \in \Gout$
is minimal
when at least one forest in \Snat\ contains a pathhead for $g$. 
A subset of \Gout\ is minimal if 
it contains at least one minimal element.
There exists therefore  
an algebra in ${{\mathbf{**}}}\bicrochet{\N_{\tau,\pi}}$  which recognizes every language $L_h$,
as well as every $\Ltree_h = L_h \cap \Sstar$, $h \in \Gin$
and every $\Lmin_g$ for $g \in \Gout$ minimal.
%
%
We now show that in
every $\calG \not\in {{\mathbf{**}}}\bicrochet{\N_{\tau,\pi}}$ the set \Gout\
has at least one minimal
strongly connected  subset.

\begin{proposition} \label{prp:cfc}
In  $\calG  \not\in {{\mathbf{**}}}\bicrochet{\N_{\tau,\pi}}$,
at least two elements of $\Gout$ are mutually accessible.
\end{proposition}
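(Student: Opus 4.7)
I prove the contrapositive: assuming $\Gout \neq \emptyset$ but no two distinct elements of $\Gout$ are mutually accessible, I derive a contradiction. Under this assumption, accessibility restricted to $\Gout$ is antisymmetric, so the relation $h \prec g \Leftrightarrow h \neq g$ and $g = wh$ for some $w \in W$ is a strict partial order on $\Gout$. I claim that, by upward induction along $\prec$, every $L_g$ with $g \in \Gout$ in fact belongs to ${{\mathbf{**}}}\bicrochet{\N_{\tau,\pi}}$, which contradicts the very definition of $\Gout$.

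For the base case, take $g$ a $\prec$-minimal element of $\Gout$. Given any $s \in L_g$, form the tree $t$ obtained by adding a fresh root labelled with the neutral letter $e$; since $e$ is neutral, $\varphi(t) = g$, and the decomposition $g = \varphi(\nabla(t,y)) \cdot \varphi(\Delta^+(t,y))$ for every non-root $y$ forces $\varphi(\Delta^+(t,y)) \in \Gin \cup \{g\}$, as a strictly $\prec$-smaller $\Gout$-element would violate minimality. The deepest $g$-valued node of $t$ is therefore a pathhead for $g$, witnessing $\Lmin_g \neq \emptyset$. The building blocks $\Lmin_g$, the property ``$y$ is a pathhead for $g$'', and each $\Ltree_h$ with $h \in \Gin$ already lie in ${{\mathbf{**}}}\bicrochet{\N_{\tau,\pi}}$ by the remarks of Section \ref{sec:prelim}, so a recognizer for $L_g$ can be assembled by $\equiv_{\tau,\pi}$-counting the pathheads together with the $\Gin$-typing of the surrounding context.

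For the inductive step, suppose the claim for every $h \in \Gout$ with $h \prec g$. Any $\Gout$-valued subtree of a forest in $L_g$ has, by antisymmetry, value either $g$ or some $h \prec g$. Contracting the topmost subtrees of values $h \prec g$ (already handled by the inductive hypothesis) reduces the analysis to the minimal case, and a suitable block product with a counter monoid from $\crochet{\N_{\tau,\pi}}$ produces, via Proposition \ref{prp:basic}, an algebra that divides some $\calH^n_{\tau,\pi}$ and recognizes $L_g$. The main obstacle is precisely this bookkeeping: one must simultaneously track the positions and types of the previously-handled subtree witnesses, the vertical-action data of the intermediate contexts, and the stabilising contexts of $g$ (a cycle-free submonoid, by antisymmetry), all inside the iterated block-product hierarchy of Section \ref{sec:block}---most likely through the leaf-extension formalism and the algebras $\calH^n_\%$ of Section \ref{sec:extalg}.
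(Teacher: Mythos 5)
Your overall strategy coincides with the paper's: assuming no two elements of \Gout\ are mutually accessible, pick $g \in \Gout$ not accessible from any other element of \Gout\ and contradict $g \in \Gout$ by showing $L_g \in {{\mathbf{**}}}\bicrochet{\N_{\tau,\pi}}$; your remark that attaching a neutral root to any $s \in L_g$ yields a pathhead for $g$ is correct. But as written the argument has a genuine gap, and it sits exactly where you put the weight. The induction along $\prec$ is superfluous: once the base case is established, the single $\prec$-minimal $g$ already contradicts the definition of \Gout, so nothing more is needed. Yet the step you yourself call ``the main obstacle'' --- the bookkeeping of subtree values, vertical data and stabilising contexts inside the block-product hierarchy, ``most likely'' via the algebras $\calH^n_\%$ --- is never carried out. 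Deferring the real difficulty to an unnecessary and unproven inductive step does not discharge it; and in fact the same kind of work is also what your base case still owes.

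Indeed the base case, which is the entire content of the proposition, is only asserted: ``a recognizer for $L_g$ can be assembled by $\equiv_{\tau,\pi}$-counting the pathheads together with the $\Gin$-typing of the surrounding context'' is not a construction. The paper does this explicitly for the extremal $g$: it first reduces to the case $g+g=g$ (the case where $g+g$ lies strictly below $g$ being simpler), then splits a forest $t = s_1 + \cdots + s_n \in L_g$ according to whether some tree lies in $\Ltree_g$. When none does, membership is decided by counting, for each $h \in \Gin$, the trees in $\Ltree_h$, via a block product $\calHin \Blockprod \ODalg_M$ with $M \in \crochet{\N_{\tau,\pi}}$. When some $s_i \in \Ltree_g$, one must check that $s_i$ contains a pathhead for $g$ and for no other $k$, that the context of every pathhead fixes $g$ --- expressed as membership in the subset of \Svert\ generated by recognizable context languages, namely the $a\square$ with $(\varphi(a\square))g = g$ and the $\square + t$ with $t \in L_g^{\mathsf{pathhead}}$ or $t \in L_h$ for $h \in \Gin$ with $g+h=g$ --- and that every other tree lies in $\bigcup_{h \in \Gin,\, g+h=g} \Ltree_h$. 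Without exhibiting this (or an equivalent) decomposition of $L_g$ into pieces recognized in ${{\mathbf{**}}}\bicrochet{\N_{\tau,\pi}}$, you have not shown that $L_g$ is recognizable there, and the contradiction is not obtained.
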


\begin{proof}
Assume the existence of $g \in \Gout$ such that $g \le h$ does  not hold for
any $h  \in \Gout$, $h \neq g$. This implies that $g \le g+h$ does not hold either,
so that in particular, either $g+g=g$ or $g+g < g$.
We assume that $g+g=g$; the other case is dealt with in a similar, and simpler, manner.
Consider a forest $t = s_1 + \cdots + s_n$ with each $s_i \in \Sstar$, and let $t \in L_g$;
there are two possible cases for $t$. First,  none of the $s_i$'s belongs to $\Ltree_g$
and whether $t \in L_g$ is determined by
counting, for each $h \in \Gin$, the number of trees in $t$ that belong to
$\Ltree_h$; this is done by a block product $\calHin \Blockprod \ODalg_M$
with $M \in \crochet{\N_{\tau,\pi}}$.
In the second case, at least one tree in $t$ satisfies $s_i \in \Ltree_g$.
This means satisfying two conditions: that
$s_i \in L_g^{\mathsf{pathhead}} - \bigcup_{k \neq g} L_k^{\mathsf{pathhead}} $,
where 
$L_g^{\mathsf{pathhead}}$ is the set of all trees that
contain a pathhead for $g$;
that
the context within $s_i$ of every pathhead maps $g$ to itself,
which makes it belong to  
 the subset of \Svert\ generated  by 
 \[
\{\ a\square\ |\ (\varphi(a\square))g=g\ \} \cup 
\{\ \square + t\ |\ t \in L_g^{\mathsf{pathhead}}\ \}
\cup \bigcup_{h \in G_X} \{\ \square + t\ |\ t \in L_h\ \},
\]
where 
 $G_X = \{h \in \Gin \;|\; g+h = g \}$.
Each set mentioned here is recognizable by an algebra in ${{\mathbf{**}}}\bicrochet{\N_{\tau,\pi}}$.
Finally,  a forest with at least one tree in $\Ltree_g$ and no tree  in
$\Ltree_k$ for any $k \in \Gout$   will belong to $L_g$  iff everyone of its other trees
belongs to $\bigcup_{h \in G_X} \Ltree_h$.
\vspace{0.05in}
\end{proof}
%


%
%
%
Given $J \subset G$, 
we say that a set of forests  $S$ is 
\emph{diagonal} for $J$ if $\varphi$ works as a bijection 
 from $S$ to $J$, i.e.
we can write
$S = \{ s_j : j \in J \}$ with $\varphi(s_j) = j$ for every $j \in J$.
If
$ \calG \not\in {{\mathbf{**}}}\bicrochet{\N_{\tau,\pi}}$, then
there exists 
for every   $n \ge 1$ a non-singleton 
$J^{(n)} \subseteq G$ and a set of forests
$S^{(n)} = \{ s_{j}^{(n)} : j \in J^{(n)} \}$, 
closed under  $\approx_{\tau,\pi}^n $-closed and
diagonal for $ J^{(n)}$,
which can serve as witnesses for $\approx_{\tau,\pi}^n $.
At least one $J \subseteq G$
occurs infinitely often in the sequence, and since
$\approx_{\tau,\pi}^n $ refines every $ \approx_{\tau,\pi}^m $, $m<n$,
a sequence of witnesses exists where $J^{(n)} = J$ for every $n$.
A strongly connected $J \subseteq G$ with this property is said
to be 
$\approx_{\tau,\pi}^\N$-tight.
We denote by \calJ\ the set of all $\approx_{\tau,\pi}^\N$-tight
subsets of $G$.


\begin{proposition} \label{prp:condA}
An algebra $\calG =(G,W)$ is outside of ${{\mathbf{**}}}\bicrochet{\N_{\tau,\pi}}$ iff
$G$ has  
a  nonsingleton $\approx_{\tau,\pi}^\N$-tight subset.
\end{proposition}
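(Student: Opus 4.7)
The plan is to handle the two implications separately via Proposition~\ref{prp:basic}, which equates membership in ${{\mathbf{**}}}^n\bicrochet{\N_{\tau,\pi}}$ with the refinement of the nuclear congruence $\simeq_G$ of $\varphi$ by $\approx^n_{\tau,\pi}$. For the direction $(\Leftarrow)$: if $J$ is a nonsingleton $\approx^\N_{\tau,\pi}$-tight subset with witnesses $S^{(n)}$, then for every $n$ two distinct forests $s_{j_1}^{(n)}, s_{j_2}^{(n)} \in S^{(n)}$ satisfy $s_{j_1}^{(n)} \approx^n_{\tau,\pi} s_{j_2}^{(n)}$ with $\varphi(s_{j_1}^{(n)}) \neq \varphi(s_{j_2}^{(n)})$, so $\approx^n_{\tau,\pi}$ never refines $\simeq_G$; Proposition~\ref{prp:basic} then gives $\calG \not\in {{\mathbf{**}}}^n\bicrochet{\N_{\tau,\pi}}$ for every $n$, hence $\calG \not\in {{\mathbf{**}}}\bicrochet{\N_{\tau,\pi}}$.

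For $(\Rightarrow)$, non-membership yields $\calG \not\in {{\mathbf{**}}}^n\bicrochet{\N_{\tau,\pi}}$ at every level, so Proposition~\ref{prp:basic} produces for each $n$ an $\approx^n_{\tau,\pi}$-class $T^{(n)}$ whose $\varphi$-image $K^{(n)} \subseteq G$ is nonsingleton; picking diagonally inside $T^{(n)}$ gives $S^{(n)}$. Pigeonhole on the finite powerset $\calP(G)$ yields a fixed nonsingleton $K \subseteq G$ with $K^{(n)} = K$ for infinitely many $n$. Since $\approx^m_{\tau,\pi}$ refines $\approx^n_{\tau,\pi}$ whenever $m \geq n$, every $\approx^m$-class is contained in a unique $\approx^n$-class, so at each level $n$ I pick some $m \geq n$ with $K^{(m)} = K$; the $\approx^n$-class containing $T^{(m)}$ has $\varphi$-image covering $K$, and a diagonal sub-family inside it gives an $\approx^n$-witness for $K$. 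Thus $K$ admits witnesses at every level.

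The main obstacle is now upgrading $K$ to a nonsingleton subset that is also strongly connected. If some strongly connected component $C \subseteq K$ (under the reachability preorder of $G$ restricted to $K$) has $|C| \geq 2$, then $\{s_c^{(n)} : c \in C\}$ inherits $\approx^n$-closure and diagonality, so $C$ is already tight in the required sense. Otherwise every SCC of $K$ is a singleton, and one iterates by applying contexts: for any $p \in \Svert$, $pS^{(n)}$ is $\approx^n$-closed and diagonal for $\varphi_V(p)K$ whenever the latter is nonsingleton, furnishing new tight subsets reachable from $K$ under the $W$-action on $\calP(G)$. Take $K^\star$ of minimum cardinality among such subsets; then every $w \in W$ either acts injectively on $K^\star$ or collapses it to a singleton. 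The self-permutations $\{w \in W : wK^\star = K^\star\}$ form a finite group acting on $K^\star$; if this group is nontrivial, its orbits yield a nonsingleton SCC of $K^\star$ via group inverses, reducing to the easy case. If it is trivial, then composing any cycle $K^\star \to w_1 K^\star \to K^\star$ forces $w_2 w_1 g = g$ for every $g \in K^\star$, so each $g$ is mutually accessible with $w_1 g$ in $G$; combined with Proposition~\ref{prp:cfc}, which guarantees a nonsingleton SCC inside $\Gout$, this yields via a careful orbit argument the desired tight strongly connected subset. The hard point is precisely this last step: na\"{\i}ve context iteration either preserves cardinality without enforcing connectivity or collapses the witness entirely, and the minimality of $K^\star$ together with Proposition~\ref{prp:cfc} is what rules out the pathological ``all-singleton SCCs'' configuration.
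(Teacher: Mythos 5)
Your easy direction, the pigeonhole step producing a fixed nonsingleton $K$ with witnesses at every level, the transport of witnesses by contexts ($pS^{(n)}$ is $\approx^n_{\tau,\pi}$-closed and diagonal for $\varphi(p)K$), and the restriction to a mutually accessible pair when the stabilizer of the minimal set $K^\star$ induces a nontrivial permutation group are all sound. The genuine gap is exactly where you place ``the hard point'': in the trivial-group case your cycle argument yields at best a mutually accessible pair $\{g, w_1 g\}$, but mutual accessibility is only half of tightness. To conclude you would need, for every $n$, forests $s \approx^n_{\tau,\pi} t$ with $\varphi(s)=g$ and $\varphi(t)=w_1 g$, and nothing in your construction provides them: your witness sets relate $g$ to the other elements of $K^\star$ and $w_1 g$ to the other elements of $w_1K^\star$, never $g$ to $w_1 g$. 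The appeal to Proposition \ref{prp:cfc} is a non sequitur for the same reason: that proposition only guarantees that two elements of \Gout\ are mutually accessible, not that any such pair admits witnesses at every level, which is the whole content of being $\approx_{\tau,\pi}^\N$-tight. Moreover you have not shown that a nontrivial cycle exists at all; the configuration in which every context stabilizing a member of the family acts as the identity on it is precisely the problematic one, so the deferred ``careful orbit argument'' is not a detail but the entire remaining content of the proposition.

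For comparison, the paper does not argue combinatorially on the $W$-action on subsets of $G$ at all. It proceeds by contradiction with a definability argument: assuming that in every witness pair at least one element is inaccessible from the other, it fixes such a pair, takes $F$ to be the strongly connected component of the inaccessible element, collapses $G \setminus (F \cup \Gin)$ onto an absorbing element $\infty$, and then shows that in the quotient algebra the language $L_\infty$ (and hence each $L_g$) is recognized at some finite level, by constructing formulas $\Phi_g$ and analyzing minimal subtrees lying in $L_\infty$; this contradicts non-membership in ${{\mathbf{**}}}\bicrochet{\N_{\tau,\pi}}$. Some step of this kind, converting ``no strongly connected pair is tight'' into recognizability by $\calH^n_{\tau,\pi}$ for a finite $n$, is what your proposal is missing, and your orbit analysis does not substitute for it.
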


\begin{proof}
It suffices to prove that if  $\calG \not\in {{\mathbf{**}}}\bicrochet{\N_{\tau,\pi}}$,
then  $G$ contains a $\approx_{\tau,\pi}^\N$-tight pair.
Assume that $\calG \not\in {{\mathbf{**}}}\bicrochet{\N_{\tau,\pi}}$ and that
in every
set of witnesses where $\varphi(s_i)=k$ and $\varphi(s'_i) = k'$ for every $i$,
at least one of $k$ and $k'$ 
is not accessible from the other. 
Let this be $k$: denote by 
$F$  the  strongly connected component  to
which $k$ belongs, so that
$k$ is only accessible from elements of $F \cup \Gin$, 
$k' \not\in F \cup \Gin$
and $F \neq \Gout$. 
We denote by $\calG'$ the image of \calG\ by the homomorphism
that maps $G - (F \cup \Gin)$ onto an absorbing element $\infty$
and works as the identity on $F \cup \Gin$.
Then $\calG' \not\in {{\mathbf{**}}}\bicrochet{\N_{\tau,\pi}}$, $\Gout' = F \cup \{\infty\}$ and 
the hypothesis implies that
$\infty$ belongs to every $\approx_{\tau,\pi}^\N$-tight pair of $\calG'$.
Since no subset $\{h,k\}$ of $F$ is $\approx_{\tau,\pi}^\N$-tight, 
there exists a large enough $n$
such that $\calH^n_{\tau,\pi} = \alpha^n_{\tau,\pi}(\Sfree)$
recognizes $\Gin$ and, 
given a forest $s \in \varphi^{-1}(\Gout')$,  
determines  which subset $L_g \cup L_\infty$ it belongs to.
This means that for each $g \in G'$, there exists a first-order formula $\Phi_g(s)$ that takes
value true iff $s \in \Snat$ satisfies 
$s \in L_g $ when $g \in \Gin$, and
$s \in L_g \cup L_\infty$ when $g \in F$.
\newline
Let $s \in L_\infty$ have a  subtree $t \in L_\infty$ 
which is minimal in the sense that no strict subforest of $t$ is in $L_\infty$;
this $t$ is of the form $x {\cdot} t'$ where $x$ is the root of $t$,
$\lambda(x) = a$
and $g = \varphi(t') $,
with 
$(\varphi(a\square))g = \infty$. 
If  $g \in \Gin$, then 
the formula $\Phi_g(t') \wedge (\lambda(x) = a)$
asserts that
$\varphi(s) = \infty$.
Otherwise, $g \in F$: this holds only if for every $y \in nodes(t')$
such that $\varphi(\Delta^+(t',y)) \in \Gout'$,
there exist $k \in F$ and $b \in A$
such that $\Phi_k(\Delta(t',y))$ is true,
$\lambda(y) = b$ and $(\varphi(b\square))k \neq \infty$.
Therefore, the existence in $s$ of a minimal subtree
$t \in L_\infty$ can be asserted with a first-order formula.
Next, the formulas developed in this way are used to deal
with the case where
$s \in L_\infty$ has no subtree $t \in L_\infty$, i.e.
$s = t_1 + \cdots + t_p$ with  $\varphi(t_i) \in \Gin \cup F$
for each $i$. 
\end{proof}
%
%

%
\begin{proposition} \label{prp:Nzero}
Let algebra $\calG =(G,W)$ be outside of ${{\mathbf{**}}}\bicrochet{\N_{\tau,\pi}}$.
There exists an integer $n_0$ such that, for  every 
strongly connected set $J \subseteq \Gout$,
if  there exists  a set  of witnesses for $J$ and ${\approx^{n_0}_{\tau,\pi}}$,
then $J$  is $\approx_{\tau,\pi}^\N$-tight.
\end{proposition}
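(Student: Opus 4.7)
The plan is to combine a simple monotonicity observation with the finiteness of $G$; no deep structural property of strongly connected sets is actually needed.

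First I would establish monotonicity of witnesses. Because $\approx^{n+1}_{\tau,\pi}$ refines $\approx^n_{\tau,\pi}$, which is immediate from the recursive definition in Section \ref{sec:block}, any $\approx^{n+1}_{\tau,\pi}$-closed diagonal set for $J$ is a fortiori $\approx^n_{\tau,\pi}$-closed and diagonal for $J$. Hence a set of witnesses for $J$ at level $n+1$ is automatically a set of witnesses for $J$ at level $n$, so
\[
N_J \;=\; \{\, n \geq 1 \;:\; \text{a set of witnesses for } J \text{ and } \approx^n_{\tau,\pi} \text{ exists} \,\}
\]
is downward closed in $\N$. By the definition given just before Proposition \ref{prp:condA}, $J$ is $\approx_{\tau,\pi}^\N$-tight exactly when $N_J$ contains every positive integer. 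Thus if $J$ is not tight, $N_J$ is bounded above; write $n_J$ for its maximum, with the convention $n_J = 0$ when $N_J = \emptyset$.

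Next, the finiteness of $G$ forces the family $\mathcal{J}_{\text{bad}}$ of strongly connected subsets of $\Gout$ that fail to be tight to be finite. I would then set
\[
n_0 \;=\; 1 \;+\; \max \{\, n_J \;:\; J \in \mathcal{J}_{\text{bad}} \,\} ,
\]
with an empty maximum interpreted as $0$. If a set of witnesses for some strongly connected $J \subseteq \Gout$ exists at level $n_0$, then $n_0 \in N_J$; since $n_0 > n_J$ for every $J \in \mathcal{J}_{\text{bad}}$, necessarily $J \notin \mathcal{J}_{\text{bad}}$, i.e., $J$ is $\approx_{\tau,\pi}^\N$-tight.

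I do not anticipate a substantive obstacle: the argument is essentially a pigeonhole repackaging of the hierarchy of congruences, relying only on the refinement property already invoked in the text just before Proposition \ref{prp:condA} to pass from an infinite sequence of witnessing $J^{(n)}$'s to a single $J$, together with the finiteness of $G$. The one point that warrants a line of justification is that an $\approx^{n+1}_{\tau,\pi}$-class sits inside an $\approx^n_{\tau,\pi}$-class, and this is visible directly from the recursive construction of these congruences.
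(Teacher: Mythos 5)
Your proposal is correct and follows essentially the same route as the paper's own proof: the paper likewise observes that once witnesses for $J$ fail at some level they fail at all higher levels (your downward-closedness of $N_J$, stated there in contrapositive), takes the least such level $n_J$ for each non-tight $J$, and uses the finiteness of $G$ to set $n_0$ one above the maximum. Your explicit justification via the refinement $\approx^{n+1}_{\tau,\pi}\subseteq\approx^{n}_{\tau,\pi}$ only makes precise what the paper leaves implicit, so there is nothing to add.
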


\begin{proof}
Let $\calG \not\in {{\mathbf{**}}}\bicrochet{\N_{\tau,\pi}}$ and let 
$J  \subseteq \Gout$ be strongly connected.
If $J$ is not $\approx_{\tau,\pi}^\N$-tight, then 
no set  of witnesses exists for $J$ and ${\approx^{n}_{\tau,\pi}}$,
for some $n \in \N$, and no such set exists either for any $n' \ge n$. 
Denote by $n_J$ the smallest integer with this property.
Since $G$ is finite, there is a maximum $N$ for the value of $n_J$ over all 
subsets $J$ that are not $\approx_{\tau,\pi}^\N$-tight.
Define $n_0 = N+1$:  the existence of a set of witnesses
for $J $  and ${\approx^{n_0}_{\tau,\pi}}$
implies that $J$ is $\approx_{\tau,\pi}^\N$-tight.
\end{proof}
%

%
%

From now on, we denote by $n_0$ an integer that satisfies the
conditions of Proposition \ref{prp:Nzero}, and such that
${\calH^{n_0}_{\tau,\pi}}$  
recognizes every language $L_h$,
as well as every $\Ltree_h = L_h \cap \Sstar$, $h \in \Gin$
and every $\Lmin_g$ for $g \in \Gout$ minimal.


%


\subsection{Recursive proofs for non-membership}     \label{sec:RP}

In a recursive proof, a set $\calS^{(n+1)}$ of witnesses is built by inserting
copies of elements of $\calS^{(n)}$ in the components of a circuit
$\calM^{(n+1)}$.
This circuit component is a tuple, i.e. a multicontext equipped with
suitable port labelings, whose purpose is to specify, for
every port $x$, which witness from  $\calS^{(n)}$, or rather copy thereof, is to
be inserted at $x$. The
tuples in a circuit must  be related to each other,
in such a way that certain of the resulting forests 
mapped by $\varphi$  to distinct elements of the same 
$\approx_{\tau,\pi}^\N$-tight subset of $G$
are undistinguishable by the congruence
$\approx^{n+1}_{\tau,\pi}$.
From this constraint we define a relation
$ \crochet{ \star\, \calH^n_{\tau,\pi}  }$
 between tuples, such that if
 $t \crochet{ \star\, \calH^n_{\tau,\pi}  } t'$, then the forests
built from $t$ and $t'$ are equivalent under $\approx^{n+1}_{\tau,\pi}$.
This leads to the definition of Condition
$\mathbf{RC}(\calG \star \calH^n _{\tau,\pi})$, which states
the existence of a circuit suitable
to the construction of witnesses for the congruence
$\approx^{n+1}_{\tau,\pi}$,
so that $\mathbf{RC}(\calG \star \calH^n _{\tau,\pi})$
implies 
$\calG \not\in {{\mathbf{**}}^{n+1}}\bicrochet{\N_{\tau,\pi}}$.
\newline
In the other direction, Lemma \ref{lem:secondRC}\ shows that  a circuit that
satisfies $\mathbf{RC}(\calG \star \calH^n _{\tau,\pi})$
can always be extracted from a set of witnesses for $\approx^{n+1}_{\tau,\pi}$.
As a consequence, a recursive proof of
non-membership exists for every algebra outside of
${{\mathbf{**}}}\bicrochet{\N_{\tau,\pi}}$. 
To prove
Lemma \ref{lem:secondRC}\ it is convenient to work in terms of ``full'' proofs,
where the number of witnesses in $\calS^{(n)}$ increases with $n$;
this does not describe the
actual proofs that exist in the literature. 
Theorem \ref{thm:RC}\ shows that a  ``slender''
proof always exists for every algebra outside of
${{\mathbf{**}}}\bicrochet{\N_{\tau,\pi}}$. 
\\

We define  a mapping
$\iota : \bigcup_{n \ge n_0} H^{n}_{\tau,\pi} \rightarrow \calP(G)$, given by
$\iota(h) =  \{ j \in G : h \in \alpha^{n}_{\tau,\pi} (L_j ) \}$,
which
is not necessarily injective.
By our assumption on $n_0$, either
$\iota(h) \subseteq \Gin$ and  is a singleton,
or
$\iota(h) \subseteq \Gout$ 
and may have  two or more elements.
If $|\iota(h)| \ge 2$, then every forest in
the set  $(\alpha^{n}_{\tau,\pi})^{-1}(h)$ can be used as a witness
for  $\iota(h)$ and
$\approx^{n}_{\tau,\pi} $, and therefore
by Proposition \ref{prp:Nzero}, $\iota(h)$
is $\approx_{\tau,\pi}^\N$-tight.
We define
$\calJ = \{\, \iota(h) : |\iota(h)| \ge 2\, \}$ and,
for each $n \ge 2$ the set
$\calD^{(n)} = \{\, (h,j) : h \in H^{n}_{\tau,\pi} \wedge\, j \in \iota(h) \, \}$.
We work in
terms of $\calD^{(n)}$ instead of 
$ \{\, (h,j) \in \calD^{(n)} : \iota(h) \in \calJ \}$,
that is, we 
include in the discussion those $h \in H^{n}_{\tau,\pi}$
for which $\iota(h)$ is a singleton;
this will be useful in the proof of Lemma \ref{lem:secondRC}.
%
We say that a set of forests $\calS^{(n)}$ is \emph{full}
when it contains a forest $s_{k,\ell}$ for every 
$(k,\ell) \in \calD^{(n)}$,
with the notations
$\calS^{(n)} = \{ \, s^{(n)}_{k,\ell} : (k,\ell) \in \calD^{(n)} \, \}$. 
Similarly, we say that a circuit over 
$(A,\calD^{(n)}, \calD^{(n+1)})$
is full when it contains a multicontext
$m^{(n+1)}_{h,j}$ for every 
pair $(h,j) \in \calD^{(n+1)}$.
%
\newline
A sequence $\calS^{(n)}$, $n \ge 1$, is 
a recursive proof when every forest $s^{(n+1)}_{h,j}$,
$h \in H^{n+1}_{\tau,\pi}$, $j \in \iota(h)$
is built by inserting copies of elements of
$\calS^{(n)}$ at the ports of a 
multicontext $m^{(n+1)}_{h,j}$ that belongs to a circuit $\calM^{(n+1)}$
over $(A,\calD^{(n)},\calD^{(n+1)})$.
%
Let $s$ and $m$ be shorthands for   $s^{(n+1)}_{h,j}$
and $m^{(n+1)}_{h,j}$, respectively.
To every $x \in ports(m)$, we assign  two labels 
$\mu^{n}_{\tau,\pi} (x) \in H^{n}_{\tau,\pi} $
and
$\psi(x) \in G$.
The labels are \emph{consistent}
when 
$\psi(x) \in  \iota \circ \mu^{n}_{\tau,\pi} (x) $,
which is equivalent to
$ (\varphi^{-1} \circ \psi ) (x) \cap ( (\alpha^{n}_{\tau,\pi})^{-1} \circ   \mu^{n}_{\tau,\pi} )(x)
\neq \emptyset$,
and means that
at least one forest $r$ exists that can be inserted at the port
in a consistent way, i.e. such that
$\varphi(r) = \psi(x)$ and $\alpha^{n}_{\tau,\pi}(r) = \mu^{n}_{\tau,\pi} (x)$.
Combined with $m$, the mappings $\mu^{n}_{\tau,\pi}$
and $\psi$ define
the tuple  $t = (m,\mu^{n}_{\tau,\pi},\psi)$;
it is said to be consistent if the two mappings are 
consistent at every port of $m$.
The leaf-completion of $m$ through 
each of these  mappings will be used at several places.
%
%
%
%
\newline
%
%
We establish a relation between two consistent tuples
 $t=(m,\mu^{n}_{\tau,\pi},\psi)$ and $t'= (m',\mu^{n}_{\tau,\pi},\psi)$.
Let $\Brvmu^{n}_{\tau,\pi}(m) $, a forest over $A \cup H^{n}_{\tau,\pi} $,
be the leaf-completion of $m$ through $\mu^{n}_{\tau,\pi} (x)$;
define on it the leaf extension of
$\alpha^{n}_{\tau,\pi}$.
Then in $\BbbF_{A  \cup H^{n}_{\tau,\pi} }$,
the nuclear congruence of $\alpha^{n}_{\tau,\pi}$ coincides with 
the congruence 
$\approx^{n+1}_{\tau,\pi} $.
Next, with
$ h \in H^{n}_{\tau,\pi} $, $ v \in V^{n}_{\tau,\pi} $ and $ j \in G $,
let
$P[\calH^{n}_{\tau,\pi}](t)_{h,v,j}$ denote the number of
ports $x \in ports(m)$  that satisfy
$h = \mu^{n}_{\tau,\pi}(x)  $,
$v = \alpha^{n}_{\tau,\pi} (\nabla (\Brvmu^{n}_{\tau,\pi}(m),x))  $
and
$j = \psi(x)    $.
From this we define 
$P[\calH^{n}_{\tau,\pi}](t)  \equiv_{\tau,\pi} 
P[\calH^{n}_{\tau,\pi}](t')$
 as a shorthand for
\[
\forall\, \crochet{ h,v,j}  \in H^{n}_{\tau,\pi} \times V^{n}_{\tau,\pi} \times F :\
P[\calH^{n}_{\tau,\pi}](t)_{h,v,j} \ \equiv_{\tau,\pi} \
P[\calH^{n}_{\tau,\pi}](t')_{h,v,j} .
\]
Finally, we define the relation $\crochet{\star \calH^n _{\tau,\pi}  }$ between tuples:
\[
t\ \crochet{\star \calH^n _{\tau,\pi}  } \  t'
 \ \Leftrightarrow\
\Brvmu^{n}_{\tau,\pi}(m)  \approx^{n+1}_{\tau,\pi} \Brvmu^{n}_{\tau,\pi}(m') 
\ \wedge\
P[\calH^{n}_{\tau,\pi}](t) \ \equiv_{\tau,\pi} \
P[\calH^{n}_{\tau,\pi}](t')  .
\]

We now describe the meaning of this relation.
%
We want to determine conditions that are sufficient for two tuples
$t=(m,\mu^{n}_{\tau,\pi},\psi)$ and $t'= (m',\mu^{n}_{\tau,\pi},\psi)$
to be suitable for the construction of witnesses $s$ and $s'$
such that
$s   \approx^{n+1}_{\tau,\pi} s'$, while
$\varphi(s) $ and $\varphi(s')$ are distinct elements of the
same $\approx^\N_{\tau,\pi}$-tight subset of $G$.
The forests  $s$ and $s'$ are built from
 $t$ and $t'$ by inserting at their ports elements of
a set $\calS^{(n)}$ of witnesses for $\approx^{n}_{\tau,\pi}$;
the insertion at port $x$ of
the forest $s(x)$ is consistent, in the sense defined above.
 Let
$\tilde{s}$ and  $\tilde{s}'$ denote the relabeled versions
of $s$ and $s'$, relative to
$\alpha^{n}_{\tau,\pi}$, and let
$A^n = A \times H^{n}_{\tau,\pi} \times V^{n}_{\tau,\pi}$
be the corresponding relabeling alphabet.
We assume that $\varphi(s)$ and $\varphi(s')$
have the appropriate values and we look at how to
satisfy
the constraint 
$s   \approx^{n+1}_{\tau,\pi} s'$.
Satisfaction is obtained when every symbol of
$ A^n$ occurs the same number of times
(up to $\equiv_{\tau,\pi}$) in $\tilde{s}$ and  $\tilde{s}'$.
We verify  this  separately on the nodes
of the multicontexts
of $m$ and $m'$ and on those that belongs to the
copies of elements of $\calS^{(n)}$
that are inserted at their ports.
Given $x \in ports(m)$, we have by construction
\[
\alpha^{n}_{\tau,\pi} (\Delta^+(s,x) ) =
\alpha^{n}_{\tau,\pi} (\Delta(s,x) ) =
\alpha^{n}_{\tau,\pi} (s(x) ) =
\mu^{n}_{\tau,\pi} (x) 
\]
where the leftmost equality comes from the fact that in
$s$, the label of $x$ is the neutral letter $e$.
Therefore, for every interior node $y$ of $m$,
\[
\alpha^{n}_{\tau,\pi} (\Delta(s,y) ) =
\alpha^{n}_{\tau,\pi} (\Delta(   \Brvmu^{n}_{\tau,\pi} (m)  ,y) ) 
\quad \text{and} \quad
\alpha^{n}_{\tau,\pi} (\nabla(s,y) ) =
\alpha^{n}_{\tau,\pi} (\nabla(   \Brvmu^{n}_{\tau,\pi} (m)  ,y) ) 
\]
which explains the component
$
\Brvmu^{n}_{\tau,\pi}(m)  \approx^{n+1}_{\tau,\pi} \Brvmu^{n}_{\tau,\pi}(m') 
$
in the definition of
$t\ \crochet{\star \calH^n _{\tau,\pi}  } \  t'
$.
%
Next, we deal with the copies of witnesses from $\calS^{(n)}$
inserted at the ports of $m$ and $m'$.
Define as above the counter
$P[\calH^{n}_{\tau,\pi}](t)_{h,v,j}$
and the shorthand
$P[\calH^{n}_{\tau,\pi}](t)  \equiv_{\tau,\pi} 
P[\calH^{n}_{\tau,\pi}](t')$.
The witness $s(x)$ is a copy of $s_{h,j}^{(n)}$;
let $y$ be one of its nodes,
with 
$a = \lambda(s_{h,j}^{(n)},y)$,
we have 
\[
\alpha^{n}_{\tau,\pi} ( \Delta ({s},y) ) =  \alpha^{n}_{\tau,\pi} ( \Delta (s_{h,j}^{(n)},y) ) = k
\]
and
\[
\alpha^{n}_{\tau,\pi} (\nabla ({s},y) ) = \alpha^{n}_{\tau,\pi} ( \nabla ({s},x) {\cdot} \nabla(s_{h,j}^{(n)},y) ) 
= v {\cdot} \alpha^{n}_{\tau,\pi} (\nabla (s_{h,j}^{(n)},y)) = vw,
\]
so that
$\lambda(\tilde{s},y) = \crochet{a,k,vw}$.
If $P[\calH^{n}_{\tau,\pi}](t)  \equiv_{\tau,\pi} 
P[\calH^{n}_{\tau,\pi}](t')$, then for every combination of
$s_{h,j}^{(n)} \in \calS^{(n)}$ and  $ v \in V^{n}_{\tau,\pi} $,
the numbers of copies of  $s_{h,j}^{(n)}$ that are
inserted at ports $x$ such that 
$v = \alpha^{n}_{\tau,\pi} (\nabla (\Brvmu^{n}_{\tau,\pi}(m),x))  $
are the same in $m$ and $m'$, up to $\equiv_{\tau,\pi}$.
%


%
%

\begin{definition} \label{def:RC}
We denote by 
$\mathbf{RC}(\calG {\star} \calH^n _{\tau,\pi})$ 
the existence of a
circuit $\calM^{(n+1)} $ over
$(A,\calD^{(n)},\calD^{(n+1)})$ such that
%
\begin{latindense}
\item every tuple $t^{(n+1)}_{h,j}  = (m,\mu^n_{\tau,\pi} ,\psi)$ 
is consistent and satisfies
$\alpha^{n+1}_{\tau,\pi} (\Brvmu^{n}_{\tau,\pi}(m) ) = h$ and
$\varphi(\Brvpsi(m)) = j$, and 
\item   every combination of
 $t^{(n+1)}_{h,j}$ and $t^{(n+1)}_{h,j'} $ satisfies  condition
%
$t^{(n+1)}_{h,j}\ \crochet{\star \calH^n _{\tau,\pi}  }\ t^{(n+1)}_{h,j'} $. %
\end{latindense}
\end{definition}

%


\begin{lemma} \label{lem:preRC}
Let $n \ge 2$: if a  circuit $\calM^{(n+1)} $
satisfies $\mathbf{RC}(\calG  {\star} \calH^{n}_{\tau,\pi})$,
then it also satisfies $\mathbf{RC}(\calG  {\star} \calH^{n-1}_{\tau,\pi})$,
and can be used to prove
 $\calG \not\in  {{\mathbf{**}}^{n+1}}\bicrochet{\N_{\tau,\pi}}$.
\end{lemma}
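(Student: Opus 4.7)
The proof has two components. \emph{First,} I show that the RC condition descends one level, by exploiting the surjective forest algebra homomorphisms $\pi_k : \calH^k_{\tau,\pi} \twoheadrightarrow \calH^{k-1}_{\tau,\pi}$ arising from the refinement $\approx^k_{\tau,\pi} \;\supseteq\; \approx^{k-1}_{\tau,\pi}$, to transport $\calM^{(n+1)}$ into a circuit over $(A, \calD^{(n-1)}, \calD^{(n)})$. Concretely, the procedure is to relabel each port $x$ of every $m^{(n+1)}_{h,j}$ by setting $\mu^{n-1}(x) := \pi_n(\mu^n(x))$, re-index outputs via $(h,j) \mapsto (\pi_{n+1}(h), j)$, and collapse multicontexts landing at the same projected index to a single chosen representative.

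To verify the conditions of Definition \ref{def:RC} for this reinterpreted circuit: consistency is preserved since $h \in \alpha^n_{\tau,\pi}(L_j)$ forces $\pi_n(h) \in \alpha^{n-1}_{\tau,\pi}(L_j)$, giving $\iota_n(h) \subseteq \iota_{n-1}(\pi_n(h))$. The output identity $\alpha^n_{\tau,\pi}(\Brvmu^{n-1}(m)) = \pi_{n+1}(h)$ follows from functoriality of the leaf extensions under $\pi$. For the relation $t \crochet{\star \calH^{n-1}_{\tau,\pi}} t'$: the forest condition $\Brvmu^{n-1}(m) \approx^n_{\tau,\pi} \Brvmu^{n-1}(m')$ is obtained by applying $\pi_{n+1}$ to the hypothesis $\Brvmu^n(m) \approx^{n+1}_{\tau,\pi} \Brvmu^n(m')$, while the counting condition $P[\calH^{n-1}](t) \equiv_{\tau,\pi} P[\calH^{n-1}](t')$ comes from summing the original counts $P[\calH^n]$ over preimages of $\pi_n$ on $H^n_{\tau,\pi}$ and its vertical analogue on $V^n_{\tau,\pi}$, using that $\equiv_{\tau,\pi}$ is preserved under finite summation.

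\emph{Second,} I use this descent to construct witnesses. By iterating the first part, we obtain circuits at every level $1 \le k \le n$ satisfying $\mathbf{RC}(\calG \star \calH^k_{\tau,\pi})$. The base set $\calS^{(1)}$ is built directly: for each $(k,\ell) \in \calD^{(1)}$, pick any $s^{(1)}_{k,\ell} \in L_\ell$ with $\alpha^1_{\tau,\pi}(s^{(1)}_{k,\ell}) = k$, which exists precisely because $\ell \in \iota(k)$. Inductively, $\calS^{(k+1)}$ is then obtained by inserting at each port $x$ of every $m^{(k+1)}_{h,j}$ a copy of $s^{(k)}_{\mu^k(x), \psi(x)}$. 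The relation $\crochet{\star \calH^k_{\tau,\pi}}$, designed expressly for this purpose as elaborated in the discussion preceding Definition \ref{def:RC}, guarantees that for each $h \in H^{k+1}_{\tau,\pi}$ with $|\iota(h)| \ge 2$, the forests $\{ s^{(k+1)}_{h,j} : j \in \iota(h) \}$ are pairwise $\approx^{k+1}_{\tau,\pi}$-equivalent while $\varphi$ separates them. At level $n+1$, this furnishes witnesses showing $\calG \not\in {{\mathbf{**}}^{n+1}}\bicrochet{\N_{\tau,\pi}}$.

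The main obstacle is the compatibility of the projections $\pi_n$ with the leaf extensions of $\alpha^n_{\tau,\pi}$ and $\alpha^{n+1}_{\tau,\pi}$ on forests in $\BbbF_{A \cup H^n_{\tau,\pi}}$: one must verify that relabeling leaves from $H^n_{\tau,\pi}$ to $H^{n-1}_{\tau,\pi}$ via $\pi_n$ and then applying $\alpha^n_{\tau,\pi}$ agrees with first applying $\alpha^{n+1}_{\tau,\pi}$ and projecting by $\pi_{n+1}$. This commutativity is what allows the $\approx^{n+1}_{\tau,\pi}$-equivalence of the unfilled tuples to descend cleanly to $\approx^n_{\tau,\pi}$-equivalence after relabeling, and it is the crux both of the descent and of the subsequent inductive construction of witnesses.
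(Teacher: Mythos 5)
Your proposal is correct and takes essentially the same route as the paper's proof: you descend the condition $\mathbf{RC}(\calG\star\calH^{n}_{\tau,\pi})$ to level $n-1$ via the canonical projection $\calH^{n}_{\tau,\pi}\rightarrow\calH^{n-1}_{\tau,\pi}$ (relabelling ports, transporting the forest-congruence condition, and summing the counters $P[\calH^{n}_{\tau,\pi}]$ over preimages), and then iterate the circuit bottom-up to assemble witness sets $\calS^{(1)},\ldots,\calS^{(n+1)}$, exactly as in the paper. The only differences are cosmetic (you start from a directly chosen $\calS^{(1)}$ and explicitly re-index/collapse outputs, where the paper starts from an auxiliary $\calS^{(0)}$ and leaves the reinterpretation of the circuit implicit), and the commutativity you flag as the crux is asserted at the same level of detail in the paper's own argument.
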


\begin{proof}
Let  $\calM^{(n+1)}$ satisfy $\mathbf{RC}(\calG  {\star} \calH^{n}_{\tau,\pi})$.
Define  the surjective homomorphism $\sigma : \calH^{n}_{\tau,\pi} \rightarrow \calH^{n-1}_{\tau,\pi}$
 by $\sigma = \alpha^{n-1}_{\tau,\pi} \circ (\alpha^{n}_{\tau,\pi} )^{-1}$.
Then from every consistent tuple $(m, \mu^n_{\tau,\pi}, \psi) $ 
in $\calM^{(n+1)}$ one can build 
a consistent $(m, \mu^{n-1}_{\tau,\pi} ,  \psi) $,
where
$ \mu^{n-1}_{\tau,\pi} = \sigma \circ \mu^n_{\tau,\pi}$.
We define from 
$(m, \mu^{n-1}_{\tau,\pi}, \psi) $
its leaf completion
$\Brvmu^{n-1}_{\tau,\pi}(m)$;
we have
\[
\Brvmu^{n}_{\tau,\pi}(m)  \approx^{n}_{\tau,\pi} \Brvmu^{n}_{\tau,\pi}(m') 
\Rightarrow
\Brvmu^{n-1}_{\tau,\pi}(m)  \approx^{n-1}_{\tau,\pi} \Brvmu^{n-1}_{\tau,\pi}(m') 
\]
 and,
for every combination of $h \in H^n_{\tau,\pi}$,
$v \in V^n_{\tau,\pi}$ and $j \in G$, 
the counters 
$P[\calH^{n-1}_{\tau,\pi}](t)_{h,v,j}$  satisfy
\[
P[\calH^{n-1}_{\tau,\pi}](t)_{h,v,j} =
\sum_{\substack{\sigma(k)=h \\ \sigma(w)=v}}
P[\calH^{n}_{\tau,\pi}](t)_{k,w,j} .
\]
Therefore, if $\calS^{(n-1)}$ is a vector of witnesses for
$\approx^{n-1}_{\tau,\pi}$, then
$\calM^{(n+1)} {\cdot} \calS^{(n-1)}$ is a vector of witnesses for
$\approx^n_{\tau,\pi}$.
Note that since
$|H^{n+1}_{\tau,\pi}| > |H^{n}_{\tau,\pi}|$,
some pairs $(h,j) \in \calD^{(n)}$
end up with more than one witness.
\newline
Sets of witnesses
$\calS^{(1)} ,\ldots, \calS^{(n)}$
can be built from a
circuit $\calM^{(n+1)}$ 
that  satisfies 
$\mathbf{RC}(\calG  {\star} \calH^{n}_{\tau,\pi})$,
as follows.
First, select
a set $\calS^{(0)}$ that contains a forest
$s_{J,j}$ for every $(J,j)$ and $h$ 
with $J = \iota(h)$ and $(h,j) \in \calD^{(n)}$,
such that $\varphi(s_{J,j}) = j$,
and build 
$\calS^{(1)} = \calM {\cdot} \calS^{(0)}$:
for any two tuples
$t^{(n+1)}_{h,j}$ and $t^{(n+1)}_{h,j'}$
and the corresponding forests
$s^{(1)}_{h,j}$ and $s^{(1)}_{h,j'}$,
having
$t^{(n+1)}_{h,j}\, (\calG \star \calH^n _{\tau,\pi}  )\, t^{(n+1)}_{h,j'} $
ensures that every $a \in A$ occurs the same number of times
(up to $\equiv_{\tau,\pi}$) in $s^{(1)}_{h,j}$ and $s^{(1)}_{h,j'}$,
so that 
$\calS^{(1)} $ constitutes a set of witnesses for
$\approx^1_{\tau,\pi}$.
Then recursively,
$\calS^{(n)} = \calM {\cdot} \calS^{(n-1)} = \calM^n {\cdot} \calS^{(0)}$,
for every $n \ge 2$.
\end{proof}

\begin{lemma} \label{lem:secondRC}
$\calG \not\in {{\mathbf{**}}}\bicrochet{\N_{\tau,\pi}}$
if, and only if there exists a sequence of full circuits $\calM^{(n+1)} $, $n \ge 1$,
that satisfies $\mathbf{RC}(\calG  {\star} \calH^{n}_{\tau,\pi})$.
\end{lemma}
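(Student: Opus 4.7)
\emph{Plan.} I would prove the lemma as an equivalence. The $(\Leftarrow)$ direction is a direct application of Lemma~\ref{lem:preRC}: given a sequence of full circuits $\calM^{(n+1)}$, $n \ge 1$, each satisfying $\mathbf{RC}(\calG \star \calH^n_{\tau,\pi})$, that lemma produces for each $n$ a full set of witnesses $\calS^{(n+1)}$ for $\approx^{n+1}_{\tau,\pi}$ whose forests are mapped by $\varphi$ to distinct elements of a nonsingleton $\approx^\N_{\tau,\pi}$-tight subset of $G$. Hence $\calG \notin {{\mathbf{**}}^{n+1}}\bicrochet{\N_{\tau,\pi}}$ for every $n$, which gives $\calG \notin {{\mathbf{**}}}\bicrochet{\N_{\tau,\pi}}$.

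For the $(\Rightarrow)$ direction, I suppose $\calG \notin {{\mathbf{**}}}\bicrochet{\N_{\tau,\pi}}$ and fix $n \ge n_0$; the construction at levels $n < n_0$ follows by the projection argument already used in Lemma~\ref{lem:preRC}. By Proposition~\ref{prp:condA} a nonsingleton $\approx^\N_{\tau,\pi}$-tight subset of $G$ exists, and by the defining property of $n_0$ a full family $\calS^{(n+1)} = \{s^{(n+1)}_{h,j}\}_{(h,j) \in \calD^{(n+1)}}$ of witnesses for $\approx^{n+1}_{\tau,\pi}$ can be selected. For each $(h,j)$ I would define $m^{(n+1)}_{h,j}$ by cutting $s^{(n+1)}_{h,j}$ at a uniformly chosen set $X$ of nodes---naturally the pathheads, since these are recognizable by $\calH^n_{\tau,\pi}$ once $n \ge n_0$---and replacing each cut subtree $\Delta^+(s^{(n+1)}_{h,j}, x)$ by a port labeled with $(\alpha^n_{\tau,\pi}(\Delta^+(s,x)), \varphi(\Delta^+(s,x))) \in \calD^{(n)}$. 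Condition (i) of Definition~\ref{def:RC} is then immediate, since reinserting the original cut subtrees recovers $s^{(n+1)}_{h,j}$, giving $\alpha^{n+1}_{\tau,\pi}(\Brvmu^n_{\tau,\pi}(m)) = h$ and $\varphi(\Brvpsi(m)) = j$; that $X$ is common to all witnesses sharing the same $h$ follows from the fact that pathheadness depends only on $\alpha^n_{\tau,\pi}$-images.

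The main obstacle is verifying condition (ii), namely $t^{(n+1)}_{h,j} \crochet{\star \calH^n_{\tau,\pi}} t^{(n+1)}_{h,j'}$ whenever both tuples share the same $h$. Its first clause, $\Brvmu^n_{\tau,\pi}(m) \approx^{n+1}_{\tau,\pi} \Brvmu^n_{\tau,\pi}(m')$, should follow from $s^{(n+1)}_{h,j} \approx^{n+1}_{\tau,\pi} s^{(n+1)}_{h,j'}$, because each port label equals the $\alpha^n_{\tau,\pi}$-image of the cut subtree and the relabeling of the interior is then entirely controlled by the $\approx^{n+1}_{\tau,\pi}$-class of the witnesses. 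The genuinely delicate point is the port-count congruence $P[\calH^n_{\tau,\pi}](t) \equiv_{\tau,\pi} P[\calH^n_{\tau,\pi}](t')$: the counter $P[\calH^n_{\tau,\pi}](t)_{k,v,j}$ involves $\psi(x) = \varphi(\Delta^+(s,x)) \in G$, a piece of information finer than what $\alpha^n_{\tau,\pi}$ tracks. The resolution should exploit that $X$ consists of pathheads, which forces $\psi(x)$ into the tight subset $\iota(\mu^n_{\tau,\pi}(x))$ fixed by the port's $\mu^n$-label; within this tight subset, the distribution of $\psi$-values across ports is controlled by the $\approx^{n+1}_{\tau,\pi}$-equivalence of the witnesses, possibly after replacing $\calS^{(n+1)}$ by a pumped version---allowed by Proposition~\ref{prp:thrper}---so that the per-$(k,v,j)$ counts all lie above the threshold $\tau$ and the desired congruences mod $\pi$ become automatic.
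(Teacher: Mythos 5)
Your $(\Leftarrow)$ direction and the general plan for $(\Rightarrow)$ (cut a full set of witnesses at pathhead-related nodes, label the ports with pairs built from $\alpha^n_{\tau,\pi}$ and $\varphi$, then verify the two clauses of Definition~\ref{def:RC}) follow the paper's strategy, but the step you yourself call the ``genuinely delicate point'' is exactly where your argument does not close, and the paper closes it by a different mechanism than the one you propose. The paper's crucial observation is that for the cut subtrees the $\varphi$-value is \emph{determined} by the $\alpha^{n}_{\tau,\pi}$-value: a cut subtree either contains no pathhead, in which case $\iota$ of its $\alpha^{n}_{\tau,\pi}$-image is not $\approx^{\N}_{\tau,\pi}$-tight and hence, by the choice of $n_0$, a singleton, or it is rooted at a pathhead, in which case its $\varphi$-value is recovered because $\calH^{n_0}_{\tau,\pi}$ recognizes the languages $\Lmin_g$ (this is where the case treated in Proposition~\ref{prp:cfc} is invoked). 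Consequently $\mu^{n}_{\tau,\pi}(x)$ determines $\psi(x)$, each counter $P[\calH^{n}_{\tau,\pi}](t)_{h,v,j}$ becomes a function of label counts in the relabeled witness $\tilde{s}$, and $s \approx^{n+1}_{\tau,\pi} s'$ yields $P[\calH^{n}_{\tau,\pi}](t) \equiv_{\tau,\pi} P[\calH^{n}_{\tau,\pi}](t')$ directly. Your substitute argument --- that ``the distribution of $\psi$-values across ports is controlled by the $\approx^{n+1}_{\tau,\pi}$-equivalence of the witnesses'', possibly after pumping so that the counts exceed $\tau$ --- does not work: $\approx^{n+1}_{\tau,\pi}$-equivalence only controls $\alpha^{n}_{\tau,\pi}$-level information about the inserted subtrees, not their $\varphi$-values inside a tight set (that is precisely what must be proved); the appeal to Proposition~\ref{prp:thrper} is not shown to preserve simultaneously the witnesses' $\approx^{n+1}_{\tau,\pi}$-equivalence, their distinct $\varphi$-values, and fullness; and even if every count exceeded $\tau$, agreement modulo $\pi$ would not be ``automatic'' unless $\pi = 1$.

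A second, related gap is your choice of cut. Cutting exactly at the pathheads leaves in the interior of $m$ all side branches containing no pathhead; but whether a node lies \emph{below} a pathhead is not determined by its triple $\crochet{\lambda,\alpha^{n}_{\tau,\pi}(\Delta),\alpha^{n}_{\tau,\pi}(\nabla)}$ in $\tilde{s}$, so the split of each label class between interior nodes and nodes absorbed into ports may differ between $s$ and $s'$, and the first clause $\Brvmu^{n}_{\tau,\pi}(m) \approx^{n+1}_{\tau,\pi} \Brvmu^{n}_{\tau,\pi}(m')$ does not follow from $s \approx^{n+1}_{\tau,\pi} s'$ alone. The paper avoids this by taking as interior nodes exactly the strict ancestors of pathheads --- a condition readable off the $\Delta$-component of the triple via the set $H_{\text{ph}}$ --- so that side branches are pushed into the cut subtrees; it also defines pathheads relative to the product algebra $\calG \times \calH^{n}_{\tau,\pi}$, whose $\approx^{\N}_{\tau,\pi}$-tight subsets have the form $J \times \{h\}$, a device your proposal omits and would have to either reproduce or explicitly argue away.
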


\begin{proof}
By the above discussion, satisfaction of
$\mathbf{RC}(\calG \star \calH^{n}_{\tau,\pi})$ by every $\calM^{(n+1)}$, $n \ge 1$, 
ensures that this sequence of  circuits 
can be used to build a recursive proof. 
For the \emph{only if} direction,
assuming  $\calG \not\in {{\mathbf{**}}}\bicrochet{\N_{\tau,\pi} }$,
we want to prove the existence of a sequence of circuits where each
$\calM^{(n+1)}$, $n \ge 1$,  satisfies $\mathbf{RC}(\calG \star \calH^{n}_{\tau,\pi})$.
By Lemma \ref{lem:preRC}, it suffices to prove this for $n$ above a
large enough threshold: we select for this
the parameter $n_0$ defined at the end of Section \ref{sec:prelim}.
%
%
\newline
The proof uses the algebra
$\calG' = \calG \times \calH^{n}_{\tau,\pi} $;
since
$\calG \not\in {{\mathbf{**}}}\bicrochet{\N_{\tau,\pi}} $, 
we have
$\calG' \not\in {{\mathbf{**}}}\bicrochet{\N_{\tau,\pi}} $.
With the notations
$\calG' = (G',W') $
and $\Gout' = \{ g' \in G' : L_{g'} \not\in  {{\mathbf{**}}}\bicrochet{\N_{\tau,\pi}} \}$,
we observe that every
$\approx_{\tau,\pi}^\N$-tight subset of  $\Gout'$
can be written $J' = J \times \{h\}$,  
where $J \subseteq \iota(h)$ and
$\iota(h)$ is $\approx^\N_{\tau,\pi}$-tight.
To see this, consider  two elements $(j,h)$ and $(k,\ell) $
of $J'$. 
If $h \neq \ell$, then $\calH^{n}_{\tau,\pi} $
can tell apart the languages $L_{(j,h)}$ and $L_{(k,\ell) }$,
which means that $J'$ is not  $\approx^\N_{\tau,\pi}$-tight.
Next, if  $h = \ell$ 
but $j$ and $k$ do not belong to the same
$\approx_{\tau,\pi}^\N$-tight
subset of \Gout, then by 
Proposition \ref{prp:Nzero},
 $ \alpha^{n}_{\tau,\pi} $ maps the
 sets $\varphi^{-1}(j)$ and $\varphi^{-1}(k)$
onto distinct subsets 
(actually, singletons)
of $H^{n}_{\tau,\pi} $.
Meanwhile, given
$h \in H^{n}_{\tau,\pi} $, a contradiction argument shows that if
$\iota(h)$ is a $\approx^\N_{\tau,\pi}$-tight
subset of \Gout, then  $\iota(h) \times \{h\}$ is also $\approx^\N_{\tau,\pi}$-tight.
\newline
%
%
%
The proof consists in showing that any full vector of witnesses
for $\approx^{n+1}_{\tau,\pi}$ can be used to define a full circuit
$\calM^{(n+1)} $ that satisfies $\mathbf{RC}(\calG \star \calH^{n}_{\tau,\pi})$.
Let $s$ be such a witness; we  build 
a multicontext $m$ from $s$ as follows.
In $m$, every leaf is a port; the interior nodes,
i.e. the set $interior(m)$,
are exactly  those $y \in nodes(s)$
for which $\Delta(s,y)$ contains at least one pathhead 
for $\Gout'$; in other words, every  strict
ancestor of a pathhead
becomes
an interior node of $m$.
Next, we insert a port $x$ along every edge $(y,z)$ where
$ y \in interior(m)$ and $z \in nodes(s) - interior(m)$,
and we remove the subtree of $s$ rooted at $z$.
We will prove that 
the resulting multicontext has the required properties.
Let $y \in nodes(s)$ and let $\tilde{s}$ denote the version of $s$ relabeled
relative to $\alpha^{n}_{\tau,\pi} $: we show that the triple
$\lambda(\tilde{s},y) = 
\crochet{
\lambda(s,y) ,\, 
\alpha^{n}_{\tau,\pi} (\Delta (s,y)) ,\,
\alpha^{n}_{\tau,\pi} (\nabla (s,y)) }$
can be used to
determine whether $y$ is a pathhead for $\Gout'$.
Let \Sstar\ denote the set of all trees over $A$,
\calJ\ the set of all $\approx^\N_{\tau,\pi}$-tight
subsets of \Gout,
and define
\[
H_{\Tree} = \{\, h \in H^{n}_{\tau,\pi} : 
\iota(h) \in \calJ \wedge (\alpha^{n}_{\tau,\pi})^{-1}(h) \subset \Sstar \, \}  .
\]
Since \Sstar\  is recognized by
every algebra
$\calH^{2}_{\tau,\pi} $ where $\tau \ge 1$,
each set $(\alpha^{n}_{\tau,\pi})^{-1}(h)$ is either a subset of
\Sstar\ or disjoint with it.
Then the set of all forests that contain
a pathhead for $\Gout'$ is recognized
by $ \calH^{n}_{\tau,\pi}$ through
$ \alpha^{n}_{\tau,\pi}$
and the  set $H_{\text{ph}}$ of all elements of
$H^{n}_{\tau,\pi}$
accessible from
$H_{\Tree}$.
A  node $y$ is a pathhead 
for $\Gout'$ when 
$\alpha^{n}_{\tau,\pi} (\Delta^+ (s,y)) $
belongs to $H_{\text{ph}}$ and 
$\alpha^{n}_{\tau,\pi} (\Delta (s,y)) $
does not.
The nodes of $m$ are those $y \in nodes(s)$
for which $\alpha^{n}_{\tau,\pi} (\Delta (s,y)) \in H_{\text{ph}}$, 
that is, $y$ is a strict ancestor
of at least one pathhead.
To build the ports of $m$, 
we take each father-son pair $y,z$
in $s$
where $y \in interior(m)$ and $z \not\in  interior(m)$,
and insert a port $x$  between the two,
so that
$\Delta (s,x) = \Delta^+ (s,z)$
and also
$\beta(\nabla(s,x)) = \beta(\nabla(s,z)) $
for every homomorphism $\beta$.
By construction, $m$ and $x$ satisfy
$
\alpha^{n}_{\tau,\pi} (\nabla (\Brvmu^{n}_{\tau,\pi}(m),x))
= 
\alpha^{n}_{\tau,\pi} (\nabla (s,z)) $,
$
\alpha^{n}_{\tau,\pi} (\Brvmu^{n}_{\tau,\pi}(m))
= 
\alpha^{n}_{\tau,\pi} (s) $
and
$\varphi(\Brvpsi(m)) = \varphi(s)$.
We also observe that 
$\varphi(\Delta (s,x)) = \varphi(\Delta^+ (s,z))$
is determined by $\alpha^{n}_{\tau,\pi} (\Delta^+ (s,z)) $.
Indeed, either $z$ is not a pathhead and the set
$\iota(\alpha^{n}_{\tau,\pi} (\Delta^+ (s,z)) )$
is not  $\approx^\N_{\tau,\pi}$-tight,
or $z$ is a pathhead, a case 
discussed in the proof of Proposition \ref{prp:cfc}.
Hence, 
the value of $\lambda(\tilde{s},z) $
determines both
$\mu^{n}_{\tau,\pi}(x) = \alpha^{n}_{\tau,\pi} (\Delta^+ (s,z)) $
and
$\psi(x) = \varphi(\Delta^+ (s,z))$;
from there, it determines which counter
$P[\calH^{n}_{\tau,\pi}](t)_{h,v,j}$ the port $x$
contributes to.
Given a vector of integers $\vec{P}$,
the algebra $\calH^{n+1}_{\tau,\pi}$
can therefore recognize whether
$P[\calH^{n}_{\tau,\pi}](t)_{h,v,j} \equiv_{\tau,\pi} \vec{P}_{h,v,j}$.
As a consequence, given another witness $s'$
and the multicontext $m'$ built from $s'$,
if $s' \approx^{n+1}_{\tau,\pi}  s$, then
the corresponding tuples 
$t$ and $t'$ 
satisfy $P[\calH^{n}_{\tau,\pi}](t) \equiv_{\tau,\pi} 
P[\calH^{n}_{\tau,\pi}](t') $.
%
%
Finally, the fact that the interior nodes of $m$ are exactly those  $y \in nodes(s)$
for which $\Delta(s,y)$ contains a pathhead,
means that the labels they carry in $\tilde{s}$ are distinct from those of
the other nodes of $s$.
Therefore, defining 
$\mu^{n}_{\tau,\pi} (x) = \alpha^{n}_{\tau,\pi} (\Delta(s,x))$
for every port of $m$, we obtain
$\Brvmu^{n}_{\tau,\pi} (m) \approx^{n+1}_{\tau,\pi}  \Brvmu^{n}_{\tau,\pi} (m')$.
Defining further $\psi(x) = \varphi ( \Delta(s,x))$, we build tuples that
are equivalent under
$\crochet{ \star \calH^n_{  \tau,\pi} }$
and thus
constitute a circuit that satisfies  $\mathbf{RC}(\calG  {\star} \calH^{n}_{\tau,\pi})$.
\end{proof}
\vspace{0.1in}

The number of forests in a full vector of witnesses $\calS^{(n)}$
increases with $n$, while the existing examples of proofs
describe sequences $\calS^{(n)}$, $n \ge 1$, where all vectors have
the same size. We say that a proof is 
\emph{slender} when,
for every combination of a $\approx^\N_{\tau,\pi}$-tight
set $J$ and of $j \in J$, it contains
at most one witness $s^{(n)}_{h,j}$ such that $\iota(h) = J$.

\begin{theorem} \label{thm:RC}
We have $\calG \not\in {{\mathbf{**}}}\bicrochet{\N_{\tau,\pi}}\ $ iff
~\calG\
has a slender recursive proof of non-membership
which for each $n$, 
involves a circuit that satisfies Condition $\mathbf{RC}(\calG  {\star} \calH^{n}_{\tau,\pi})$.
\hfill $\square$
\end{theorem}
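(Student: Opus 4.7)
\begin{proofsketch}
The \emph{if} direction is essentially immediate from the discussion preceding Definition \ref{def:RC}: a slender recursive proof is a fortiori a recursive proof, and at each level $n$ it yields pairs of $\approx^n_{\tau,\pi}$-equivalent forests $s^{(n)}_{h_J,j}, s^{(n)}_{h_J,j'}$ whose $\varphi$-images $j, j'$ are distinct elements of some $\approx^\N_{\tau,\pi}$-tight set $J$; by Proposition \ref{prp:basic} this forces $\calG \not\in {{\mathbf{**}}}\bicrochet{\N_{\tau,\pi}}$.

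For the converse, my plan is to start from the sequence of full circuits supplied by Lemma \ref{lem:secondRC} and prune it to a slender sequence. Fix, for each level $n$ and each tight $J \in \calJ$, a representative $h_J^{(n)} \in H^n_{\tau,\pi}$ with $\iota(h_J^{(n)}) = J$; for singleton classes the representative is forced. In each full $\calM^{(n+1)}$ retain only those tuples $t^{(n+1)}_{h,j}$ whose first coordinate $h$ is such a representative: since the pairwise relations $\crochet{\star\calH^n_{\tau,\pi}}$ imposed by $\mathbf{RC}$ hold only between tuples sharing the first coordinate, they are inherited by the sub-family, so the restricted circuit $\calM^{(n+1)}_{\text{slen}}$ still satisfies $\mathbf{RC}(\calG{\star}\calH^n_{\tau,\pi})$ on its restricted index set while generating at most one witness per tight $(J,j)$ upon recursive insertion.

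The main obstacle is that the ports of a retained tuple may still be labelled by non-representative pairs $(h',j') \in \calD^{(n)}$, which a priori requires non-slender lower-level witnesses to be inserted. To eliminate this, I propose to relabel each such port as $(h_{J'}^{(n)}, j')$ with $J' = \iota(h')$, and insert the slender witness $s^{(n)}_{h_{J'}^{(n)},j'}$ there. Two checks are required: the counter-congruence $P[\calH^n_{\tau,\pi}](t) \equiv_{\tau,\pi} P[\calH^n_{\tau,\pi}](t')$ survives easily, since the relabeling amounts to summing counter entries within each $(J',j')$-block and $\equiv_{\tau,\pi}$ is preserved by finite sums; the leaf-completion equivalence $\Brvmu^n_{\tau,\pi}(m) \approx^{n+1}_{\tau,\pi} \Brvmu^n_{\tau,\pi}(m')$ is the genuinely delicate point, since relabeling a port alters the triple $(\lambda,\alpha^n_{\tau,\pi}(\Delta),\alpha^n_{\tau,\pi}(\nabla))$ carried by that node in the $\alpha^n_{\tau,\pi}$-relabelled forest. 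I expect the resolution to exploit the freedom offered by the slender definition: after relabeling, the first coordinate of the retained tuple may shift to a different representative of the same tight class $J$, which is still admissible; what must be checked, using that $\varphi$-images are preserved by the relabeling, is that the new coordinate still satisfies $\iota(\cdot) = J$, which follows from the characterization of tight classes in Proposition \ref{prp:condA}.
\end{proofsketch}
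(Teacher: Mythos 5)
Your plan is in substance the same as the paper's own argument: keep the \emph{if} direction as immediate, and for the converse start from the full circuits of Lemma \ref{lem:secondRC}, choose for each level $n$ a representative of every class $\{h : \iota(h)=J\}$ (the paper's map $\rho_n$), relabel the ports accordingly and insert the corresponding representative witnesses. The pruning and the consistency of the relabelled port labels are indeed unproblematic, since $\iota(\rho_n(h'))=\iota(h')$.

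The gap is precisely the step you yourself call ``genuinely delicate'', and your proposed resolution does not address it. When at a port $x$ you replace the witness $s^{(n)}_{h',j'}$ by $s^{(n)}_{\rho_n(h'),j'}$, every node of the circuit multicontext above $x$ changes the second and third components of its label in the version relabelled according to $\alpha^{n}_{\tau,\pi}$, because $\alpha^{n}_{\tau,\pi}(\Delta(\cdot))$ and $\alpha^{n}_{\tau,\pi}(\nabla(\cdot))$ are now computed from the modified insertions; for the same reason the component $v=\alpha^{n}_{\tau,\pi}(\nabla(\Brvmu^{n}_{\tau,\pi}(m),x))$ of the counters $P[\calH^{n}_{\tau,\pi}](t)_{h,v,j}$ can change, so the counter comparison is not a mere summation of entries within $(J',j')$-blocks. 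What has to be proved is that the two modified forests $s_\rho$ and $s'_\rho$ are still equivalent under $\approx^{n+1}_{\tau,\pi}$, equivalently that the relabelled tuples remain related under $\crochet{\star\,\calH^{n}_{\tau,\pi}}$; only then do they constitute witnesses (for the same tight set, via Proposition \ref{prp:Nzero}) and only then is the resulting proof slender and recursive. Verifying that the possibly shifted first coordinate still satisfies $\iota(\cdot)=J$ (your appeal to Proposition \ref{prp:condA}) is bookkeeping about the index set and says nothing about why $s_\rho\approx^{n+1}_{\tau,\pi}s'_\rho$ holds. The paper closes exactly this gap by an induction on $n$, descending through the levels $i\le n$ of the recursive construction: whenever $\Delta(s,x)\approx^{i}_{\tau,\pi}\Delta(s',x')$ and $\nabla(s,x)\approx^{i}_{\tau,\pi}\nabla(s',x')$, the same equivalences hold for $s_\rho$ and $s'_\rho$; hence nodes that carried the same label in the relabelled versions of $s$ and $s'$ still carry a common (though possibly different) label afterwards, the counters rebalance, and the tuples $t_\rho$, $t'_\rho$ remain $\crochet{\star\,\calH^{n}_{\tau,\pi}}$-related. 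Without this propagation argument (or an equivalent one) your construction is not justified.
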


\begin{proof}
It suffices to prove the \emph{only if} direction.
Recall that \calJ\ denotes the set of all
$\approx^\N_{\tau,\pi}$-tight subsets of $G$.
Given $n \ge 1$, we define 
$\delta_n(\calJ) = \{ \, h \in H^n_{  \tau,\pi} : \iota(h) \in \calJ \, \}$,
we pick a  representative of every class
for the equivalence relation
$\delta_n \circ \iota(h)$, 
and define 
$\rho_n : \delta_n(\calJ)  \rightarrow \delta_n(\calJ)$
which maps every $h \in \delta_n(\calJ) $ to its representative.
\newline
Let $t = (m, \mu^{n}_{\tau,\pi} ,  \psi )$ and 
$t' = (m', \mu^{n}_{\tau,\pi} ,  \psi )$ be two tuples in 
$\calM^{(n+1)}$, let $s$ and $s'$ denote the witnesses
built from them, with
$s \approx^{n+1}_{\tau,\pi}  s'$ and
$\varphi(s) \neq \varphi(s')$.
In order to use $m$  in
a slender proof, 
we modify the labeling of  every  $x \in ports(m) $,
we replacing $\mu^{n}_{\tau,\pi} (x)$
with
$  \rho_n(\mu^{n}_{\tau,\pi} (x))$.
We then build from $m$ a forest  $s_\rho$ 
by inserting at every port $x$ a copy of $s^{(n)}_{\rho_n(\mu^{n}_{\tau,\pi} (x)),\psi(x)}$,
(instead of $s^{(n)}_{\mu^{n}_{\tau,\pi} (x),\psi(x)}$);
the construction of
$s_\rho$  uses a slender subset 
$\calS_\rho^{(n)}$ of $\calS^{(n)}$,
where $s^{(n)}_{h,j} \in \calS_\rho^{(n)}$
iff $h = \rho_n(h)$.
By construction, we have
$\varphi(s_\rho) = \varphi(s)$.
We build 
$s_\rho'$ from $m'$ in the same way.
Verifying that
$s_\rho \approx^{n+1}_{\tau,\pi}  s_\rho'$
will lead us to conclude that
$s_\rho $ and $ s_\rho'$ are witnesses
for the same $\approx^\N_{\tau,\pi}$-tight set as $s$ and $s'$.
This is done by induction on $n$,
proving for every $i \le n$ and every pair 
$x,x' $ of nodes or ports in $m$ and $m'$ that if
$\Delta (s,x) \approx^i_{\tau,\pi} \Delta(s',x')$
and
$\nabla (s,x) \approx^i_{\tau,\pi} \nabla(s',x')$,
then
$\Delta (s_\rho,x) \approx^i_{\tau,\pi} \Delta(s_\rho',x')$
and
$\nabla (s_\rho,x) \approx^i_{\tau,\pi} \nabla(s_\rho',x')$
hold as well.
As a consequence, those nodes of $m$
and $m'$ that carry the same label (element of $A^n$)
in the relabeled versions of $s$ and $s'$
will also carry the same label (but possibly not the original one)
in the relabeled versions of $s_\rho$ and $s_\rho'$;
the tuples $t_\rho$ and $t_\rho'$ built from them
are equivalent under 
$\crochet{ \star \calH^n_{  \tau,\pi} }$.
Therefore, with this method one can start with a sequence $\calS^{(n)}$,
$n \ge 1$, of full
sets of witnesses and build from it a slender proof.
\end{proof}
\vspace{0.08in}

If $\calI $ is a  subset 
of \calL\ such that  $\bigcup_{J \in \calI} J$
is strongly connected
and is maximal relative to this property, then  
the witnesses for \calI\ 
constitute by themselves a recursive proof.
We can therefore from now restrict our work on 
proofs that involve sets \calL\ where  $\bigcup_{J \in \calL} J$
is strongly connected.
\newline
Let 
$\mathbf{NRC}(\calG {\star} \calH^n_{\tau,\pi})$
denote the negation of  
$\mathbf{RC}(\calG {\star} \calH^n_{\tau,\pi})$;
by Lemma \ref{lem:preRC},
it
states
that there does not exist witnesses for
\calG\ relatively to $\calH^n_{\tau,\pi}$.
It
is tempting to try to obtain from it a  simpler formula,
such as 
$\forall\, t,t' : t\, \crochet{\star \calH^n_{\tau,\pi}}\, t' \Rightarrow \varphi(t) = \varphi(t')$,
which is  an equational definition for a variety of forest algebras.
This does not work, because
$\mathbf{RC}(\calG {\star} \calH^n_{\tau,\pi})$
applies only within the strongly connected components of $G$.
For example,
consider the language $L$ over $A=\{a,b\}$
defined by
\emph{at least one node is an ancestor of exactly one node labelled $b$},
whose syntactic algebra belongs to 
$\mathbf{**}^2\bicrochet{\N_{2,1}}$, and the tuples $t=\Brvpsi(m)$ and $t'=\Brvpsi'(m)$ built over the 
multicontext
$m = a(a(x_1+x_2)+a(x_3+x_4))$ with three interior nodes (labelled $a$)
and four ports 
where the mappings $\psi$ and $\psi'$ are defined  by
$\psi(x_1) = \psi(x_3) = \psi'(x_1) = \psi'(x_2) = a$
and $\psi(x_2) = \psi(x_4) = \psi'(x_3) = \psi'(x_4) = b$,
so that $\Brvpsi(m) \in L$, $\Brvpsi'(m) \not\in L$,
and
$t\, \crochet{\star \calH^2_{2,1}}  \,t'$.
%

%



\subsection{Uniform proofs}   \label{sec:uniproof}

Recall that $\stackrel{\sigma,\rho}{\leftrightarrow}$
denotes the 
equivalence-under-pumping  congruence in $\BbbF_{A,B}$,  
and $\calJ^{\sigma,\rho} = ( J^{\sigma,\rho}, U^{\sigma,\rho})$ the quotient algebra.
We use the same notations for the corresponding objects defined over \Sfree.
Given a tuple
${t} = (m,\nu,\psi)$, for
every combination of
$v \in U^{\sigma,\rho}$ and $(J,j) \in \calD$ we define  
the counter
\[
P[\calJ^{\sigma,\rho}]( {t} )_{v,J,j} \ =\ 
| \{ x \in ports(m) : \beta^{\sigma,\rho}(\nabla(m,x)) = v, \, \nu(x) = J, \, \psi(x) = j \} |  .
\]
From this, we define a relation in $\BbbM_{A,D}$:
\[
 {t} \, \crochet{ \star  \calJ^{\sigma,\rho} } \,  {t}'
\ \Leftrightarrow\
( m \stackrel{\sigma,\rho}{\leftrightarrow} m' )
\wedge
\left(
\forall\, v \in U^{\sigma,\rho} ,\ \forall\, (J,j) \in \calD :\
P[\calJ^{\sigma,\rho}](t)_{v,J,j} \ \equiv_{\tau,\pi} \
P[\calJ^{\sigma,\rho}](t')_{v,J,j} 
\right) .
\]
In a circuit $\calM^{(n+1)}$ found in a proof-by-pumping,
every subcircuit $T^{(n+1)}_J$ is diagonal and closed under
$\crochet{ \star  \calJ^{\sigma,\rho} } $: we denote by 
$\mathbf{RC}(\calG  {\star}  \calJ^{\sigma,\rho} )$
the proposition that states the existence of such a circuit
$\calM^{(n+1)}$.
Since
$\stackrel{\sigma,\rho}{\leftrightarrow}$
refines $ \approx^{n+1}_{\tau,\pi}$,
the relation
$\crochet{ \star  \calJ^{\sigma,\rho} } $ refines  
$\crochet{ \star \calH^n_{\tau,\pi} } $, and therefore
$\mathbf{RC}(\calG  {\star}  \calJ^{\sigma,\rho} )$
implies
$\mathbf{RC}(\calG  {\star}  \calH^n_{\tau,\pi})$.
\newline
%
%
Consider an algebra
$\calG \in {\mathbf{**}}\bicrochet{\N_{\tau,\pi} }$ and
let $\sigma$ and  $\rho$ be  the  threshold and period  of $\calG_\%  $.
This means that \calG\
belongs to the variety
${\mathbf{**}}\bicrochet{\N_{\tau,\pi} }\! \wedge\! \bicrochet{\calJ^{\sigma,\rho}}$,
where
$\bicrochet{\calJ^{\sigma,\rho}}$ is the variety of 
finite forest algebras
generated by $\calJ^{\sigma,\rho}$.
We can repeat with the congruences 
$\left( \approx^n_{\tau,\pi} \cap   \stackrel{\sigma,\rho}{\leftrightarrow} \right) $,
for every $n \ge 1$, the work done for Proposition \ref{prp:basic}\ and
Theorem \ref{thm:RC}\ and assert the existence of an algebra
$\calK^{n,\sigma,\rho}_{\tau,\pi} = \Sfree/\left({ \approx^n_{\tau,\pi} \cap   \stackrel{\sigma,\rho}{\leftrightarrow}}\right)$
such that 
$\calG \in {\mathbf{**}}\bicrochet{\N_{\tau,\pi} } \!\wedge\! \bicrochet{\calJ^{\sigma,\rho}}$
if, and only if
$\mathbf{NRC}(\calG  {\star}  \calK^{n,\sigma,\rho}_{\tau,\pi})$
for some $n \ge 1$.
%
%
Since
$\calK^{n,\sigma,\rho}_{\tau,\pi} \prec \calJ^{\sigma,\rho}$,
this implies
$\mathbf{NRC}(\calG  {\star}   \calJ^{\sigma,\rho})$.
We obtain the following.

\begin{proposition}  \label{prp:Jsigma}
With  $\sigma$ and $\rho$
the threshold and period of $\calG_\%$,
if $ \calG \in
{\mathbf{**}}\bicrochet{\N_{\tau,\pi} }$,
then
$\mathbf{NRC}(\calG  {\star}   \calJ^{\sigma,\rho})$.
\end{proposition}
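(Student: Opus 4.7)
The plan is to route the argument through the refined variety $\mathbf{**}\bicrochet{\N_{\tau,\pi}}\wedge\bicrochet{\calJ^{\sigma,\rho}}$, establish a characterization of membership there via $\mathbf{NRC}$ with respect to a finite intersection algebra $\calK^{n,\sigma,\rho}_{\tau,\pi}$, and then transfer the conclusion from the finite algebra to the infinite algebra $\calJ^{\sigma,\rho}$.

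First I would apply Proposition~\ref{prp:firstSD}: the hypothesis $\calG\in\mathbf{**}\bicrochet{\N_{\tau,\pi}}$ forces $\calG_\%\in\mathbf{**}\bicrochet{\N_{\tau,\pi}}$, so by the definition of $\sigma,\rho$ as the threshold and period of $\calG_\%$, the congruence $\stackrel{\sigma,\rho}{\leftrightarrow}$ refines the canonical congruence of $\calG_\%$. Since $\calG$ embeds into $\calG_\%$ via the one-to-one assignment $t\mapsto\{\varphi(t)\}$ on port-free forests (so that the canonical congruence of $\calG_\%$ restricted to $\Sfree$ refines that of $\calG$), the same $\stackrel{\sigma,\rho}{\leftrightarrow}$ refines the canonical congruence of $\calG$. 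Hence $\calG\in\bicrochet{\calJ^{\sigma,\rho}}$, and $\calG$ lies in the joint variety $\mathbf{**}\bicrochet{\N_{\tau,\pi}}\wedge\bicrochet{\calJ^{\sigma,\rho}}$.

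Next, for each $n\ge 1$ I would introduce the intersection congruence $\simeq^n\;:=\;\approx^n_{\tau,\pi}\cap\stackrel{\sigma,\rho}{\leftrightarrow}$ on $\Sfree$; it has finite index because $\approx^n_{\tau,\pi}$ does, yielding the finite quotient $\calK^{n,\sigma,\rho}_{\tau,\pi}=\Sfree/\simeq^n$. Repeating the arguments of Proposition~\ref{prp:basic} and Theorem~\ref{thm:RC} with $\simeq^n$ playing the role of $\approx^n_{\tau,\pi}$ gives two equivalences: membership of $\calG$ in $\mathbf{**}^n\bicrochet{\N_{\tau,\pi}}\wedge\bicrochet{\calJ^{\sigma,\rho}}$ is the same as $\calG\prec\calK^{n,\sigma,\rho}_{\tau,\pi}$; and the negation of the joint-variety membership at every level is equivalent to $\mathbf{RC}(\calG\star\calK^{n,\sigma,\rho}_{\tau,\pi})$ holding for all $n$. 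Combined with the first step, the membership just established forces $\mathbf{NRC}(\calG\star\calK^{n,\sigma,\rho}_{\tau,\pi})$ for some $n$.

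Finally I would deduce $\mathbf{NRC}(\calG\star\calJ^{\sigma,\rho})$ from $\mathbf{NRC}(\calG\star\calK^{n,\sigma,\rho}_{\tau,\pi})$ by contraposition. Any circuit witnessing $\mathbf{RC}(\calG\star\calJ^{\sigma,\rho})$ has its tuples pairwise $\stackrel{\sigma,\rho}{\leftrightarrow}$-equivalent with matching port-counters modulo $\equiv_{\tau,\pi}$; refining each $\stackrel{\sigma,\rho}{\leftrightarrow}$-class of port labels according to the finer $\simeq^n$-classes, and choosing the inserted witnesses from a slender set of $\simeq^n$-representatives, one obtains a circuit satisfying the stronger relation $\crochet{\star\calK^{n,\sigma,\rho}_{\tau,\pi}}$, contradicting $\mathbf{NRC}(\calG\star\calK^{n,\sigma,\rho}_{\tau,\pi})$. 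The main obstacle is the re-running of Theorem~\ref{thm:RC} with the finer congruence $\simeq^n$: one must verify that the pathhead-based extraction of a multicontext from a witness in Lemma~\ref{lem:secondRC} and the recognizability of the counters $P[\calJ^{\sigma,\rho}]$ both stay inside $\mathbf{**}\bicrochet{\N_{\tau,\pi}}\wedge\bicrochet{\calJ^{\sigma,\rho}}$, and that the diagonal and slenderness constructions go through simultaneously for the counting and pumping components of $\simeq^n$.
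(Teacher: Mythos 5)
Your proposal follows essentially the same route as the paper: it first places \calG\ in the joint variety ${\mathbf{**}}\bicrochet{\N_{\tau,\pi}}\wedge\bicrochet{\calJ^{\sigma,\rho}}$, then re-runs Proposition \ref{prp:basic} and Theorem \ref{thm:RC} with the intersected congruences $\approx^n_{\tau,\pi}\cap\stackrel{\sigma,\rho}{\leftrightarrow}$ to get $\mathbf{NRC}(\calG {\star} \calK^{n,\sigma,\rho}_{\tau,\pi})$ for some $n$, and finally transfers this to $\mathbf{NRC}(\calG {\star} \calJ^{\sigma,\rho})$. The only difference is presentational: where the paper disposes of the last step in one line via the relation between $\calK^{n,\sigma,\rho}_{\tau,\pi}$ and $\calJ^{\sigma,\rho}$, you argue the same transfer by contraposition, refining a hypothetical $\mathbf{RC}(\calG {\star} \calJ^{\sigma,\rho})$ circuit into one for $\calK^{n,\sigma,\rho}_{\tau,\pi}$, which is an acceptable (and no less rigorous) rendering of the paper's step.
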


%
The values of $\sigma$ and $\rho$ 
are computable in finite time; 
with them in hand,
determining the existence of a
circuit \calT\ that satisfies
$\mathbf{RC}(\calG  {\star}   \calJ^{\sigma,\rho})$
is a recursively enumerable problem.
If the corresponding algorithm stops,
then 
$\calG \not\in {\mathbf{**}}\bicrochet{\N_{\tau,\pi} }$.
However,  it
is not clear whether 
a whole proof can be built from \calT,
that is, whether for  every  $n \ge 2$, 
knowledge of \calT\ 
suffices to determine
every set of witnesses $\calS{(n)}$.
If so, then this would be a uniform proof of non-membership.
%
%
The results of Section \ref{sec:RP}\ make no mention
of uniformity. However,
the existing Ehrenfeucht-Fra\"{\i}ss\'e games are uniform
and  are built along
two construction mechanisms,
which we call
proof-by-copy and proof-by-pumping,
where the former is a special case of the latter.


\subsubsection{Proof-by-copy}   \label{sec:pbc}

In certain cases such as the Boolean algebra (see Section \ref{sec:boole}),
condition
$\mathbf{RC}(\calG  {\star}  \calH^n_{\tau,\pi})$
can be satisfied
with a circuit $\calT$ where each subcircuit $T_J$ is 
built over the same multicontext, 
 that is,
$m_{J,j} \stackrel{\infty}{\leftrightarrow} m_{J,j'} $
for all $j,j' \in J$.
Then \calT\ actually satisfies
$\mathbf{RC}(\calG  {\star}  \calH^n_{\tau,\pi})$
for every $n \ge 1$ and can be used in
the construction of every set of witnesses
$\calS^{(n)}$, in the way described in the
proof of Lemma \ref{lem:preRC},
and
proving 
$\calG \not\in {\mathbf{**}}\bicrochet{\N_{\tau,\pi} }$
is just a matter of creating copies of \calT\
and assembling them\footnote{Observe that
the depth of this construction increases linearly with $n$.} into witnesses,
hence the name ``proof-by-copy''.
We verify that the existence of such a proof is a
recursively enumerable problem.
%
%
\newline
Formally, we work with $\calG = (G,W) = \varphi(\Sfree)$, 
a connected component $F$, the set
$\calD = \{ (J,j) : \emptyset \neq J \subseteq F,\, j \in J \}$,
and, to make things simpler, the assumption that
$F$ is either the minimal ideal of $G$,
i.e.
$wF \subseteq F$ for every $w \in W$,
or $F \cup \{\infty\}$ is an ideal,
where $w\infty = \infty$ for every $w \in W$.
Given $J \subseteq F$, we define
$J_\infty = J \cup \{\infty\}$.
We use
tuples $t = (m,\nu,\psi) \in \BbbM_{A,D}$
with  $(\nu,\psi) : ports(m) \rightarrow \calD$.
Since
$ \stackrel{\infty}{\leftrightarrow} $
coincides with $=$, up to horizontal permutation of siblings,
we confound $\Sfree$ with $\Sfree / {\stackrel{\infty}{\leftrightarrow}} $
and the canonical homomorphism with the identity mapping.
The \emph{support} of a  tuple $t$ is
\[
sup(t) = sup(m) = \{  \nabla(m,x) : x \in ports(m) \} ;
\]
its \emph{set of inputs} is
$inp(t) = inp(m) = \{ J \subseteq F : \exists\, x \in ports(m),\, \nu(x) = J \}$.
Next, for
every combination of
$v \in \Svert$ and $(J,j) \in \calD$ we define  
the counter
\[
P[\Sfree](t)_{v,J,j} \ =\ 
| \{ x \in ports(m) : \nabla(m,x) = v, \, \nu(x) = J, \, \psi(x) = j \} |  ,
\]
with $P[\Sfree](t)_{v,J,j} = 0$
whenever $J \not\in inp(t)$
or
$v \not\in sup(t)$.
Finally, we define a relation in $\BbbM_{A,D}$:
\[
t\, \crochet{ \star  {}_{\tau,\pi}}\, t'
\ \Leftrightarrow\
( m \stackrel{\infty}{\leftrightarrow} m' )
\wedge
\left(
\forall\, v \in \Svert ,\ \forall\, (J,j) \in \calD :\
P[\Sfree](t)_{v,J,j} \, \equiv_{\tau,\pi} \,
P[\Sfree](t')_{v,J,j} 
\right) .
\]
A proof-by-copy for \calG, if it exists, can be found by an
algorithm that traverses the set of all slender
circuits \calT\ over subsets of $F$,
and stops when it has found one where every 
subcircuit $T_J$ is diagonal and closed under
$\crochet{ \star  {}_{\tau,\pi}}$.


%
%

 \subsubsection{Proof-by-pumping} \label{sec:pbp}

Other Ehrenfeucht-Fra\"{\i}ss\'e games   
are sequences of witnesses 
where each subcircuit 
of $\calT^{(n+1)}$  is 
built over  multicontexts that are 
equivalent under an
``equivalence under pumping'' congruence
$\stackrel{\sigma,\rho}{\leftrightarrow} $
that refines
$\approx^{n+1}_{\tau,\pi}$;
hence the  name ``proof-by-pumping''.
In the examples  described in 
the next section,
 pumping is used in two different ways.
In Section \ref{sec:evendepth},
the circuits consists in uniform multicontexts
where all ports belong to the same set $Z$
and that are obtained from
a single
$m \in \BbbM_{A,B}$ by pumping along $Z$.
In Section  \ref{sec:potthoff}, work starts with a unique
multicontext  whose ports are partitioned into two
sets $Y$ and $Z$; pumping  $\theta$ times along 
$Z$
creates a multicontext denoted $p^{(1)}_\theta$;
copies of $p^{(1)}_\theta$ and $p^{(1)}_{\theta+1}$
are then assembled, by insertion at the $Y$-ports,
into the actual components of the circuit
used in the construction of the witnesses.
%
%
In this section, we investigate the first of
these two techniques; then we examine another
situation where pumping is involved,
but no example of which is known to the author.
\newline
An algorithm adapted from the one
described at the end of Section \ref{sec:pbc}\
can verify  the existence of a 
suitable \calT\ where every subcircuit 
$T_J$ is diagonal and closed under
$\crochet{ \star  \calJ^{\sigma,\rho} } $,
for a given pair $\sigma,\rho$.
The question then arises, whether
it is sufficient to run this test 
for only one such pair,
that is, whether given 
a suitable circuit \calM\  for some
$\sigma$, $\rho$ and $n$, such that
$\stackrel{\sigma,\rho}{\leftrightarrow} $ refines
$\approx^{n+1}_{\tau,\pi}$,
one can for every sequence of pairs
$\sigma_i,\rho_i$, $i \ge 1$, for which
$\stackrel{\sigma_i,\rho_i}{\leftrightarrow} $ refines
$\approx^{n+i}_{\tau,\pi}$, 
obtain by pumping the components of \calM\
a suitable circuit where every subcircuit 
is closed under
$\crochet{ \star  \calJ^{\sigma_i,\rho_i} }  $.
\newline
We now develop tools that make it possible to explore pumping
in some depth.
Let $M \subset \BbbM_{A,B}$ be a set closed under
$\stackrel{\sigma,\rho}{\leftrightarrow} $
and let $Z \subset ports(M)$ be 
$\stackrel{\sigma,\rho}{\leftrightarrow} $-stable,
which means that pumping $M$ along $Z$ is possible. 
Let $\nu(x) = J$ for every $x \in Z$
and let $T$ be a set of tuples defined over $M$,
with for each $j \in J$ at least one element $t$
such that $\varphi(t) = j$.   
We work with a tuple 
$\mathfrak{t}$ be defined on
$\hat{\mathfrak{m}}  \in M^{(\theta,Z)} $
and built from $T$ through
\emph{consistent insertions}, where a tuple $t \in T$ can be
inserted at a port $x$ only if
$\varphi(t) = \psi(x)$. 
\newline
Within $\hat{\mathfrak{m}} $ we associate to
every copy $m$ of an element of $M$ 
the parameters depth and height, as follows.
The  
depth level $d=1$ in $ \hat{\mathfrak{m}} $ 
is a singleton $\bar{m}^1$ containing
a unique copy of a multicontext of $M$.
Then recursively, depth level $d+1$ is the set $ \bar{m}_{d+1} $ of all copies of
elements of $M$ inserted at the ports of $Z( \bar{m}_{d} )$,
which we call the $Z$-ports of $ \bar{m}_{d}$. 
Then $m$ is at depth $d$ iff $m \in  \bar{m}_{d}$. 
We define the height 
together with the  equivalence class
$ \bitruc{ \hat{\mathfrak{m}} }_{\sigma,\rho} =  \bitruc{ M^{(\theta,Z)} }_{\sigma,\rho} $
of 
$\hat{\mathfrak{m}}  \in M^{(\theta,Z)} $ under
$ \stackrel{\sigma,\rho}{\leftrightarrow} $,
as follows.
If $\theta < \sigma$,
then 
$ \bitruc{ \hat{\mathfrak{m}} }_{\sigma,\rho} =  M^{(\theta,Z)} $,
all $Z$-ports of $ \hat{\mathfrak{m}} $ 
are located at depth $\theta$, and 
the height of $m \in  \bar{m}_{d}   $ is  defined 
as the class
of $h=\theta-d$ under the congruence $\equiv_{\sigma,\rho}$.
%
Otherwise $\theta \ge \sigma$, 
and the class
$ \bitruc{ \hat{\mathfrak{m}}  }_{\sigma,\rho} $
consists in the union of all
sets
$M^{(\theta',Z)} $ for which
$\theta' \equiv_{\sigma,\rho} \theta$ plus,
because $ \stackrel{\sigma,\rho}{\leftrightarrow} $
is a congruence, 
every multicontext obtained  from
an element of $ \bitruc{ \hat{\mathfrak{m}}  }_{\sigma,\rho} $
by replacing a sub-multicontext 
$\hat{\mathfrak{n}}$
with some $\hat{\mathfrak{n}}' $
that satisfies
$\hat{\mathfrak{n}}'  \stackrel{\sigma,\rho}{\leftrightarrow}  \hat{\mathfrak{n}}$.
\newline
Let  $ \hat{\mathfrak{m}} '$ be obtained from 
$ \hat{\mathfrak{m}} $ by this substitution,
and let $x \in ports( \bar{m}_{d})$ be the port in which 
$\hat{\mathfrak{n}}$ has been replaced with $\hat{\mathfrak{n}}'$.
If $\hat{\mathfrak{n}} \in M^{(\delta,Z)}  $ for some $\delta < \sigma$,
then
$\hat{\mathfrak{n}}'  \in  M^{(\delta,Z)}  $,
and the height of $x$ and of the $m \in  \bar{m}_{d}$ it belongs to,
is unmodified.
Otherwise, $\hat{\mathfrak{n}} \in   \bitruc{ M^{(\delta,Z)}  }_{\sigma,\rho} $
with $\delta \ge \sigma$, and the height of $\bar{n}_{1}$, the top
depth level in $\hat{\mathfrak{n}}$, is 
the class
of $h(n)=\theta-1$ under the congruence $\equiv_{\sigma,\rho}$.
Making the induction hypothesis, that
$\hat{\mathfrak{n}}'  \stackrel{\sigma,\rho}{\leftrightarrow}  \hat{\mathfrak{n}}$
implies $h(n')\equiv_{\sigma,\rho}  h(n)$, we conclude that
the height of $x$ and of the $m \in  \bar{m}_{d}$ it belongs to,
is unmodified.
\newline
Let $x$  be located in  $ \hat{\mathfrak{m}} $  at  height $h$ and depth $d$:
the subforest
$\Delta(\hat{\mathfrak{m}}  ,x) $ belongs to $\bitruc{M^{(h,Z)} }_{\sigma,\rho}$
and
the context
$\nabla(\hat{\mathfrak{m}}  ,x) $ can be seen as the result of taking
a multicontext in $ \bitruc{M^{(d,Z)}}_{\sigma,\rho}$
consisting in the top $d$ depth levels of  $ \hat{\mathfrak{m}} $,
and inserting
an element of $\bitruc{M^{(h,Z)}}_{\sigma,\rho}$ at every $Z$-port of
$ \bar{m}_d $ other than $x$.
Next, given another multicontext
$\hat{\mathfrak{m}}' \in \bitruc{ M^{(\theta',Z)}}_{\sigma,\rho}$,
 $d' \le \theta'$ and $x' \in Z(\bar{m}_{d'}' ) $,
we have
\[
 x \stackrel{\sigma,\rho}{\leftrightarrow} x' 
\Leftrightarrow
\left(
\Delta(\hat{\mathfrak{m}}  ,x) \stackrel{\sigma,\rho}{\leftrightarrow} \Delta(\hat{\mathfrak{m}}'  ,x') 
\wedge
\nabla( \hat{\mathfrak{m}}  ,x) \stackrel{\sigma,\rho}{\leftrightarrow} \nabla(\hat{\mathfrak{m}}'  ,x') 
\right)
\Leftrightarrow
\left\{
\begin{array}{l}
d \equiv_{\sigma,\rho} d'
\\
h \equiv_{\sigma,\rho} h' \end{array}
\right. .
\]
We say that the $Z$-ports of $\hat{\mathfrak{m}} $ are at
the bottom and that the other ports are on the side.
Then, given $y \in ports(\bar{m}_d) $ and
$y'\in ports(\bar{m}'_{d'})$ located on the sides,
with $m(y)$ and $m'(y')$ the copies of elements of $M$
they belong to, we have
\[
 y \stackrel{\sigma,\rho}{\leftrightarrow} y' \
\Leftrightarrow\
\left\{
\begin{array}{l}
\nabla( m(y)  ,y) \stackrel{\sigma,\rho}{\leftrightarrow} \nabla( (m'(y') ,y') \\
d \equiv_{\sigma,\rho} d'
\\
h \equiv_{\sigma,\rho} h' 
\end{array}
\right. .
\]
%
%
%
Let 
$t = (\hat{\mathfrak{m}},\nu,\psi)$
and $t' = (\hat{\mathfrak{m}}',\nu,\psi)$ satisfy
$ {t} \, \crochet{ \star  \calJ^{\sigma,\rho} } \,  {t}'$,
so that
$\hat{\mathfrak{m}}  \stackrel{\sigma,\rho}{\leftrightarrow} \hat{\mathfrak{m}}'$,
and assume that  $\hat{\mathfrak{m}} \in   \bitruc{ M^{(\delta,Z)}  }_{\sigma,\rho} $
with $\delta \ge \sigma$.
Let $y \in ports(t)$ and $y' \in ports(t')$ be located on the sides,
at  depth levels  $d$ and $d'$, 
and heights $h$ and $h'$, respectively.
For every $u \in U^{\sigma,\rho}$,
they contribute to the counters  
$P[\calJ^{\sigma,\rho}](t)_{u,I,i}$ and 
$P[\calJ^{\sigma,\rho}](t')_{u,I,i}$, 
only if 
$ y \stackrel{\sigma,\rho}{\leftrightarrow} y' $.
Depending on $d$ we distinguish three cases:
if $d < \sigma$,
i.e. $y$ is located in the top $\sigma-1$ levels of $\hat{\mathfrak{m}}$,
this implies $d'=d$;
else if
$h < \sigma$,
i.e. $y$ is located at height less than $\sigma$ in $\hat{\mathfrak{m}}$,
this implies $h'  = h$.
Hence 
if $ {t} \, \crochet{ \star  \calJ^{\sigma,\rho} } \,  {t}'$,
then the  components of
$P[\calJ^{\sigma,\rho}](t)$ and 
$P[\calJ^{\sigma,\rho}](t')$ corresponding to the top and bottom
regions match
on a level-by-level basis.
Otherwise $y$ is located
in the ``middle region'' of $\hat{\mathfrak{m}}$, where 
$d \ge \sigma  $ and $h \ge \sigma$, and
where
the ports are partitioned according to $(d - \sigma)\, \mathsf{mod}\, \rho$
and $(h - \sigma)\, \mathsf{mod}\, \rho$,
and the ports  within the same class contribute to the same counters.
The contributions of all ports within the same class are added up.
\newline
%
%
Until now, dealing with modular quantifier and arithmetics 
was no more complex than working only on  the variety
${\mathbf{**}}\bicrochet{\N_{\tau,1} } = \mathsf{FO}[\Panc]  $,
i.e. the case where $\pi=1$. The situation changes here,
as there will be places where
modular arithmetics adds an extra layer of complexity to the
work being done.
From now on, therefore, we restrict ourselves to the case where 
$\pi=1$.
\\

%
%
We now look at a type of circuit
where the multicontexts of $T_J$ are obtained by
pumping at the $Z$-ports of a set $M$,
i.e. $M_J \subset  \bitruc{M^{(\theta,Z)}}_{\sigma,\rho} $
for some $\theta \in \N$
and the multicontexts of $T_J$  have $Z$-ports
that are available for further pumping.
The sets of tuples
with this property is recursively enumerable.
We show that
for every $\chi > \sigma$
a set $U_J$ closed under
$\crochet{ \star  \calJ^{\chi,1} }$
can be built from $T_J$ by pumping along $Z$.
If every subcircuit
of a circuit \calT\ that satisfies 
$\mathbf{RC}(\calG  {\star}  \calJ^{\sigma,1} )$
falls into this case, then 
for every $\chi > \sigma$
one can obtain by pumping \calT\
a circuit \calU\  that satisfies 
$\mathbf{RC}(\calG  {\star}  \calJ^{\chi,1} )$;
in other words, a uniform
proof-by-pumping for
$\calG \not\in {\mathbf{**}}\bicrochet{\N_{\tau,1} } $
can always be built from \calT.
Later in this section, we will 
look at another type of circuits
and show that in some circumstances,
$T_J$ cannot be used to build an entire proof.
\newline
Let the vector  $P[\calJ^{\sigma,\rho}](t)_Z$ 
consist of the components of
$P[\calJ^{\sigma,\rho}](t)$ associated to the
$Z$-ports, and
$P[\calJ^{\sigma,\rho}](t)_Y$
consist in all the other components.
With the notations $\bar{m}_1 = \{m^{\mathsf{top}}\}$ and
$\bar{m}'_{1} = \{m^{\mathsf{top}}{}'\}$, 
we define  tuples
$t^{\mathsf{top}} = (m^{\mathsf{top}},\nu,\psi)$
and $t^{\mathsf{top}}{}' = (m^{\mathsf{top}}{}',\nu,\psi)$
from $t$ and $t'$;
we observe that if
$t \, \crochet{ \star  \calJ^{\sigma,\rho} } \, t'$,
then
$
P[\calJ^{\sigma,\rho}](t^{\mathsf{top}})_Y  \equiv_{\tau,\pi} 
P[\calJ^{\sigma,\rho}](t^{\mathsf{top}}{}')_Y
$.
Also, since
$m^{\mathsf{top}}  \stackrel{\sigma,\rho}{\leftrightarrow}  m^{\mathsf{top}}{}' $,
we have
$|Z(m^{\mathsf{top}})| \equiv_{\sigma,\rho}   |Z(m^{\mathsf{top}}{}' )|$.
%
Given a
subcircuit $T_J = \{ t_{j} : j \in J \}$  of a circuit that satisfies
$\mathbf{RC}(\calG  {\star}  \calJ^{\sigma,\rho} )$,
we define in the same way 
the sets $M^{\mathsf{top}} = \{ m^{\mathsf{top}}_j : j \in J \}$
of multicontexts and $T^{\mathsf{top}} = \{ t^{\mathsf{top}}_j : j \in J \}$
of tuples,
and for every $\theta \ge 2$ we build by
consistent pumping 
the set
$T^{\mathsf{top}}{}^{(\theta,Z)} = \{ t^{(\theta,Z)}_j : j \in J \}$.
We associate to every $Z$-port $x$ of $M_J$ its depth
level $\theta_x$; it satisfies $\theta_x \ge \sigma$.
Then we build $U_J$
in three steps, where the first two are:
\begin{latindense}
\item build the set $T^{\mathsf{top}}{}^{(\chi,Z)}$; 
\item for each $j \in J$, build $\tilde{u}_j$ from $t^{(\chi,Z)}_j$,
by inserting
at every $Z$-port $x$ a copy of $t_{\psi(x)}$.
\end{latindense}
Every
$Z$-port $x$ of  $\tilde{u}_j$  is a copy of some $Z$-port $z$ of $T_J$
and is located at depth $\theta_x = \chi + \theta_z$.
For all $j,j' \in J$, the properties of $T^{\mathsf{top}}$ imply
\[
P[\calJ^{\chi,1}](t^{(\chi,Z)}_j)_Y =
P[\calJ^{\sigma,1}](t^{(\chi,Z)}_j)_Y  \equiv_{\tau,1} 
P[\calJ^{\sigma,1}](t^{(\chi,Z)}_{j'})_Y =
P[\calJ^{\chi,1}](t^{(\chi,Z)}_{j'})_Y
\]
which means that the counters corresponding to
the top  $\chi$ levels of $\tilde{u}_j$ and $\tilde{u}_{j'}$ match
(under the $\equiv_{\tau,1}$).
The third step is:
\begin{latindense}
\setcounter{latinzahl}{2}
\item build $u_j$ by inserting, at every $Z$-port $x$ of $\tilde{u}_j$,
a copy of $t_{\psi(x)}^{(\eta(x),Z)}$. 
\end{latindense}
To determine the integer $\eta(x)$, we
associate to $T^{\mathsf{top}}$ a $|J| \times  |J|$
matrix $A$ with entries in $\N_{\tau,1}$, where 
$A_{ij} = | \{ x \in Z(t^{\mathsf{top}}_i) : \psi(x) = j \}|$.
Then $(A^\delta)_{ij}$ is the number of $Z$-ports
$x$ in $t_{i}^{(\delta,Z)} $ such that $\psi(x) = j$.
The powers of this matrix constitute a finite semigroup 
with idempotent element $A^\omega$: if every $\eta(x)$ is a multiple
of $\omega$, then 
\[
P[\calJ^{\chi,1}](t_{\psi(x)}^{(\eta(x),Z)})_Z  \equiv_{\tau,1} 
P[\calJ^{\chi,1}](t_{\psi(x')}^{(\eta(x'),Z)})_Z 
\]
will hold for every pair $x,x'$ of $Z$-ports of  $\tilde{u}_j$
that satisfy $\psi(x)= \psi(x')$.
Since for all $j,j' \in J$ and for all  $t,t' \in T_J$,
\begin{equation} \label{eqn:PZY}
|Z(t^{(\chi,Z)}_j)|  \equiv_{\tau,1}   |Z(t^{(\chi,Z)}_{j'})|
\quad
\text{and} \quad
P[\calJ^{\sigma,1}](t)_Z  \equiv_{\tau,1} 
P[\calJ^{\sigma,1}](t')_Z 
\end{equation}
we have
$P[\calJ^{\chi,1}](\tilde{u}_j)_Z  \equiv_{\tau,1} 
P[\calJ^{\chi,1}](\tilde{u}_{j'})_Z $
and from there
$P[\calJ^{\chi,1}](u_j)_Z  \equiv_{\tau,1} 
P[\calJ^{\chi,1}](u_{j'})_Z $.
We now claim that if every $\eta(x)$ satisfies,
$\eta(x) \ge \tau + \chi$, then
$P[\calJ^{\chi,1}](u_j)_Y  \equiv_{\tau,1} 
P[\calJ^{\chi,1}](u_{j'})_Y $.
This already holds for the top $\chi$ levels in 
${u}_j$ and ${u}_{j'}$.
Meanwhile,  the lower $\chi$ levels in
$t_{\psi(x)}^{(\eta(x),Z)}$ consist of copies of
elements of $T^{\mathsf{top}}$;
their numbers in ${u}_j$ and ${u}_{j'}$
match, thanks to Equation (\ref{eqn:PZY});
this ensures a match in the bottom region.
The middle region consists in  the tuples inserted at step \emph{ii}
and in elements of
$T^{\mathsf{top}}{}^{(\xi(x) - \chi,Z)}$, namely the
tuples inserted at step \emph{iii} minus their
lowermost $\chi$ levels.
By Equation  (\ref{eqn:PZY}) and the fact that
$\eta(x) - \chi \ge \tau$ for every $Z$-port $x$ of ${u}_j$ and ${u}_{j'}$,
the contributions of the tuples inserted at step \emph{iii}
match with one another.
Concerning the tuples inserted at step \emph{ii},
let   $t,t' \in T_J$ and consider 
two non-$Z$-ports $y,y'$,
located at depth levels $d,d'$ and height levels $h,h'$ 
in $t$ and $t'$, respectively, denoting by 
$m(y)$ and $m'(y')$ the copies of multicontexts from $M$
they belong to.
If
$y \stackrel{\sigma,1}{\leftrightarrow} y'$,
then
$\nabla(m(y),y)    \stackrel{\sigma,1}{\leftrightarrow} \nabla(m'(y'),y') $.
The reverse implication works in $t,t'$ only if 
$d \equiv_{\sigma,1} d'$ and
$h \equiv_{\sigma,1} h'$.
Let $t$ be inserted at step \emph{ii} of the construction of  $\tilde{u}_j$:
the same port $y$
is, in  $\tilde{u}_j$, at depth $d+\chi \ge \chi$ and height at least $\tau + \chi+h \ge \chi$;
then 
if $t'$ is in the same $\tilde{u}_j$, we have
\[
\nabla(m(y),y)    \stackrel{\sigma,1}{\leftrightarrow} \nabla(m'(y'),y') 
\ \Rightarrow\ y \stackrel{\chi,1}{\leftrightarrow} y'.
\]
Then from this, from
$P[\calJ^{\sigma,1}](t)_Y  \equiv_{\tau,1} 
P[\calJ^{\sigma,1}](t')_Y $, and from
the fact that the numbers of occurrences in
$\tilde{u}_j$ and $\tilde{u}_{j'}$ of elements of $T_J$ 
match because
$|Z(t^{(\chi,Z)}_j)|  \equiv_{\tau,1}   |Z(t^{(\chi,Z)}_{j'})|$,
we conclude that the counters corresponding to
the middle regions of $u_j$ and $u_{j'}$ balance.
%
%
%
%
\\

%
We now look at a circuit where $T_J$ has been built
by pumping along $Z$ but has no
$Z$-ports, and the
construction of $T_J$  uses two sets  of multicontexts $M$ and $R$ closed under
 $  \stackrel{\sigma,1}{\leftrightarrow} $,
 and is done in two steps:
pumping at the $Z$-ports of a set $M$
to obtain a set $S_J = \{ s_j : j \in J \}$,
then for each $j \in J$,
inserting at every  $Z$-port $x$ of $s_j$
a tuple $r(x) \in R$, which builds $t_j$.
This is a situation where
$P[\calJ^{\sigma,1}](s)  \equiv_{\tau,1} 
P[\calJ^{\sigma,1}](s') $
is not satisfied by some tuples $s,s' \in S_J$
(otherwise a suitable circuit could be built
solely with $S_J$,
without the tuples from $R$), but where
$P[\calJ^{\sigma,1}](t)  \equiv_{\tau,1} 
P[\calJ^{\sigma,1}](t') $
holds for all $t,t' \in T_J$.
From the latter equivalence we deduce that
$P[\calJ^{\sigma,1}](s)_Y  \equiv_{\tau,1} 
P[\calJ^{\sigma,1}](s')_Y $
holds for all  $s,s'$, and we have
$P[\calJ^{\sigma,1}](s)_Z  \not\equiv_{\tau,1} 
P[\calJ^{\sigma,1}](s')_Z $ for some $s,s'$.
This makes possible a situation where a different tuple
from $R$ has to be inserted at every port of the $s_j$'s
in order for the resulting $t_j$'s to be equivalent under
$\crochet{ \star  \calJ^{\sigma,1} }$: this means that
tuples from $R$ inserted at
the antichains of ports $Z(s_j)$ can be correlated
in a potentially intricate way.
\newline
We assume that using $R$  was necessary, namely
that no set of tuples $S = \{ s_j : j \in J \}$
built by
pumping at the $Z$-ports of any set $M$
closed under
 $  \stackrel{\sigma,1}{\leftrightarrow} $
can simultaneously be diagonal
and closed under $\crochet{ \star  \calJ^{\sigma,1} }$.
Equivalently,  for every vector of  counters $\vec{p}$, there is a
subset $J' \subset J$ such that, for every $s \in S$,
having
$P[\calJ^{\sigma,1}](s)  \equiv_{\tau,1} \vec{p}$
implies $\varphi(s) \not\in J'$.
Moreover, the set $R$ of tuples to be inserted at the
ports of $S$ does not satisfy this either, for otherwise
some subset $\{ r_j : j \in J \}$
would be suitable to replace $T_J$ in the construction of a circuit
that satisfies
$\mathbf{RC}(\calG  {\star}  \calJ^{\sigma,1} )$.
%
%
Provided that $M$ in not itself expressible\footnote{This is 
the counterpart, in the world of trees and multicontexts,
of the fact that the word language
$\{ w^\theta : \theta \ge \sigma \}$ is star-free,
unless $w = x^\pi$ for some word $x$
and $\pi \ge 2$.}
 as
$N^{(\pi,Z)}$ for $N \subset \BbbM_{A,B}$ and $\pi \ge 2$, 
one can prove by induction on the construction of
$\bitruc{ M^{(\sigma,Z)} }_{\sigma,1}$
that this set 
is recognized
by a forest algebra 
in ${\mathbf{**}}\bicrochet{\N_{\tau,1}}$,
and that the same holds for every intersection\footnote{Actually,
one can  prove that
$\bitruc{ T^{(\sigma,Z)} }_{\sigma,1}$
is a union of a number of such classes.}
 of the corresponding set of tuples
$\bitruc{ T^{(\sigma,Z)} }_{\sigma,1}$
with an equivalence class for
$\crochet{  \star  \calJ^{\sigma,1} }$;
let
$\calH_1 \in {\mathbf{**}}\bicrochet{\N_{\tau,1}}$
recognize everyone of these intersections.
Next, the constraint on $R$ implies that,
for a large enough $n$, no set of witnesses for $J$
and $\approx^n_{\tau,1}$
can be built from $R$.
Since
 $  \stackrel{\sigma,1}{\leftrightarrow} $
refines $\approx^n_{\tau,1}$,
this means that
no  equivalence class for
$\crochet{  \star  \calJ^{\sigma,1} }$
contains a diagonal subset of $R$.
Then in
the very special case where each intersection
of such a class with $R$ 
contains elements from at most one set $\varphi^{-1}(j)$, $j \in J$,
and given the fact that $R$ is a set of tuples built
over a set of multicontexts closed under
 $  \stackrel{\sigma,1}{\leftrightarrow} $,
there exists an algebra 
$\calH_2 \in {\mathbf{**}}\bicrochet{\N_{\tau,1}}$
that, for every $t \in T_J$, determines
the content of $ P[\calJ^{\sigma,1}](s)_Z $
and the class of $s$ for $\crochet{  \star  \calJ^{\sigma,1} }$,
and therefore the subset $J' \subset J$ for which
we have $\varphi(t) \not\in J'$. 
This means that for every
$n' \in \N$ such that both
$\calH_1  \prec \calH^{n'}_{\tau,1}$
and
$\calH_2   \prec \calH^{n'}_{\tau,1} $ hold,
no set of tuples
obtained by pumping $S$ and inserting
elements of $R$ at the $Z$-ports
can constitute a subcircuit for $J$
in a circuit that satisfies
$\mathbf{RC}(\calG  {\star}  \calH^{n'}_{\tau,1}  )$.
%
%
This special case is a situation where
the existence of a
circuit that satisfies
$\mathbf{RC}(\calG  {\star}  \calJ^{\sigma,1}  )$,
where
$\stackrel{\sigma,1}{\leftrightarrow} $ refines
$\approx^{n+1}_{\tau,1}$,
does not imply the existence of
a uniform proof-by-pumping
for
$\calG \not\in  {\mathbf{**}}\bicrochet{\N_{\tau,1}}$.
%



%



\section{Examples of  proofs}  \label{sec:exaproof}

We show how the techniques and notations of this article 
work on examples, of which three are
discussed in the literature under different formalisms
and methods.

\subsection{The Boolean algebra}  \label{sec:boole}

The Ehrenfeucht-Fra\"{\i}ss\'e game concerning this algebra, described in
\cite[Theorem 4.2]{po94}, is an example
of a proof-by-copy.
\newline
In the
Boolean forest algebra \calB, the horizontal monoid is
the direct product $B=\{\sfzero,\sfone\} \times \{\sfzero,\sfone\} $
of the AND and OR monoids (in the first and second component, respectively),
so that $\crochet{\sfone,\sfzero} $ is the identity and 
$\crochet{\sfzero,\sfone}$ is absorbing. Besides the elements of the
form $\varepsilon + \crochet{\mathsf{a},\mathsf{b}} $,
the vertical monoid is generated by
$\varphi(\wedge)$ and $\varphi(\vee)$, which work on each $\crochet{\mathsf{a},\mathsf{b}} \in B$ 
as projections on
the first and second component, respectively.
With $A = \{\wedge,\vee\}$, the Boolean algebra can be regarded as the
image of \Sfree\ by the homomorphism $\varphi$; in the special case of
one-node forests, this gives 
\[
\varphi(\wedge) = \varphi(\wedge ) \crochet{\sfone,\sfzero} 
= \crochet{\sfone,\sfone} 
\quad \text{and} \quad
\varphi(\vee)  = \varphi(\vee ) \crochet{\sfone,\sfzero}  = \crochet{\sfzero,\sfzero} .
\]
We 
denote by $L_\sfzero$ and $L_\sfone$  the languages
$\varphi^{-1}(\crochet{\sfzero,\sfzero} )$ and
$\varphi^{-1}(\crochet{\sfone,\sfone} )$, respectively.
The proof described in Potthoff \cite[Theorem 4.2]{po94}\
builds witnesses from a unique circuit that consists in
two tuples $t_\sfzero = (m,\nu_\sfzero,\psi_\sfzero)$ and $t_\sfone = (m,\nu_\sfone,\psi_\sfone)$
defined over the strongly connected set 
$J =  \{\crochet{\sfzero,\sfzero},  \crochet{\sfone,\sfone}\}$, 
where the mappings $\nu_0$ and $\nu_\sfone$ are constant
(map every port to $J$) while $\psi_\sfzero$ and $\psi_\sfone$ 
are
depicted on Figure 1; this circuit satisfies
$ \mathbf{RC}(\calB {\star} \calH^n_{\tau,\pi} )$
for every combination of $\tau$ and $\pi$.

\begin{figure}
$$\diagram
  & &    \wedge \dlto \drto   &   &  &  &  &  \wedge \dlto \drto   &  &  &  \\
& \vee \dlto \dto    &    &   \vee \dto \drto  & & & \vee \dlto \dto  &   &   \vee \dto \drto  & \\
 \sfzero &     \sfzero  &   &  \sfone  & \sfone &  \sfzero &     \sfone  &   &    \sfzero   & \sfone \\
\enddiagram$$
\caption{ tuples $t_\sfzero$ (left) and $t_\sfone$ (right) }
\end{figure}

\subsection{The duplex Boolean formulas}  \label{sec:duplex}

The ``duplex'' technique is a way of building examples of algebras
that demand proofs where the
witnesses have to be built simultaneously for two  $\approx^\N$-tight
sets. We describe the duplex Boolean formulas as an example.
The language consists of those
trees that satisfy  the following conditions.
\begin{listedense}
\item There is at least one interior node.
\item
If a node $x$ is a leaf, then its label
$\lambda(x)$ belongs to $\{\sfzero,\sfone\}$;
otherwise 
$\lambda(x) \in \{ \cap,\cup,\sqcap,\sqcup \}$.
\item
The sons of a node with  $\lambda(x) \in \{ \cap,\sqcap \}$
are either two leaves, or four interior nodes $y,y',z,z'$
with $\lambda(y) = \lambda(y') = \cup$ and $\lambda(z) = \lambda(z') =   \sqcup$.
\item
The sons of a node with  $\lambda(x) \in \{ \cup,\sqcup \}$
are either two leaves, or four interior nodes $y,y',z,z'$
with $\lambda(y) = \lambda(y') = \cap$ and $\lambda(z) = \lambda(z') =   \sqcap$.
\item
Each node $x$ has a value $val(x) \in \{ \sfzero,\sfone,\bot\}$,
defined according to the following rules.
\begin{listedense}
\item A leaf evaluates to its label, e.g. if $\lambda(x) = \sfone$, then 
$val(x) = \sfone$.
\item  An interior node $x$
with two sons $w,w'$, which are leaves, evaluates to 
$val(w) \wedge val(w')$ if  $\lambda(x) \in\{ \cap,\sqcap \}$,
and to $val(w) \vee val(w')$ if $\lambda(x) \in\{ \cup,\sqcup \}$.
\item If $x$ has four sons
and
$\lambda(x) \in \{ \cap,\sqcap \}$, then $val(x) \neq \bot$
iff none of its sons evaluates to $\bot$ 
and $ val(y) \wedge val(y') = val(z) \wedge val(z')$,
in which case
$val(x) = val(y) \wedge val(y') $;
in other words, the AND of the
subforests with roots in 
$\{ \cap,\cup\}$
and $\{ \sqcap,\sqcup\}$ are evaluated separately, 
and $val(x) \neq \bot$ iff the values coincide.
When  $x$ has four sons
and
$\lambda(x) \in \{ \cup,\sqcup \}$,
$val(x)$ is an OR, obtained in a similar way.
\end{listedense}
\end{listedense}
\begin{figure}
$$\diagram
 & \cap \dlto \drto        & &    & \cap \dlto \drto  & & & \sqcap \dlto \drto        & &    & \sqcap \dlto \drto  & \\
 \sfzero &   &  \sfzero  &       \sfone  & & \sfone & \sfzero &   &  \sfzero  &       \sfone  & & \sfone   \\
\enddiagram$$
\caption{ from left to right, the trees $ t(\circleddash,\sfzero)$, $t(\circleddash,\sfone) $, $ t(\boxdot,\sfzero)$ and $t(\boxdot,\sfone) $}
\end{figure}
The language of those trees where
$\lambda(root) = \cap$ and $val(root) = \sfone$
is not first-order definable.
Witnesses for this are built from an array of forests \calT\
consisting in the trees 
$ t(\circleddash,\sfzero)$, $t(\circleddash,\sfone) $, $ t(\boxdot,\sfzero)$ and $t(\boxdot,\sfone) $,
and from the circuit \calM, consisting in
$M( \circleddash) = \{ m(\circleddash,\sfzero),\, m(\circleddash,\sfone) \}$ and
$M(\boxdot) = \{ m(\boxdot,\sfzero),\, m(\boxdot,\sfone) \}$,
over the alphabets 
$A = \{ \sfzero,\sfone,\cap,\cup,\sqcap,\sqcup \}$
and 
$B = \{ \circleddash,\boxdot \}$.
Each multicontext consists in one root, with label $\cap$ in 
those of $M(\circleddash)$ and $\sqcap$ in $M(\boxdot)$,
then
four nodes $y,y',z,z'$
with $\lambda(y) = \lambda(y') = \cup$ and $\lambda(z) = \lambda(z') =   \sqcup$;
under each of them are four ports.
In $m(\circleddash,\sfzero)$ and $m(\boxdot,\sfzero)$, two of
the ports below node $y$ carry the label $(\circleddash,\sfzero)$
and the other two the label $(\boxdot,\sfzero)$;
the same holds for the ports located under $z$.
Under the node $y'$,
two ports  carry the label $(\circleddash,\sfone)$
and the other two  $(\boxdot,\sfone)$;
the same holds for the ports located under $z'$.
Thus, under nodes $y$ and $y'$, the 
\sfzero's and \sfone's of the four port labels in
$ \{ \circleddash\} \times \{ \sfzero,\sfone \} $
follow the same pattern as the four ports of 
the $m_\sfzero$ multicontext defined in
the previous section.
The same observation holds for the ports labels in
$ \{ \boxdot\} \times \{ \sfzero,\sfone \} $,
as well as the ports under $z$ and $z'$.
Similarly, the port labels in
$m(\circleddash,\sfone)$ and $m(\boxdot,\sfone)$
 follow the pattern of the ports of $m_\sfone$.
\newline
Each array of witnesses $\calS^{(n)} = \calM^n {\cdot} \calT$ 
consists in
two subarrays $S^{(n)}(\circleddash)$ and $S^{(n)}(\boxdot)$,
whose elements 
can be told apart by  looking at the label of their roots.

\subsection{Forest algebras with  aperiodic vertical monoid
and uniform vertical confusion}  \label{sec:evendepth}

We describe two examples of forest algebras where the
vertical monoid is aperiodic
and that have
vertical confusion on uniform multicontexts.
\newline
The first example,  discussed in \cite{po95},
is the set $L_{\text{even}}$  of those binary trees over a trivial alphabet $A = \{{\bullet}\}$
where all leaves are located at a nonzero even depth.
In its syntactic forest algebra, the horizontal monoid is
$ \{ 0, \Bfe,\Bfo,\Bfe\Bfe,\Bfo\Bfo,\infty\}$,
where  $0$ is the identity,
$\Bfe + \Bfe =  \Bfe\Bfe$, $\Bfo + \Bfo = \Bfo\Bfo$,
and everything else evaluates to the absorbing element $\infty$.
The one-port context ${\bullet} \square$ is mapped to the
vertical monoid element that maps 
$0$ and $\Bfe\Bfe$ to $\Bfo$,
$\Bfo\Bfo$ to $\Bfe$, 
and everything else  to $\infty$.
The  vertical monoid of its syntactic forest algebra\footnote{In
this example and the next ones,  the action on $0$ of the
generators of the vertical monoid is not defined; the
size varies depending on how  this gap 
is filled. The other properties that matter here are not affected.
}
is aperiodic, 
has $19$ elements; the non-idempotent ${\bullet} \square$
constitutes a \calJ-class by itself; the others constitute three regular
\calJ-classes analogous to those of the Brandt monoid $B_4$.
Vertical confusion is verified with the multicontext with one interior node,
that is the father of two ports, i.e. ${\bullet} (\square + \square)$.
\newline
The Ehrenfeucht-Fra\"{\i}ss\'e game that proves that
$L_{\text{even}}$ is not in $\mathsf{FO}[\Panc]  $ uses
nontrivial uniform multicontexts
whose interior nodes constitute the set
$\{ m_d : d \ge 1 \}$, where $m_d$ is the complete binary tree
of depth $d$ over $A$, and where each node at depth $d$ is the father of two
ports. Given a labeling $\lambda : ports(m_d) \rightarrow \{ \Bfe,\Bfo,\Bfe\Bfe,\Bfo\Bfo \}$,
we have $\Brvphi(m_d,\lambda) \neq \infty$ only if $\lambda$ is a constant mapping
with image in $\{ \Bfe,\Bfo \}$.
A suitable set $\calS^{1}$ consists in 
$s_{\Bfe} $ and $s_{\Bfo} $, where
$s_{\Bfe} $ and $s_{\Bfo} $ are built from $m_{2\tau}$ 
and $m_{2\tau+1}$, respectively, by inserting $\Bfe$,
at every port.
Let $\sigma(n) = 2\tau(2^{n+1}+1)$.
The circuit
$\calM^{(n+1)} $ consists in multicontexts  $m_{\sigma(n)} $.
and
$m_{\sigma(n)+1} $, where all ports carry the same label,
satisfies
$ \mathbf{RC}(\calG {\star} \calH^n_{\tau,\pi} )$.
and therefore can be used 
to build $\calS^{(n+1)} $.
\\

%
Our second example is a language defined as follows:
\begin{listedense}
\item[\textbf{(1)}] it consists of trees which are either one-node with label $\sfzero$, $\sfone$, \sftwo\ or $\bot$,
or built recursively through insertions in the four-port uniform multicontext $m = a( b( y_0 + y_1) + b( z_0 + z_1))$;
node labels are taken in
$\{a,b\}$
and
port labels are undefined;
\item[\textbf{(2)}] among them, exactly those which evaluate to \sfzero\ according to the two tables below
belong to the language. 
\end{listedense}

\begin{center}
\begin{tabular}{c||c|c|c|c|cc||c|c|c|c|}
$\ a \ $ & $\sfzero$ & $\sfone$ & $\sftwo$ & $\bot$ &
 \hspace{0.5in} &  $\ b \ $ & $\sfzero$ & $\sfone$ & $\sftwo$ & $\bot$ \\
\cline{1-5} \cline{7-11} 
\cline{1-5} \cline{7-11} 
$\sfzero $ & $\sfzero$ & $\bot$ & $\bot$ & $\bot$ & 
\hspace{0.5in} &  $\sfzero \ $ & $\sfone$ & $\bot$ &  $\bot$ & $\bot$ \\
\cline{1-5} \cline{7-11} 
$\sfone $ & $\bot$ & $\sftwo$ & $\bot$ & $\bot$ & 
\hspace{0.5in} &  $\sfone \ $ & $\bot$ & $\bot$ &  $\bot$ & $\bot$ \\
\cline{1-5} \cline{7-11}  
$\sftwo $ & $\bot$ & $\bot$ & $\bot$ &  $\bot$ &
 \hspace{0.5in} &  $\sftwo \ $ & $\bot$ & $\bot$ &  $\sfzero$ & $\bot$ \\
\cline{1-5} \cline{7-11} 
$\bot $ & $\bot$ & $\bot$ &  $\bot$ & $\bot$ & 
\hspace{0.5in} &  $\bot \ $ & $\bot$ & $\bot$ &  $\bot$ & $\bot$ \\
\cline{1-5} \cline{7-11} 
\end{tabular}
\end{center}
Besides the identity $0$ and the absorbing element $\infty$,
the horizontal monoid has one element for each of the following:
a tree that evaluates to $\sfzero$, to \sfone\ and to \sftwo,
and a sum of two such trees.
The  vertical monoid of its syntactic forest algebra contains three regular
\calJ-classes, analogous to those of the Brandt monoid $B_6$.
Vertical confusion is verified with the multicontext $m$ given above.


\subsection{The Potthoff Algebra}  \label{sec:potthoff}

This is the syntactic
algebra of a  language of trees over
$A = \{ \sfzero, \sfone, \bot, \vartriangle,\doteqdot \}$ defined  as follows  (see \cite[Definition 25]{po95}):
\begin{listedense}
\item[\textbf{(1)}] it is included in a set which consists in
three one-node trees with labels $\sfzero$, $\sfone$ or $\bot$,
and of all those trees 
built recursively through insertions in the three-port multicontext $m = \vartriangle( z + \doteqdot( y_0 + y_1))$,
depicted on Figure 3, where the
port are labelled with variable names;
\item[\textbf{(2)}] it consists of exactly those 
trees which evaluate to \sfone\ according to the following two tables.
\end{listedense}

\begin{center}
\begin{tabular}{c||c|c|c|cc||c|c|c|}
$\ \vartriangle \ $ & $\sfzero$ & $\sfone$ & $\bot$ & \hspace{0.5in} &  $\ \doteqdot \ $ & $\sfzero$ & $\sfone$ & $\bot$ \\
\cline{1-4} \cline{6-9} 
$\sfzero $ & $\sfone$ & $\bot$ & $\bot$ & \hspace{0.5in} &  $\sfzero \ $ & $\sfone$ & $\sfzero$ & $\bot$ \\
\cline{1-4} \cline{6-9} 
$\sfone $ & $\bot$ & $\sfzero$ & $\bot$ & \hspace{0.5in} &  $\sfone \ $ & $\sfzero$ & $\sfone$ & $\bot$ \\
\cline{1-4} \cline{6-9} 
$\bot $ & $\bot$ & $\bot$ & $\bot$ & \hspace{0.5in} &  $\bot \ $ & $\bot$ & $\bot$ & $\bot$ \\
\cline{1-4} \cline{6-9} 
\end{tabular}
\end{center}

Condition \textbf{(1)} can be specified with a first-order formula in $\mathsf{FO}[\Panc]$;
Condition \textbf{(2)} is specified with a formula in  $\mathsf{FOMod_2}[\Panc]$
which, at every `$\vartriangle$' node $x$, consider three of the paths that 
start at $x$ and end at a leaf, namely
the  unique path that traverses only `$\vartriangle$' nodes,
through copies of the port $z$, and that we call the \aaa-path of $x$,
plus the \aaa-path of each  `$\vartriangle$' node that is a grandson of $x$
through  its  `$\doteqdot$'  son. The formula
verifies the parities of their lengths and the label of the leaves
located at their end.


\begin{figure}
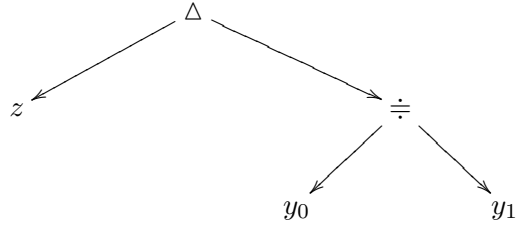

$$\diagram
 & &    \vartriangle \dllto \drrto   &   &  &  \\
z  &    & &    & \doteqdot \dlto \drto  & \\
   &    &     &  y_0  & &  y_1   \\
\enddiagram$$
\caption{the multicontext $m$}
\end{figure}

The reader can verify that the horizontal monoid $G$ of this algebra $\calG = (G,W)$,
consists of $10$ elements: 
an identity, an absorbing element, and eight mutually accessible elements, two of which
can be interpreted as `a $\vartriangle$ node with output \sfzero' and  `a $\vartriangle$ node with output \sfone';
we denote these elements \hatzero\ and \hatone, respectively.
Using Section \ref{sec:eqpump}\ it can be verified that
$\leftrightarrow_{3,1} $ refines the canonical congruence of \calG.

\subsubsection{Verifying non-definability with the extended algebra}

One thing that places the Potthoff algebra  outside of
$ {\mathbf{**}}\crochet{\N_{\tau,1} }$
for any $\tau \ge 1$, is that the vertical monoid $W_\%$ of
its extended algebra $\calG_\%$
is non-aperiodic. 
To see this, we define $B = \{b,b'\}$, 
$C_{b} = \{ c_0 \}$ and  $C_{b'} = \{ c_0' , c_1' \}$;
the extension of $\varphi$ to $C$ is given by
$\varphi(c_0) = \varphi(c_0') = \sfzero$
and
$\varphi(c_1') = \sfone$.
Consider a labeling where
$\nu(z) = b$
and
$\nu(y_0) = \nu(y_1) = b'$
and we do not care about $\psi$.
We define a sequence of tuples $t_i$, $i \ge 1$,
where $t_1 = (m,\nu,\psi)$ and every
$t_i$ is obtained by inserting $t_{i-1}$ at the
$z$ port of $t_1$.
The corresponding elements of $W_\%$
satisfy
$\varphi_\% (\nabla(t_i,z)) \neq \varphi_\% (\nabla(t_{i+1},z)) $
and
$\varphi_\% (\nabla(t_i,z)) = \varphi_\% (\nabla(t_{i+2},z)) $
for every $i \ge 1$; in particular,
$\varphi_\% (\nabla(t_i,z)) $
maps $\{\sfzero\}$ to $\{ \sfzero,\bot \}$ if $i$ is even
and to $\{ \sfone,\bot \}$  if $i$ is odd.
In other words, the group $\Z_2$ divides the vertical monoid
of $\calG_\%$.
The labeling used here,
where $\nu(y_0) = \nu(y_1) = b'$ means that the images by $\varphi$
of the lateral subtrees are not taken into account,
enables $\calG_\%$ to test explicitly the
parity of the length of the path between the $z$ port and
the root, i.e. the \aaa-path of $t_i$. 

\begin{figure}
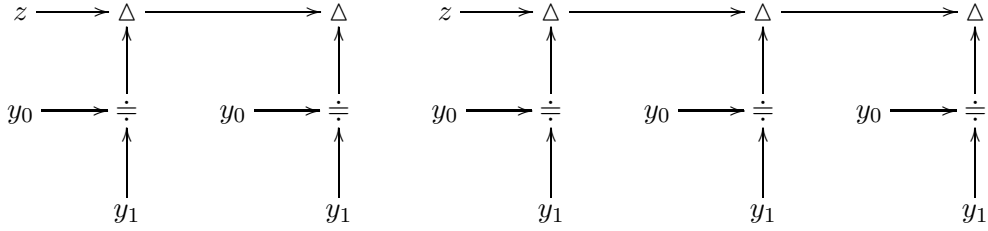

$$\diagram
z \rto  &    \vartriangle   \rrto &   &    \vartriangle   &
 z \rto  &    \vartriangle   \rrto &   &    \vartriangle   \rrto &   &    \vartriangle   \\
 y_0  \rto & \doteqdot  \uto  & y_0 \rto  & \doteqdot  \uto  & 
  y_0  \rto & \doteqdot  \uto  & y_0  \rto & \doteqdot  \uto   & y_0 \rto  & \doteqdot  \uto   \\
    &   {y_1} \uto &  &   y_1\uto  & &   {y_1} \uto &  &   y_1 \uto  &  &   y_1 \uto &   \\
\enddiagram$$
\caption{Tuples $t_2$ (left) and $t_3$  (right).}
\end{figure}

\subsubsection{Verifying non-definability with a recursive proof}

%
Potthoff wrote his proof that $\calG$ is not first-order
in his  doctoral thesis \cite{po94b}\ but apparently did not publish it elsewhere.
The  proof described here is similar in its main lines;
the differences come from the fact that  Potthoff's construction
builds witnesses for the congruence defined by the first-order formulas
of quantifier depth $n$, whereas we are interested in witnesses for
congruences of the form $\approx^n_{\tau,1}$.
\newline
We recall that
$A = \{ \sfzero, \sfone, \bot, \vartriangle,\doteqdot \}$ and we
work with $B = \{ b_y,b_z\}$,
$J =  \{\hatzero,\hatone\} $, and a unique (up to labeling of the ports)
multicontext $m$, with 
$Y = \{ y_0,y_1 \} = \mu^{-1}(b_y)$ and $Z = \{z\}= \mu^{-1}(b_z)$.
The proof consists in circuits that are pairs
$\Brvpsi(p_\theta^{(k)}) ,\Brvpsi(p_{\theta+1}^{(k)})$,
for every $\sigma \ge \tau$ and  appropriate integers $\theta $
and $k $,
such that
$\Brvpsi(p_\theta^{(k)})\, \crochet{\star \calJ^{\sigma,1}}\, \Brvpsi(p_{\theta+1}^{(k)})$,
$\{ \varphi  (p_\theta^{(k)},\psi) = \hatzero$
if $\theta$ is odd, and
$\{ \varphi  (p_\theta^{(k)},\psi) = \hatone$
if $\theta$ is even.
The first step in the construction of 
$p_\theta^{(k)}$  and  $p_{\theta+1}^{(k)}$
consists in pumping  $m$ along $Z$
to create $ m^{(\theta,Z)}$ and $ m^{(\theta+1,Z)}$, 
then inserting  $\hatone$
at the end of the \aaa-paths, and denoting 
the resulting multicontexts  $p_\theta^{(1)} $
and  $p_{\theta+1}^{(1)}$.
An induction on $\theta$ shows that a suitable $\psi$ 
exists for every $\theta$,
for which  $\varphi(p_\theta^{(1)},\psi) \neq \bot $. 
%
\newline
For every $k \ge 1$, we build
multicontexts $p_\theta^{(k+1)}$  and  $p_{\theta+1}^{(k+1)}$
by inserting
copies of $p_\theta^{(1)} $ and  $p_{\theta+1}^{(1)} $
at the ports of $Y(p_\theta^{(k)} )$ and  $Y(p_{\theta+1}^{(k)} )$:
if the port $y$ satisfies $\psi(y) = \hatzero$,
then $p_\theta^{(1)} $ is inserted if $\theta$ is odd (and
$\varphi(p_\theta^{(1)},\psi) = \hatzero $),
otherwise this is  $p_{\theta+1}^{(1)} $.
\newline
We define a coordinate system for
every node and port of $p_\theta^{(k)}$, $k \ge 1$.
With the convention that the root, a
`$\vartriangle$' node, sits at depth level $1$,
we give to the  `$\vartriangle$' node at depth level $d$
in $p_\theta^{(1)} $
and to its `$\doteqdot$' son 
the coordinates
$a_d^{(\theta)}$ and $b_d^{(\theta)}$, respectively.
If we work on
$p_\theta^{(1)} $, i.e. no multicontext is inserted at the ports of
$Y(p_\theta^{(1)}  )$, then
the sons of  $b_d^{(\theta)}$ 
are the ports $y_{d0}^{(\theta)}$ and $y_{d1}^{(\theta)}$.
Otherwise, the two sons of
$b_d^{(\theta)}$ are `$\vartriangle$' nodes; the one
inserted at $y_{di}$
receives the coordinate $a_{di1}^{(\theta)}$,
 its `$\doteqdot$' son  $b_{di1}^{(\theta)}$,
and the two ports below it  
$y_{di10}^{(\theta)}$ and $y_{di11}^{(\theta)}$.
Along the \aaa-path from  $a_{di1}^{(\theta)}$,
the `$\vartriangle$' node at depth $e$
receives coordinate $a_{die}^{(\theta)}$,
its `$\doteqdot$' son the coordinate $b_{die}^{(\theta)}$,
and the ports below it  
$y_{die0}^{(\theta)}$ and $y_{die1}^{(\theta)}$.
In $p_\theta^{(3)}$, there ports are
replaced with `$\vartriangle$' nodes with coordinates
$a_{die01}^{(\theta)}$ and $a_{die11}^{(\theta)}$, etc.
\newline
We now look at two tuples
$(p_\theta^{(k)},\psi)$ and $(p_{\theta+1}^{(k)},\psi) $;
then we observe that 
at every depth $d \le \theta$ along the \aaa-paths of the roots, exactly one of   
$a_d^{(\theta)}$  and  $a_d^{(\theta+1)}$ 
is an $a\sfone$ node; the other is an $a\sfzero$.
When  $a_d^{(\theta)}$ is an $a\sfone$, then its son 
$b_d^{(\theta)}$ feeds it a \hatzero, and this occurs exactly when
$\psi(y_{d0}^{(\theta)}) \neq \psi(y_{d1}^{(\theta)})$
if $k=1$, and when only one of
$a_{d01}^{(\theta)}$ and $a_{d11}^{(\theta)}$
is an $a\sfone$ node, if $k \ge 2$.
This is the pattern denoted $Q_0$ in Figure 5.
Otherwise, $a_d^{(\theta)}$ is a $a\sfzero$ node, 
$b_d^{(\theta)}$ feeds it a \hatone, and either $k=1$ and
$y_{d0}^{(\theta)}$ and $y_{d1}^{(\theta)}$ carry the same label,
or $k \ge 2$ and $a_{d01}^{(\theta)}$ and $a_{d11}^{(\theta)}$
are both $a\sfzero$ or both $a\sfone$ nodes; these
are the patterns $Q_{10}$ and $Q_{11}$.
%
%
Assume that $a_d^{(\theta)}$  is the root of a pattern $Q_0$: 
we count those ports labelled
with $ \hatzero$ and those with $\hatone$,
the result is a pair 
$(n_d^{(\theta,\sfzero)},n_d^{(\theta,\sfone)}) = 
(1,1) = (2^0,2^0)$. The same work done 
on a $Q_{10}$ rooted at $a_d^{(\theta)}$
gives
$(n_d^{(\theta+1,\sfzero)},n_d^{(\theta+1,\sfone)}) = (2,0) = (2^0+1,2^0-1)$;
if this is a $Q_{11}$,  
then
$(n_d^{(\theta+1,\sfzero)},n_d^{(\theta+1,\sfone)}) = (0,2) = (2^0-1,2^0+1)$, respectively.
Observe that 
the numbers
$n_d^{(\theta,\sfzero)}$ and $n_d^{(\theta+1,\sfzero)}$,
as well as
$n_d^{(\theta,\sfone)}$ and $n_d^{(\theta+1,\sfone)}$,
differ by $1$.
%
%
\begin{figure}
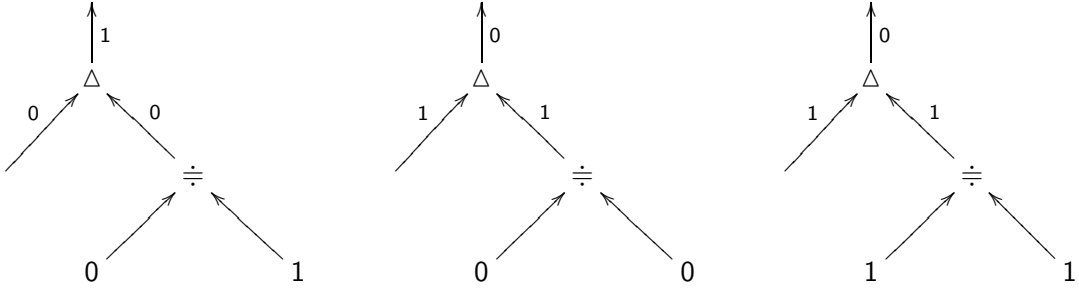

$$\diagram
&   & & & &  & & & &  & \\
& \vartriangle  \uto_\sfone & & \quad & & \vartriangle \uto_\sfzero & 
& \quad & & \vartriangle \uto_\sfzero & \\
\urto^\sfzero  &    & \doteqdot \ulto_\sfzero  &  & \urto^\sfone & 
 & \doteqdot \ulto_\sfone &  & \urto^\sfone & & \doteqdot \ulto_\sfone &  \\
  & \sfzero \urto & & \sfone \ulto & & \sfzero\urto & & \sfzero\ulto & & \sfone\urto & & \sfone\ulto \\
\enddiagram$$

\caption{patterns $Q_0$ (left), $Q_{10}$ (center) and $Q_{11}$ (right).}
\end{figure}
\newline
Assume that
$a_d^{(\theta)}$ is an $a\sfone$ node;
then for every depth $e$, exactly
one of $a_{d0e}^{(\theta)}$ and
$a_{d1e}^{(\theta)}$ is an $a\sfone$ node
and the root of  a $Q_0$ pattern;
the other is the root of a $Q_{10}$ or a $Q_{11}$.
When $k=2$, the leaves of these patterns are 
the four ports $y_{d0e0}^{(\theta)}$, $y_{d0e1}^{(\theta)}$, $y_{d1e0}^{(\theta)}$, 
and $y_{d1e1}^{(\theta)}$;
their contexts within  $p_\theta^{(2)}$ are equivalent under $\stackrel{\sigma,1}{\leftrightarrow}$,
so that it makes sense to 
look at the pair of numbers obtained by counting the
\hatzero s and  \hatone s at these   ports; the result
is
\[
(n_{d0e}^{(\theta,2,\sfzero)},n_{d1e}^{(\theta,2,\sfone)}) \in \{ (3,1), (1,3) \}
=  \{ (2^1+1,2^1-1) , (2^1-1,2^1+1) \} .
\]
Meanwhile,
$a_d^{(\theta+1)}$ is an $a\sfzero$ node, so that
$\psi(y_{d0}^{(\theta+1)}) = \psi(y_{d1}^{(\theta+1)})$
and at every depth $e$,  $a_{d0e}^{(\theta+1)}$
and
$a_{d1e}^{(\theta+1)}$ are both
$a\sfzero$ nodes or both
$a\sfone$ nodes.
In the former case, the ports under  
their `$\doteqdot$' sons constitute two $Q_0$ patterns;
otherwise, we can place a $Q_{10}$ under one of them
and a $Q_{11}$ under the other. Either way, we obtain
$n_{d0e}^{(\theta+1,2,\sfzero)} = n_{d1e}^{(\theta+1,2,\sfone)} =  (2,2) $.
\newline
Iterating the process and building 
multicontexts $p_\theta^{(k)}$ and $p_{\theta+1}^{(k)}$,
it is possible to redefine 
$\psi$ on $Y(p_\theta^{(k)}  ) $ and $Y(p_{\theta+1}^{(k)}  ) $
in such a way that for every set of $2^k$ ports
whose contexts are equivalent under $\stackrel{\sigma,1}{\leftrightarrow}$,
the corresponding  pairs of numbers 
$(n_{d\ldots}^{(\theta,k,\sfzero)} , n_{d\ldots}^{(\theta,k,\sfone)}) $ 
and
$(n_{d\ldots}^{(\theta+1,k,\sfzero)} , n_{d\ldots}^{(\theta+1,k,\sfone)}) $ 
are 
$(2^{k-1},2^{k-1})$ 
below the $a\sfzero$ node,
and one of
$(2^{k-1}+1,2^{k-1}-1) $ and $(2^{k-1}-1,2^{k-1}+1)$ 
below the $a\sfone$ node.
\newline
Consider two ports $y_{di}^{(\theta)}$ 
within $p_\theta^{(1)}$ and $y_{d'j}^{(\theta+1)}$ within
$p_{\theta+1}^{(1)}$, where $i,j \in \{0,1\}$: we have
\[
\nabla(p_\theta^{(1)} ,y_{di}^{(\theta)}) \leftrightarrow_{\sigma,1} \nabla(p_{\theta+1}^{(1)} ,y_{d'j}^{(\theta+1)}) 
\quad \text{iff} \quad
d \equiv_{\sigma,1} d'
\ \  \text{and} \ \ 
\theta-d \equiv_{\sigma,1} \theta+1-d' .
\]
If $\theta \ge 3\sigma$, then for every pair $d,i$,
the equivalence class of $\nabla(p_\theta^{(1)} ,y_{di}^{(\theta)})$   contains 
the contexts of
at least four ports of $p_\theta^{(1)}$ and four of $p_{\theta+1}^{(1)}$, 
and at least $4\sigma$ in the case where $\sigma < d < \theta-\sigma$.
In every case, for each class the numbers of ports labelled \hatzero\
and of ports labelled \hatone\ are both at least $0 = 2^0-1$.
Applying the same reasoning  to $p_\theta^{(k)}$
and $p_{\theta+1}^{(k)}$, we find that every equivalence class
that contains the context of a port of $Y( p_\theta^{(k)} )$
contains at least $2^k$ such contexts, and the numbers of
\hatzero s and \hatone s are both at least $2^{k-1}-1$.
For every threshold $\tau$ and iteration level
$n $, one can take integers $\sigma$ and $k$ such that
$\leftrightarrow_{\sigma,1}$ refines
$\approx^{n+1}_{\tau,1}$, and 
$k \ge \log_2 \sigma$,
build the multicontexts $p_\theta^{(k)}$
and $p_{\theta+1}^{(k)}$: by the
above reasoning, a labelling $\psi$
can be defined on them, such that 
$(p_\theta^{(k)},\psi) \, \crochet{\star \calJ^{\sigma,1}} \, (p_{\theta+1}^{(k)},\psi)$.
These tuples constitute a circuit $\calM^{({n+1})}$
that satisfies
$ \mathbf{RC}(\calG {\star} \calH^n_{\tau,1} )$.
Therefore, the Potthoff algebra does not belong to any
variety
${\mathbf{**}} \bicrochet{\N_{\tau,1}}$.
%




\section{Conclusion}   \label{sec:conclusion}

This research started with the intention of understanding
the mechanisms that underlie the Ehrenfeucht-Fra\"{\i}ss\'e games
used to prove that a forest algebra lies outside of
${\mathbf{**}}\bicrochet{\N_{\tau,\pi} }$.
Multicontexts and tuples, where ports have multiple labelings,
are at the center of this proof technique,
and  the need for a description of how
a forest algebra works on these objects led to the
definition of the algebras $\calG_\#$ and $\calG_\%$.
In parallel to this came the observation that
Ehrenfeucht-Fra\"{\i}ss\'e games
are recursive and uniform; while games with these two properties  always
exist on monoids, word languages and
linear orderings, it is not clear whether the same holds
in the world of trees and forests.
We could prove the existence of a recursive proof for every
algebra outside of 
${\mathbf{**}}\bicrochet{\N_{\tau,\pi} }$,
while our investigation of proof-by-pumping suggests 
has not solved 
the question of the existence of a uniform proof.
\newline
Given an algebra $\calG = (G,W)$, with  $\sigma$ and $\rho$
the threshold and period of $\calG_\%$, Theorem \ref{thm:RC}\
can be combined with Propositions \ref{prp:firstSD}\ and \ref{prp:Jsigma}\ 
into this statement:
\begin{equation}  \label{eqn:IN}
\text{if}\
\calG \in {\mathbf{**}}\bicrochet{\N_{\tau,\pi} }, \
\text{then}\   
\MVM W_\% \in \mathbf{Sol}_{\tau,\pi} \
\text{and}\   
\mathbf{NRC}(\calG  {\star}   \calJ^{\sigma,\rho})
\end{equation}
or conversely,
\begin{equation}  \label{eqn:OUT}
\text{if}\ \
\MVM W_\% \not\in \mathbf{Sol}_{\tau,\pi} \
\text{or}\   
\mathbf{RC}(\calG {\star}  \calJ^{\sigma,\rho}), \
\text{then}\   
\calG \not\in {\mathbf{**}}\bicrochet{\N_{\tau,\pi} } .
\end{equation}
All examples of algebras outside of $ {\mathbf{**}}\bicrochet{\N_{\tau,\pi} }$
known to the author satisfy the left hand side of 
(\ref{eqn:OUT}), so that it makes sense to ask
whether 
these implications are actually equivalences.
If they are, then since
satisfaction of
$\mathbf{RC}(\calG  {\star}   \calJ^{\sigma,\rho})$
is a recursively enumerable problem,
membership of $\calG $ in 
${\mathbf{**}}\bicrochet{\N_{\tau,\pi} }$
is decidable, 
by running in parallel
a search of an $n \ge 1$ such that $\calG \prec \calH^n_{\tau,\pi}$,
and a test for non-membership that consists in verifying the left hand side
of (\ref{eqn:OUT}),
knowing that one of them will stop.
Proving that
$\mathbf{RC}(\calG  {\star}   \calJ^{\sigma,\rho})$
is by itself decidable would of course yield
a procedure more elegant than this.
\newline
Satisfaction of $\mathbf{RC}(\calG {\star}  \calJ^{\sigma,\rho})$
implies the existence of a recursive proof for
${\mathbf{**}}\bicrochet{\N_{\tau,\pi} }$;
whether it also implies the 
existence of a uniform one
is an open question. 
If there exist algebras
outside of ${\mathbf{**}}\bicrochet{\N_{\tau,\pi} }$
that have only non-uniform proofs,
then one
would ask whether this
property is decidable.
\newline
It is also worth noting that
conversely, Theorem \ref{thm:RC}\ suggests 
that if  membership in
${\mathbf{**}}\bicrochet{\N_{\tau,\pi} }$
is undecidable, then this
might be proved by showing that
the existence of a recursive non-membership proof 
is undecidable.
\newline
Condition $\mathbf{RC}(\calG  {\star}  \calH^n _{\tau,\pi})$ gives a  new viewpoint on the combinatorics of
first-order languages; it involves counting under a threshold and a period,
just as in the one developed in
\cite{hath87,mora03},
but  in a 
nonlocal (and messy) way along an antichain of ports.
Whether they can be deduced from one another remains to be seen.
\vspace{0.2in}
\newline
The author  thanks  Howard Straubing, both  for  discussions 
about this research and for giving him access to unpublished material. 
Discussions with Andreas Krebs and Charles Paperman were also useful.

%

\small

\end{document}